\colorlet{myblue}{blue!80!green}
\colorlet{mybluelight}{myblue!50}
\tikzset{
  > = latex',
  axis/.style    = {very thick},
  aborder/.style = {draw},
  acomp/.style   = {fill=black, fill opacity=0.1},
  rect/.style    = {very thick},
  form/.style    = {font=\scriptsize},
  sm/.style      = {font=\small},
  vsm/.style     = {font=\scriptsize}
}
\renewcommand{\t}{\boldsymbol{t}}
\newcommand{\F}{\boldsymbol{F}}
\newcommand{\T}{\boldsymbol{T}}
\newcommand{\f}{\boldsymbol{f}}
\newcommand{\1}{\boldsymbol{1}}
\newcommand{\E}{\mathbb{E}}
\newcommand{\Eis}{\mathbb{E}_{\t_{-i},s}}
\newcommand{\R}{\mathbb{R}}
\newtheorem{theorem}{Theorem}
\newtheorem*{theorem*}{Theorem}
\newtheorem{lemma}{Lemma}
\newtheorem{definition}{Definition}
\newtheorem{remark}{Remark}
\newtheorem{proposition}{Proposition}
\newtheorem{corollary}{Corollary}
\newtheorem{example}{Example}
\newcommand{\eg}{e.\,g.}
\begin{document}

\title{Optimal allocations with capacity constrained verification\thanks{We thank Maria Goltsman, Thomas Brzustowski, Axel Niemeyer, and Rakesh Vohra for comments and discussions. Erlanson: Department of Economics at University of Essex,  \texttt{albin.erlanson@essex.ac.uk}; Kleiner: Department of Economics, University of Bonn, \texttt{andreas.kleiner@uni-bonn.de}. Kleiner 
acknowledges financial support
from the German Research Foundation (DFG) through Germany’s Excellence Strategy - EXC 2047/1
- 390685813, EXC 2126/1-390838866 and the CRC TR-224 (Project B01).}}
\author{Albin Erlanson \and Andreas Kleiner}
\date{\today\\}

\maketitle

\begin{abstract}
A principal has $m$ identical objects to allocate among a group of $n$ agents. Objects are desirable and the principal's value of assigning an object to an agent is the agent's private information. The principal can verify up to $k$ agents, where  $k<m$, thereby perfectly learning the types of those verified. We find the mechanism that maximizes the principal's expected utility when no monetary transfers are available. In this mechanism, an agent receives an object if (i) his type is above a cutoff and among the $m$ highest types, (ii) his type is above some lower cutoff but among the $k$ highest types, or (iii) he receives an object in a lottery that allocates the remaining objects randomly. \\


\textit{Keywords}:  Mechanism Design with Evidence; Allocations;  Verification
\\

\textit{JEL classification}: D82, D71
\end{abstract}

\newpage

\section{Introduction}\label{sec:intro}


We examine an allocation problem involving multiple identical objects. Agents possess private information about the principal's value of assigning them an object and each agent desires one object but has no use for more than one. There are no monetary transfers and agents' private information is based on verifiable evidence. The principal can check this evidence, but is limited to a fixed number of checks---fewer than the number of available objects. We study how to optimally check agents based on their reports and use the results, along with agents' reports, to allocate objects. 
This problem differs from standard mechanism design due to the verifiability of private information, which 
arises in various economic contexts.

A possible application of our model is the allocation of work permits to skilled immigrants. The value of assigning a permit to an applicant is a function of her characteristics (e.g., skills, education, work experience, language ability). Therefore, we are in a setting with private but verifiable information.\footnote{Canada has a program for attracting skilled immigrants that resembles the model we build. The score an applicant gets is based on ``\textit{factors known to contribute to economic success}'', for details see:  https://www.canada.ca/en/immigration-refugees-citizenship/services/immigrate-canada/express-entry/works.html.} The objective is to maximize the total value from assigning the permits. 
As is the case in the example with work permits in Canada, we do not allow for the possibility of selling work permits. We analyze the optimal method to allocate $m$ permits when at most $k<m$ agents can be checked. 

To do this, we formulate a model of capacity-constrained verification in an allocation problem without transfers. This model captures the situation when there is a fixed monetary budget for checking or a given team of experts available for reviewing reports. Before deciding on who receives an object, the principal can check some of the agents and learn their information. Strictly fewer agents can be checked than there are objects available to allocate. The mechanism design question is whom to check based on the submitted reports and how to allocate objects based on the reports and checks. If agent $ i$ is checked, the principal learns his type $t_i$ perfectly. The principal gets value $t_i$ if agent $i$ is assigned an object, and the total payoff to the principal is equal to the sum of the types of all agents that get an object. There is no additional cost of checking an agent, the only constraint is that at most $k$ agents can be checked.

\medskip

We find the mechanism that maximizes the expected payoff from allocating $m$ objects with $k$ checks. 
The allocation rule in the optimal mechanism can be decomposed into two rules, which can be thought of as being used in two stages: a merit-based rule in the first stage,  and a lottery-based rule in the second stage. The merit-based rule allocates a number of objects ``efficiently'' to the agents with the highest reported types, where the number of objects that are allocated efficiently is typically determined by two thresholds as follows: If less than $k$ agents report a type above the lower threshold, then all these agents obtain an object; the remaining objects enter the second stage. If more than $k$ agents report a type above the lower threshold, such an agent gets an object if either (i) her report is among the $k$ highest ones, or (ii) her report is among the $m$ highest ones and her report is above the higher threshold. Among agents that get an object in the first stage, $k$ are randomly selected and their reports are checked; anyone found lying will not get an object in any case. Any agents that didn't get an object in the first stage (and weren't caught lying) enter a second stage, where a lottery-based rule allocates the remaining objects randomly in a way that all types below the lower threshold receive an object with the same probability.

To explain the optimal mechanism in more detail, we discuss the incentives of the agents to be truthful.
Because the merit-based rule sometimes allocates more than $k$ objects, not all agents that get an object in the first stage can be checked. This suggests that agents might be tempted to claim that they have a high type. The optimal mechanism provides incentives to be truthful by not always allocating all objects in the first stage; the lottery-based allocation of the remaining objects in the second stage ensures that even agents with low types have a chance to win and are not tempted to lie about their type: their expected probability of getting an object in the lottery equals the expected probability of getting an object in the first stage by lying. However, if a low type knew that more than $m$ agents have a type above the higher type, she would anticipate that all objects will be allocated in the first stage and she won't get an object in the second stage by being truthful. In that case, she would have a strict incentive to claim a high type, hoping to get an object in the first stage without being checked. This shows that the optimal mechanism, while being Bayesian incentive compatible, does not satisfy stronger ex-post incentive constraints. This contrasts with models of costly verification, where optimal mechanisms can typically be implemented to satisfy these stronger incentive constraints.

Methodologically, we formulate the principal's problem using ``interim rules.'' Since there is no point in checking an agent that won't get an object, we obtain type-dependent feasibility constraints concerning which agents can be checked. This requires a characterization of which interim rules are feasible in the presence of type-dependent capacity constraints.
  
\bigskip
\subsection{Related Literature}\label{subsec:rel_lit}
The literature on costly state verification in principal-agent models was initiated by  \citeasnoun{townsend79}.  Our model differs from his, and the literature building on it \citeaffixed{galeHellwig85,borderSobel87}{see \eg}, since monetary transfers are not feasible in our model.  In models without transfers, incentive constraints and economic trade-offs are different than in models with money.

Over the last years, there has been growing interest in models with state verification that do not allow for transfers,  as initiated by \citeasnoun{ben-porath14}.  The model studied by \citename{ben-porath14} consists of a principal that wishes to allocate a single indivisible good among a group of agents, and each agent's type can be learned at a given cost. They find that the expected utility-maximizing mechanism for the principal is a favored-agent mechanism, where a pre-determined favored agent receives the object unless another agent claims a value above a threshold, in which case the agent with the highest (net) type gets the object. 
Our model is similar in that there are no monetary transfers but differs in that our principal allocates several objects and can check a fixed number of agents at no cost. Compared to a model with a possibly unlimited number of tests at a constant marginal cost, this leads to a richer set of trade-offs that determine optimal mechanisms. 

\citeasnoun{ben-porath17} analyze optimal mechanisms in a model of Dye evidence and apply their results to a model in which the principal allocates $m$ objects and can verify agents at a constant marginal cost. They show that the optimal mechanism generalizes the favored-agent mechanism in a natural way to an $m$-favored-agents mechanism. Their approach implies that the optimal mechanism satisfies stronger ex-post incentive constraints, which shows a conceptual difference to our model, where the optimal Bayesian incentive compatible mechanism is not ex-post incentive compatible. This also suggests that the optimal mechanism in our model cannot be obtained using their general results for the Dye-evidence model.
Other related papers on mechanism design with evidence include \citeasnoun{glazerRubinstein04} and \citeasnoun{glazerRubinstein06}, which consider a situation in which an agent has private information about several characteristics and tries to persuade a principal to take a given action, and the principal can only check one of the agent's characteristics.\footnote{For additional papers on mechanism design with evidence, see also \citeasnoun{greenLaffont86}, \citeasnoun{bullWatson07}, \citeasnoun{deneckereSeverinov08}, \citeasnoun{benporathLipman12}, and \citeasnoun{schweighofer2024}.} 
Other papers on optimal mechanisms in settings with evidence include \citeasnoun{mylovanov17}, \citeasnoun{halac2020}, and \citeasnoun{Kattwinkel2023}.

Methodologically, we use interim rules to find the optimal mechanism. This approach was pioneered by~\citeasnoun{maskinRiley84}, who used it to characterize optimal auctions with risk-averse bidders. This methodology has later been used repeatedly in auction theory and mechanism design successfully. A key question asks for which interim rules there is a feasible ex-post rule that induces the interim rule as its marginal distribution. This question was first answered conclusively by~\citeasnoun{border91}, who characterized the set of feasible interim allocation rules.
More recently, Border's result was generalized by~\citeasnoun{che13}, who among other things, allow for the allocation of $m$ objects when agents face capacity constraints. We cannot use their result directly since we need to incorporate type-dependent capacity constraints as part of the feasibility requirements on interim rules. However, we show that one can use the same proof technique as in \citeasnoun{che13} to obtain a characterization of feasible interim rules with type-dependent ex-post constraints. For finite type spaces, a closely related result was established independently in \citeasnoun{valenzuela2022greedy}.

\bigskip

The rest of the paper is organized as follows. In Section 2, we present the model and the principal's objective, and we also characterize the class of incentive-compatible mechanisms. In Section 3, we derive the optimal mechanism, and we discuss how to optimize within the class of optimal mechanisms.  Section 4  contains a proof sketch, and Section 5 concludes the paper.

\section{Model and preliminaries}\label{sec:Model}

A principal allocates $m$ identical and indivisible objects to a set $N:=\{1,\dots,n\}$ of agents, where $m<n$. Each agent $i$ is characterized by a type $t\in T:=[0,1]$, which is private information to agent $i$. Types are drawn independently according to a strictly positive density $f$ with cdf $F$ for each agent $i$.  Let $\t = (t_i)_{i\in N}$, $\t_{-i} = (t_j)_{j\in N\setminus\{i\}}$, $\T= [0,1]^N$, $\F(\t)= \prod_{i\in N} F(t_i)$, and $\f(\t)= \prod_{i\in N} f(t_i)$. Agent $i$  with type $t$ gets utility $u_i(t)>0$ if he is assigned at least one object and zero otherwise. No agent needs more than one object.  The principal's payoff equals the sum of the types of all agents who obtain an object.
The principal has a verification technology at his disposal and can learn the types of up to $k<m$ agents perfectly (we will use verification and audit interchangeably). The principal can commit to arbitrary mechanisms to allocate the objects, and we consider Bayes-Nash equilibria of the induced game. 

We now define direct mechanisms for this setting. We capture stochastic mechanisms by using a random variable $s$, which is distributed independently of the types and uniformly on $[0,1]$ and whose realization is only observed by the principal.   
Formally, an \textit{allocation rule} is a profile of functions $p_i:\T\times[0,1]\rightarrow \{0,1\} $ satisfying, for all $\t$ and 
  $s$, $\sum_{i=1}^n p_i(\t,s) \le m$ and an \textit{audit rule} is a profile of functions $a_i:\T\times[0,1]\rightarrow \{0,1\}$  satisfying, for all $\t$ and  $s$, $\sum_{i=1}^n a_i(\t,s) \le k$. 
A \textit{direct mechanism} $(p,a)$ consists of an allocation rule and an audit rule.
Given a profile of reported types $\t$ and a realization $s$ of the random variable, agent $i$ is audited if $a_i(\t,s)=1$ and obtains an object if all audited agents were found to be truthful and $p_i(\t,s)=1$ and obtain no object otherwise. One can show that it is without loss of generality to restrict attention to direct mechanisms that are Bayesian incentive-compatible.\footnote{This can be shown using arguments similar to those in \citeasnoun{ben-porath14} and \citeasnoun{erlansonKleiner18}.}

\begin{definition}\label{def:BIC}
 A mechanism $(p,a)$ is Bayesian incentive compatible (BIC) if, for all $i\in N$ and all $t,t'\in T$
  \begin{equation}
     u_i(t')\cdot \E_{\t_{-i},s}[p_i(t',\t_{-i},s)] \geq 
       u_i(t') \cdot  \E_{\t_{-i},s}[p_i(t,\t_{-i},s) (1-a_i(t,\t_{-i},s))].\label{eq:BIC_def}   
  \end{equation}
 \end{definition}

The principal's problem can formally be stated as:

\begin{align*} 
	\underset{p,a}\max \ &\E_{\t,s}  \left[\sum_{i\in N} p_i(t_i, \t_{-i},s)t_i \right] \\
	& \text{s.t. $(p,a)$ is BIC.}
\end{align*}

\paragraph{Reformulating the principal's problem}
Due to the symmetry of this problem, it is without loss of optimality to restrict attention to symmetric mechanisms that treat all agents identically, and henceforth we will do that.\footnote{That is, we focus in the following on mechanisms satisfying, for all $\t$ and permutations $\sigma$, $\E_s[p_i(\t,s)]=\E_s[p_{\sigma(i)}(\t_{\sigma},s)]$, and similarly for $a$ and $pa$. }

It is useful to formulate this mechanism design problem  in terms of interim rules. Given a mechanism $(p,a)$, its  \emph{interim rules} are defined for all $t\in T$ by
\begin{align*}
    P(t)&:= \E_{\t_{-i},s}[p_i(t,\t_{-i},s)] \\
    A(t)&:= \E_{\t_{-i},s}[a_i(t,\t_{-i},s)].
\end{align*}
We say that $p$ \emph{induces} $P$ and $a$ \emph{induces} $A$. 

We begin by simplifying the Bayesian incentive compatibility constraints.
First, since $u_i(t')>0 $ for all $i\in N$ and all $t'\in T$, $u_i(t')$ cancels from both sides of equation~\eqref{eq:BIC_def} above. Thus, a mechanism is BIC if and only if, for all $t,t'\in T$,
\begin{align}\label{eq:BIC}
\E_{\t_{-i},s}[p_i(t',\t_{-i},s)]\ge \E_{\t_{-i},s}[p_i(t,\t_{-i},s) (1-a_i(t,\t_{-i},s))]
\end{align}

We want to formulate this inequality only in terms of interim rules but cannot do so directly (for example,
$\E_{\t_{-i},s}[p_i(t,\t_{-i},s)a_i(t,\t_{-i},s)]\neq P(t)A(t)$ in general).
However, note that once the random variable $s$ is drawn either $p_i(\t,s)$ is one or zero. If $p_i(t,\t_{-i},s)=0$ then there is no need to audit agent $i$ at $(t,\t_{-i},s)$ because he cannot be penalized if he lied. Hence to achieve truth-telling only agents that are scheduled to get an object need to be audited. That is,  it is without loss of generality to assume that $a_i(\t,s)=1$ only if $p_i(\t,s)=1$, and we will from now on restrict attention to mechanisms satisfying this property. Using this observation, 
$\E_{\t_{-i},s}[p_i(t,\t_{-i},s) a_i(t,\t_{-i},s)]=\E_{\t_{-i},s}[ a_i(t,\t_{-i},s)]$ and the right-hand side of \eqref{eq:BIC} simplifies to $P(t)-A(t)$.

These observations imply the following characterization of Bayesian incentive compatibility.

\begin{lemma}\label{lemma:bic_simple}
A symmetric mechanism $(p,a)$ is BIC if and only if its interim rules $(P,A)$ satisfy, for all $t\in T$,
\begin{align}\label{eq:bic2}
   A(t) \geq   P(t) - \inf_{t'\in T} P(t').
\end{align}
\end{lemma}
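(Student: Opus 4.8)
The plan is to read the result off directly from the two simplifications already carried out in the text immediately preceding the lemma. First I would recall that, after cancelling the strictly positive factor $u_i(t')$, a symmetric mechanism is BIC if and only if inequality~\eqref{eq:BIC} holds for all $t,t'\in T$; and that, under the standing restriction $a_i(\t,s)=1\Rightarrow p_i(\t,s)=1$, one has $\E_{\t_{-i},s}[p_i(t,\t_{-i},s)a_i(t,\t_{-i},s)]=\E_{\t_{-i},s}[a_i(t,\t_{-i},s)]=A(t)$, so the right-hand side of~\eqref{eq:BIC} equals $P(t)-A(t)$. Since for a symmetric mechanism the left-hand side of~\eqref{eq:BIC} is $\E_{\t_{-i},s}[p_i(t',\t_{-i},s)]=P(t')$ independently of $i$, BIC is equivalent to
\begin{align*}
P(t')\ \geq\ P(t)-A(t)\qquad\text{for all }t,t'\in T.
\end{align*}

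The second step is to eliminate the quantifier over $t'$. Fixing $t$, the right-hand side $P(t)-A(t)$ does not depend on $t'$, so the displayed inequality holds for every $t'\in T$ if and only if $\inf_{t'\in T}P(t')\geq P(t)-A(t)$: taking the infimum over $t'$ on the left gives one direction, and $P(t')\geq \inf_{t''\in T}P(t'')$ for each $t'$ gives the other. Rearranging this single inequality yields $A(t)\geq P(t)-\inf_{t'\in T}P(t')$, and since $t$ was arbitrary this is precisely~\eqref{eq:bic2}; running the chain of equivalences backwards gives the converse.

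I do not expect any genuine obstacle: all the economic content---that $u_i(t')$ is irrelevant, and that it is pointless to audit an agent who will not receive an object---has already been extracted before the statement of the lemma, and what remains is the purely order-theoretic observation that a family of lower bounds $\{P(t')\}_{t'\in T}$ dominates a fixed quantity if and only if their infimum does. The only thing to be careful about is to keep the restriction $a_i(\t,s)=1\Rightarrow p_i(\t,s)=1$ in force, since it is exactly what collapses $\E_{\t_{-i},s}[p_i\,a_i]$ to $A(t)$; as this is adopted as a standing assumption just above the lemma, it may be invoked without further comment.
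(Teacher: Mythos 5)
Your proposal is correct and follows essentially the same route as the paper's proof: cancel $u_i$, use the standing restriction $a_i\le p_i$ to reduce the right-hand side of the incentive constraint to $P(t)-A(t)$, and then eliminate the quantifier over the truthful type by passing to $\inf_{t'}P(t')$. The only difference is that you spell out both directions of the infimum step, which the paper states more tersely.
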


To formulate the principal's problem with interim rules, we introduce the following definition.

\begin{definition}
An audit rule $a$ is \emph{feasible given an allocation rule $p$} if $a_i(\t,s)\le p_i(\t,s)$ for all $i\in N$, $\t\in \T$, and $s\in [0,1]$. 
A pair of interim rules $(P,A)$ is \emph{feasible} if there exists an allocation rule $p$ and an audit rule $a$ which is feasible given $p$ that induce $(P,A)$.
\end{definition}

We can now reformulate the principal's problem as follows:
\begin{align*} \label{P_int}
	\underset{P,A} \max \ & n \E_t[P(t)t] \tag{Opt}\\
	& \text{s.t. $(P,A)$ is feasible and }\\
	& A(t) \ge   P(t) - \inf_{t'\in T} P(t') \text{ for all } t\in T. 
\end{align*}

\section{Optimal mechanism}\label{sec:opt_mech}

In this section we define the class of merit-with-guarantee rules and show that an optimal mechanism can always be found in this class.
\subsection{Finding the optimal interim rules}\label{subsec:opt_reduced_form}

To build some intuition for the optimal mechanism, we start by discussing possible allocation rules that the principal could (or couldn't) use. 
The first best allocation rule allocates objects to the agents with the $m$ highest types. However, if only $k<m$ agents can be verified, this rule cannot be part of an incentive-compatible mechanism: the lowest possible type would receive an object with probability zero by being truthful, but would receive one with positive probability by reporting some higher type since not every agent getting an object can be verified. One possibility to create an incentive-compatible mechanism is to reduce the number of objects that are allocated ``efficiently''; for example, the principal could verify agents sending the $k$ highest reports and allocate an object to the agent if he was truthful. The remaining $m-k$ objects could be allocated randomly among the agents that did not send one of the $k$ highest reports. Note that the incentive constraints are slack in this mechanism: any misreport can only change the allocation if the misreport is among the $k$ highest reports. However, in that case, the report is verified for sure and by misreporting there is a zero probability of obtaining an object, whereas the probability of receiving an object would be strictly positive by being truthful. 

The fact that incentive constraints are slack implies that the principal can improve this mechanism. One possibility to do so is to (sometimes) allocate $k+1$ or more objects efficiently while allocating the remaining ones randomly. Since incentive constraints were previously slack, they will still be satisfied if the expected number of objects that are allocated efficiently is not too large.  In fact, the mechanism can further be improved by introducing a cutoff and allocating to the agents with the $m$ highest reports as long as these reports are above the cutoff; $k$ of these reports are verified, and all remaining objects are allocated randomly among agents with reports below the cutoff. This mechanism is incentive-compatible as long as the cutoff is high enough and we will see that the optimal mechanism shares similar features. We will show that this mechanism can be further improved by introducing a second, lower cutoff: An agent then gets an object if he reports above the high cutoff and his report is among the $m$ highest ones, or if his report lies between the two cutoffs and his report is among the $k$ highest ones. The principal verifies $k$ of those high reports, and all remaining objects are allocated randomly. 

\medskip 

To describe the optimal interim rules, we start by deriving constraints that any feasible mechanism must satisfy. These constraints (in \eqref{eq:c_allo}, \eqref{eq:c_ic}, and \eqref{eq:c_aud}) are due to the limited availability of objects and audits \citeaffixed{border91}{for related conditions, see}. We then construct an interim allocation rule that satisfies the relevant constraints as equalities and show that there is an optimal mechanism with such an interim allocation rule.

A feasibility constraint on $P$, which stems from at most $m$ objects being available, is that for any type $t$,
\begin{equation}\label{eq:c_allo}
 n\int_t^1 P(x)\,\mathrm dF(x)  \le c^{allo}(F(t)) 
\end{equation}
where
\begin{equation}
 c^{allo}(q) := \sum_{i=1}^n\min\{i,m\}\binom{n}{i}(1-q)^{i}q^{n-i}. \nonumber
\end{equation}
To see this, note that the left-hand side denotes the expected number of agents with a type above $t$ that are assigned an object. For any $i\in\{1,\dots ,n\}$, the probability that there are $i$ agents with a type above $t$ is $\binom{n}{i}(1-F(t))^{i}F(t)^{n-i}$, and at most $\min\{i,m\}$ of these agents can be assigned an object. By summing over $i$, we obtain (on the right-hand side) an upper bound on the expected number of agents with a type above $t$ that can be assigned an object, and the inequality follows. 

We have seen in Lemma \autoref{lemma:bic_simple} that whether a mechanism is BIC depends in particular on the lowest probability with which some type obtains an object; for an interim allocation rule $P$, let therefore $\varphi:=\inf_{t'} P(t')$.
We then obtain that for any type $t$,
\begin{equation}\label{eq:c_ic}
 n\int_t^1 P(x)\,\mathrm dF(x)  \le c_{\varphi}^{ic}(F(t)),
\end{equation}
where
\[ c_{\varphi}^{ic}(q):= m - n q \varphi. \]
Similarly to inequality \eqref{eq:c_allo}, the left-hand side in \eqref{eq:c_ic}   denotes the expected number of agents with a type above $t$ receiving an object.
Because there are $m$ objects and since in expectation $nF(t)$ agents have a type below $t$ and each of them receives an object with a probability of at least $\varphi$, the inequality follows.

Since at most $k$ audits are available, we obtain in analogy with inequality \eqref{eq:c_allo} the following feasibility constraint on $A$ that must hold for all types $t$: 
\[n\int_t^1 A(x) \,\mathrm dF(x)  \le \sum_{i=1}^n\min\{i,k\}\binom{n}{i}(1-F(t))^{i}F(t)^{n-i}.\]
 Lemma \ref{lemma:bic_simple} implies that if $(P,A)$ is BIC then $A(t) \ge P(t)-\inf_{t'} P(t')$. Plugging this into the inequality above and setting $\varphi:=\inf_{t'} P(t')$, we obtain that for any type $t$,
\begin{equation}\label{eq:c_aud}
 n\int_t^1 P(x) \,\mathrm dF(x)  \le c_{\varphi}^{aud}(F(t)),
\end{equation} 
where 
\[ c_{\varphi}^{aud}(q) :=\sum_{i=1}^n\min\{i,k\}\binom{n}{i}(1-q)^{i}q^{n-i}+ n (1-q) \varphi. \] 
Finally, we combine the constraints in inequalities \eqref{eq:c_allo},  \eqref{eq:c_ic}, and \eqref{eq:c_aud}  to obtain
\begin{align}\label{eq:upper_bound}
n \int_{t}^1 P(x) \,\mathrm dF(x) \le c_{\varphi}(F(t)):=\min\{c^{allo}(F(t)),c_{\varphi}^{aud}(F(t)),c_{\varphi}^{ic}(F(t))\}.    
\end{align}
One can view inequality \eqref{eq:upper_bound} as being parametrized by $\varphi$; any feasible and BIC pair of interim rules $(P,A)$ with $\inf_{q'} P(q')= \varphi$ must satisfy this parametrized inequality \eqref{eq:upper_bound}.
We define $C^{allo}$ to be the set of types for which $c^{allo}_{\varphi}$ is the binding constraint,
\[
C^{allo}:=\{t\in T: c^{allo}(F(t))\le  c_{\varphi}^{aud}(F(t))\text{ and } c^{allo}(F(t))\le c_{\varphi}^{ic}(F(t))\},
\]
and $C^{aud}$ and $C^{ic}$ are defined  analogously.\footnote{Since $c^{aud}_{\varphi}$ is concave, $c^{ic}_{\varphi}$ is linear, and $c^{aud}_{\varphi}(0)\ge c^{ic}_{\varphi}(0)$ whenever $\varphi\ge \frac{m-k}{n}$, $C^{ic}$ is an interval starting at $0$. It can also be shown that $C^{aud}$ is always an interval and that $C^{allo}$ consists of at most two intervals with an endpoint of 1 (see Lemma \ref{lemma:intervals_new} for the complete description).} 

Figure 1 illustrates by plotting the right-hand sides of \eqref{eq:c_allo}, \eqref{eq:c_ic}, and \eqref{eq:c_aud}. The black curve is the lower envelope and corresponds to the right-hand side of \eqref{eq:upper_bound}.

\begin{figure}[ht!]
\includegraphics[width=\linewidth]{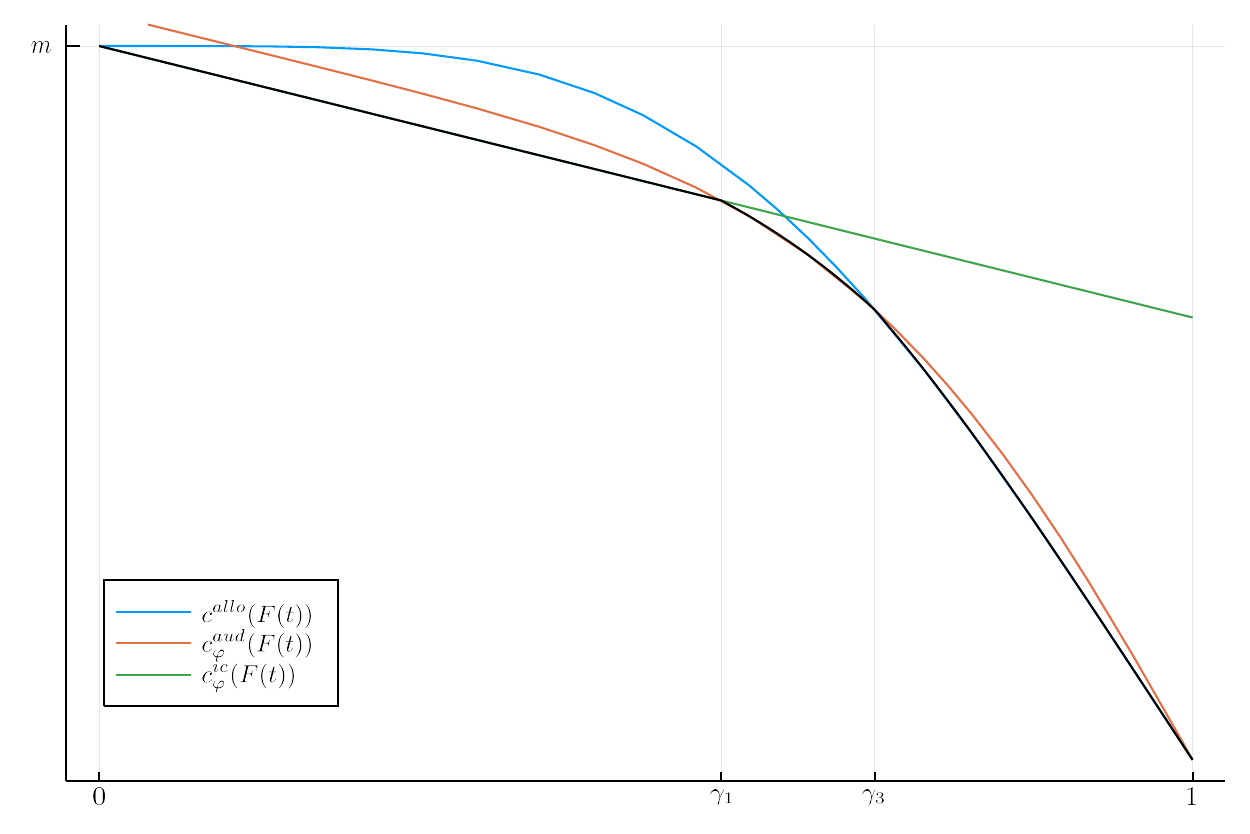}
\caption{Necessary constraints}
\label{fig:const}
\end{figure}

For any $0\le \varphi\le \frac{m}{n}$, we define a pair $(P,A)$ of interim rules that satisfies \eqref{eq:upper_bound} as an equality and therefore, in a suitable sense, allocates as much as possible to high types.

\begin{definition}\label{reduced:opt_mech}
Given  $0\le \varphi\leq \frac{m}{n}$, the \emph{merit-with-guarantee rule} is given by the interim  allocation rule\footnote{At points where $c$ is not differentiable, we use the right-derivative.} 
\begin{equation}\label{eq:reduced_opt_allocation}
 P(t):=-\frac{1}{n} c_{\varphi}'(F(t)),
\end{equation}
and the interim audit rule
\begin{equation}\label{eq:reduced:verif_opt_audit}
A(t):= P(t)-\varphi.
\end{equation} 
\end{definition}

We will show below that this interim allocation rule can be implemented as follows: An agent reporting a type in $C^{allo}$ obtains an object if and only if his report is among the $m$ highest ones (presuming he is not caught lying), which corresponds to the efficient allocation rule. An agent reporting a type in $C^{aud}$ obtains an object if his report is among the $k$ highest ones (again, presuming he is not caught lying). All remaining agents and objects enter a lottery, which ensures that each type has an interim probability of getting an object that is at least $\varphi$. We will see that by selecting $k$ of the high reports for an audit, such an allocation rule is part of an incentive-compatible mechanism.

We can now state our main result. 

\begin{theorem}\label{thm:opt}
A merit-with-guarantee rule maximizes the principal's expected utility.
\end{theorem}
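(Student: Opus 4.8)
The plan is to combine an upper bound on the principal's value that holds for \emph{any} feasible and BIC mechanism with a matching construction showing the bound is attained by a merit-with-guarantee rule. The upper bound part is essentially already in hand: inequality~\eqref{eq:upper_bound} says that for every $\varphi\in[0,\frac{m}{n}]$, any feasible BIC pair $(P,A)$ with $\inf_{t'}P(t')=\varphi$ satisfies $n\int_t^1 P(x)\,\mathrm dF(x)\le c_\varphi(F(t))$ for all $t$. Rewriting the objective via integration by parts, $n\E_t[P(t)t]=n\int_0^1 t\,P(t)\,\mathrm dF(t)=\int_0^1\Big(n\int_t^1 P(x)\,\mathrm dF(x)\Big)\,\mathrm dt + (\text{a boundary term involving }nP\text{ evaluated at }0)$; more carefully, writing $G(t):=n\int_t^1 P(x)\,\mathrm dF(x)$ one has $-G'(t)=nP(t)f(t)$, so $n\int_0^1 tP(t)f(t)\,\mathrm dt = -\int_0^1 t\,G'(t)\,\mathrm dt = \int_0^1 G(t)\,\mathrm dt$ since $G(1)=0$ and the $t=0$ term vanishes. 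Hence the objective equals $\int_0^1 G(t)\,\mathrm dt$, which is pointwise bounded above by $\int_0^1 c_\varphi(F(t))\,\mathrm dt$. Since the merit-with-guarantee rule for a given $\varphi$ is defined precisely so that $G(t)=c_\varphi(F(t))$ for all $t$ (this is what \eqref{eq:reduced_opt_allocation} encodes, after integrating), it attains this bound. It then remains to optimize over $\varphi\in[0,\frac{m}{n}]$ — a one-dimensional problem — and to check that the resulting $(P,A)$ is genuinely feasible and BIC.

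\textbf{Key steps in order.} First, I would establish the integration-by-parts identity above to reduce the objective to $\int_0^1 G(t)\,\mathrm dt$, being careful about the boundary and about $P$ only being right-differentiable where $c_\varphi$ has kinks. Second, I would verify that for any candidate $\varphi$, the function $P$ defined by $P(t)=-\frac1n c_\varphi'(F(t))$ is a legitimate interim allocation rule: it is nonnegative (each of $c^{allo}$, $c^{aud}_\varphi$, $c^{ic}_\varphi$ is nonincreasing after scaling appropriately — actually their right-hand sides are increasing in $F(t)^{\text{small}}$... need $-c_\varphi'\ge 0$, i.e., $c_\varphi$ nonincreasing in $q$), it is bounded by $1$, it satisfies $\inf_{t'}P(t')=\varphi$, and crucially it is induced by some ex-post allocation rule together with an ex-post audit rule $a\le p$ inducing $A=P-\varphi$. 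This last point is where I would invoke the characterization of feasible interim rules with type-dependent capacity constraints that the paper advertises (the analogue of Border/Che-type conditions) — the constraints \eqref{eq:c_allo}, \eqref{eq:c_aud}, \eqref{eq:c_ic} being the relevant majorization-type inequalities, satisfied here with equality on the binding region. Third, I would check BIC directly from Lemma~\ref{lemma:bic_simple}: by construction $A(t)=P(t)-\varphi=P(t)-\inf_{t'}P(t')$, so \eqref{eq:bic2} holds with equality. Fourth, I would maximize $\int_0^1 c_\varphi(F(t))\,\mathrm dt$ over $\varphi\in[0,\frac{m}{n}]$; since $c_\varphi$ depends on $\varphi$ in a simple (piecewise affine/concave) way, the derivative in $\varphi$ can be computed and the optimizer characterized, and any maximizing $\varphi^*$ gives an optimal merit-with-guarantee rule.

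\textbf{Main obstacle.} The hard part will be the feasibility step — showing that the interim pair $(P,A)$ from Definition~\ref{reduced:opt_mech} is actually induced by an honest ex-post mechanism with $a\le p$ and with $a$ respecting the $k$-audit cap, given that audits may only be assigned to agents slated to receive an object. This requires the type-dependent generalization of Border's theorem (the ``characterization of which interim rules are feasible in the presence of type-dependent capacity constraints'' mentioned in the introduction), and then verifying that the specific $c_\varphi$-equality rule satisfies those conditions. I expect this to reduce to checking that, on each of the intervals $C^{allo}$, $C^{aud}$, $C^{ic}$, the proposed rule coincides with an explicitly implementable rule — efficient allocation among the top $m$ reports on $C^{allo}$, allocation among the top $k$ reports on $C^{aud}$, and a uniform lottery on $C^{ic}$ — and that these pieces splice together consistently; the combinatorial identities relating $c^{allo}$, $c^{aud}_\varphi$ to the expected counts of ``top-$m$'' and ``top-$k$'' agents do the bookkeeping. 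The other, more routine obstacle is handling the non-differentiability of $c_\varphi$ at the boundaries between regions and confirming the resulting $P$ is monotone where it needs to be and that the $\varphi$-optimization is well-posed.
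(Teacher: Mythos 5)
Your plan follows essentially the same route as the paper: the integration-by-parts reduction of the objective to $\int_0^1 c_\varphi(F(t))\,\mathrm dt$, attainment by the rule with $n\int_t^1 P\,\mathrm dF = c_\varphi(F(t))$, BIC via $A=P-\varphi$, feasibility via a type-dependent Border/Che-style characterization applied to the merit-plus-lottery decomposition, and a final one-dimensional optimization over $\varphi$. One small caution: the second-stage lottery cannot literally be uniform over remaining eligible agents (higher types in $C^{aud}$ are less likely to remain eligible), so its existence must be extracted from the Border-type inequalities rather than written down explicitly — but you correctly flag this feasibility step as the main obstacle.
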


We provide a proof sketch in Section \ref{sec:proof_red_optimal} and relegate all technical details to the appendix.
To obtain intuition for the result, we argue below that no feasible mechanism can allocate more objects to high types than the allocation rule of some merit-with-guarantee rule. The main argument in the proof of Theorem \ref{thm:opt} is then to establish that there indeed exists a feasible BIC mechanism $(p,a)$ corresponding to a merit-with-guarantee rule.

\medskip

\begin{figure}[ht!]
\includegraphics[width=\linewidth]{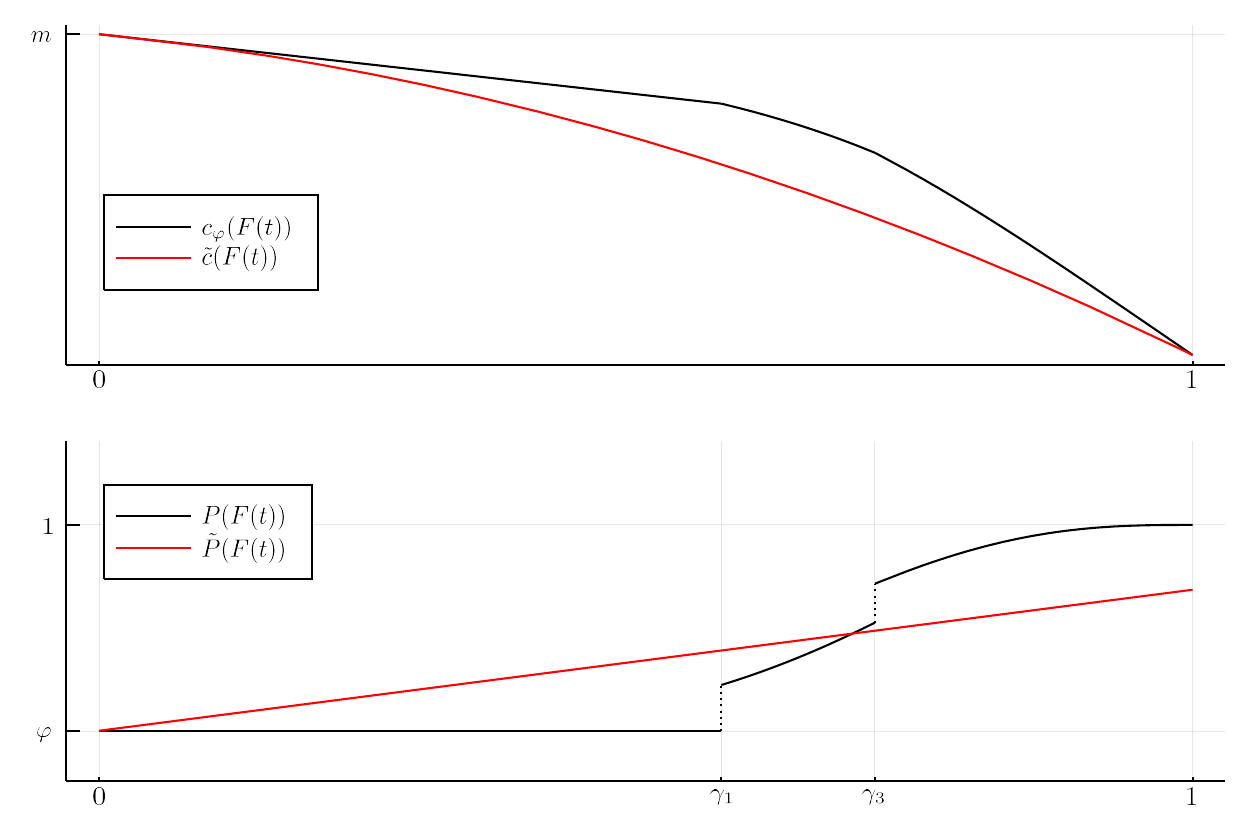}
\caption{Illustrating the optimal mechanism }
\label{fig:opt_mech}
\end{figure}

\medskip

To obtain insights into why a merit-with-guarantee rule is optimal for the principal, consider an alternative feasible mechanism with interim allocation rule $\tilde P$ that satisfies $\inf \tilde P\ge \varphi$. Then the probability that an agent with type above $t$ obtains an object under $\tilde P$ is at most the right-hand side of \eqref{eq:upper_bound}. In the upper panel of Figure \ref{fig:opt_mech}, the lower curve, denoted by $\tilde{c}$, shows the left-hand side of \eqref{eq:upper_bound} for the alternative interim allocation rule $\tilde P$. Note that the derivative of this curve (multiplied at each $t$ by $-\frac{1}{nf(t)}$) is the interim probability with which a given agent with type $t$ receives an object. This interim rule is shown in the lower panel alongside the interim allocation rule for the merit-with-guarantee rule. It follows that the interim allocation rule of any feasible mechanism satisfying $\inf \tilde P\ge \varphi$ will be less likely to allocate to a high type than the interim allocation rule of a merit-with-guarantee rule. This implies that a merit-with-guarantee rule is optimal if one can always construct allocation and audit rules implementing such rules.

\medskip
The class of merit-with-guarantee rules is parameterized by the one-dimensional parameter $\varphi\in[0,\frac{m}{n}]$, which determines the probability with which the lowest type obtains an object. Which values of $\varphi$ can be optimal? First, since there are only $m$ objects, $\varphi$ can be at most $\frac{m}{n}$. Second, for all values $\varphi< \frac{m-k}{n}$ the constraint  $c^{ic}_{\varphi}$ lies  always above $c^{aud}_{\varphi}$. This can be  seen in  Figure \ref{fig:const} by noting that for any  $\varphi<\frac{m-k}{n}$ we have that $c^{ic}_{\varphi}(0)> c^{aud}_{\varphi}(0)$, and that $c^{aud}_{\varphi}(\cdot)$ decreases faster than $c^{ic}_{\varphi}(\cdot)$. Note further that $\frac{\partial}{\partial \varphi}c^{aud}_{\varphi}(q)> 0$, which implies that a higher value of $\varphi$ relaxes the constraint $c^{aud}_{\varphi}$. Thus, for any parameter value of $\varphi< \frac{m-k}{n}$, by increasing $ \varphi$  we obtain another merit-with-guarantee rule that dominates the previous one in a first-order stochastic dominance sense. Therefore, the optimal choice of $\varphi$ is at least $\frac{m-k}{n}$.     
In Subsection \ref{subsec:foc}, we provide a first-order condition for the optimal choice of $\varphi$. 

By restricting attention to values of $\varphi$ that can be optimal we obtain the following explicit description of when each constraint in \eqref{eq:upper_bound} is binding.  

\begin{corollary}\label{lemma:intervals}
 Suppose $\varphi\in[\frac{m-k}{n},\frac{m}{n}]$. Then there are numbers $\gamma_1, \gamma_2, \gamma_3$ such that $0\le \gamma_1 \le \gamma_2\le \gamma_3\le1$ and the following conditions hold:
 \begin{align*}
 c^{allo}(F(t))&\le c_\varphi^{aud}(F(t)) \text{ and } c^{allo}(F(t)) \le c_\varphi^{ic}(F(t)) & &\text{ for all } t\in [\gamma_1,\gamma_2)\cup [\gamma_3,1] \\
 c_\varphi^{aud}(F(t))&\le c^{allo}(F(t)) \text{ and } c^{aud}_\varphi(F(t)) \le c^{ic}_\varphi(F(t))  & &\text{ for all } t\in [\gamma_2,\gamma_3]\\
 c_\varphi^{ic}(F(t))  &\le c^{allo}(F(t)) \text{ and } c_\varphi^{ic}(F(t)) \le c^{aud}_\varphi(F(t)) & &\text{ for all } t\in [0,\gamma_1].
 \end{align*}
 \end{corollary}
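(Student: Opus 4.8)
The plan is to analyze the three functions $c^{allo}$, $c^{aud}_\varphi$, and $c^{ic}_\varphi$ as functions of $q=F(t)$ on $[0,1]$ and determine the regions where each is the pointwise minimum. The key structural facts are: $c^{ic}_\varphi(q)=m-nq\varphi$ is affine and decreasing in $q$; $c^{allo}(q)$ is concave (it is a sum of concave functions $\min\{i,m\}\cdot(\text{binomial pmf})$, or more directly because its derivative, which counts the probability that the type-$t$ agent is among the $m$ highest, is decreasing in $q$); and $c^{aud}_\varphi(q)$ is concave (same argument for the $\min\{i,k\}$ part, plus the affine term $n(1-q)\varphi$). I would record the boundary values: $c^{allo}(0)=m$, $c^{allo}(1)=0$, $c^{ic}_\varphi(0)=m$, $c^{ic}_\varphi(1)=m-n\varphi\ge 0$, $c^{aud}_\varphi(0)=k+n\varphi$, and $c^{aud}_\varphi(1)=0$. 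Since $\varphi\ge\frac{m-k}{n}$, we have $c^{aud}_\varphi(0)=k+n\varphi\ge m=c^{ic}_\varphi(0)=c^{allo}(0)$, so near $q=0$ the binding constraint is $c^{ic}_\varphi$ (or a tie at $q=0$).

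Next I would pin down $C^{ic}$. Since $c^{ic}_\varphi$ is affine and both $c^{allo}$ and $c^{aud}_\varphi$ are concave with the same or larger value at $q=0$, the sets $\{q: c^{ic}_\varphi(q)\le c^{allo}(q)\}$ and $\{q: c^{ic}_\varphi(q)\le c^{aud}_\varphi(q)\}$ are each intervals containing $0$ (a concave function minus an affine function is concave, hence its superlevel set is an interval, and it contains $0$). Their intersection is therefore an interval $[0,\gamma_1]$ for some $\gamma_1\in[0,1]$; this gives the last line of the corollary with $C^{ic}=[0,\gamma_1]$. This is essentially the observation already made in the footnote after the definition of $C^{allo}$.

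Then I would handle $C^{aud}$ versus $C^{allo}$ on the complementary region. Consider $g(q):=c^{allo}(q)-c^{aud}_\varphi(q)$. Both terms share the structure $\sum_i \min\{i,m\text{ or }k\}\binom{n}{i}(1-q)^i q^{n-i}$; for $i\le k$ the summands cancel, so $g(q)=\sum_{i=k+1}^{n}(\min\{i,m\}-k)\binom{n}{i}(1-q)^i q^{n-i}-n(1-q)\varphi$. I would show $g$ is concave on $[0,1]$ (it is a difference where the subtracted part is affine and the binomial-weighted part has decreasing derivative, i.e.\ is concave — this needs the standard fact that $q\mapsto\Prob(\text{Binomial}(n,1-q)\ge j)$ has a single-peaked derivative; more simply, $g'(q) = -n\big(\Prob(\text{at least }k\text{ of the other }n-1\text{ agents above})-\Prob(\text{at least }m)\big)+n\varphi$ up to constants, and the bracket is monotone in $q$). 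With $g(0)=m-(k+n\varphi)\le 0$ and $g(1)=0$, concavity of $g$ forces $\{q:g(q)\le 0\}$ to be of the form $[0,\beta]\cup\{1\}$ or $[0,\beta]\cup[\beta',1]$ — in fact, since $g(1)=0$ and $g$ is concave with $g(0)\le 0$, either $g\le 0$ everywhere, or $g\ge 0$ on a single interval $[\gamma_2,\gamma_3]$ abutting neither endpoint except possibly $\gamma_3=1$ when $g'(1^-)\le 0$. Combining with the region $[0,\gamma_1]$ where $c^{ic}_\varphi$ already dominates, I obtain: $c^{aud}_\varphi$ binds on $[\gamma_2,\gamma_3]$, $c^{allo}$ binds on $[\gamma_1,\gamma_2)\cup[\gamma_3,1]$, with $\gamma_1\le\gamma_2\le\gamma_3\le 1$ after checking $\gamma_1\le\gamma_2$ (on $[0,\gamma_1]$, $c^{ic}_\varphi\le c^{aud}_\varphi$, and where $c^{aud}_\varphi$ starts binding we need it below $c^{ic}_\varphi$, so the crossover of $c^{aud}_\varphi$ below $c^{ic}_\varphi$ happens at or after $\gamma_1$). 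I would also verify that on $[\gamma_3,1]$ the constraint $c^{ic}_\varphi$ does not re-enter as the minimum: there $c^{allo}(q)\le c^{aud}_\varphi(q)$ by definition of $\gamma_3$, and $c^{allo}(q)\le c^{ic}_\varphi(q)$ because both hit $0$ at $q=1$ while near $q=1$ the efficient-allocation derivative dominates (every agent among the few above $t$ gets an object), so $c^{allo}$ stays below the affine $c^{ic}_\varphi$; this uses $\varphi\le\frac{m}{n}$.

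The main obstacle I expect is the concavity / single-crossing bookkeeping for $g=c^{allo}-c^{aud}_\varphi$ and the verification that $c^{allo}$ lies weakly below $c^{ic}_\varphi$ on the upper interval $[\gamma_3,1]$ — i.e.\ ruling out a spurious fourth region and guaranteeing exactly the claimed ordering $\gamma_1\le\gamma_2\le\gamma_3$. Everything else is a direct consequence of concavity of $c^{allo}$ and $c^{aud}_\varphi$ versus linearity of $c^{ic}_\varphi$ together with the sign pattern of the boundary values, which in turn uses only $\varphi\in[\frac{m-k}{n},\frac{m}{n}]$. A cleaner route for that obstacle is to express the derivatives probabilistically: $-\tfrac1n (c^{allo})'(q)$ equals the probability that, among $n-1$ i.i.d.\ draws, fewer than $m$ exceed the threshold corresponding to $q$ (a decreasing function of $q$), and similarly for $c^{aud}_\varphi$, so that $g'(q)$ is the difference of two monotone functions plus a constant, making its single-crossing behavior transparent; I would adopt whichever of these is least painful and cite Lemma~\ref{lemma:intervals_new} for the finer description if needed.
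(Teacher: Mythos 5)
Your overall architecture matches the paper's (pairwise comparison of the three constraint functions; concavity of $c^{allo}$ and $c^{aud}_\varphi$ against linearity of $c^{ic}_\varphi$ to get $C^{ic}=[0,\gamma_1]$; then a zero-counting argument for $c^{allo}-c_\varphi^{aud}$), and the $C^{ic}$ part and boundary-value bookkeeping are fine. But the step you yourself flag as the main obstacle is resolved by claims that are false. Writing $g:=c^{allo}-c^{aud}_\varphi$, the paper computes
\[
g''(q) \;=\; -n\,q^{n-m-1}(1-q)^{k-1}\Bigl[\,m\binom{n-1}{m}(1-q)^{m-k}-k\binom{n-1}{k}q^{m-k}\Bigr],
\]
and the bracket is positive near $q=0$ and negative near $q=1$, so $g$ is concave near $0$ but convex near $1$ --- it is \emph{not} concave on $[0,1]$. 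Indeed, global concavity of $g$ is incompatible with the very configuration you need to allow: $g\ge 0$ on $[\gamma_2,\gamma_3]$ with $\gamma_3<1$, $g<0$ on $(\gamma_3,1)$, and $g(1)=0$ forces $g$ to dip below zero between two points where it is nonnegative, which a concave function cannot do. Your ``more simply'' alternative is also false: with $X\sim\mathrm{Bin}(n-1,1-q)$, the bracket $\Prob(X\ge k)-\Prob(X\ge m)=\Prob(k\le X\le m-1)$ equals $0$ at $q=0$ (where $X=n-1>m-1$) and $0$ at $q=1$ (where $X=0<k$) and is positive in between, so it is hump-shaped, not monotone in $q$. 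Nor does observing that $g'$ is ``a difference of two monotone functions plus a constant'' make single crossing transparent --- such differences can cross a level arbitrarily often in general.

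The missing idea, which is exactly what the paper's Lemma~\ref{lemma:intervals_new} supplies, is that $g''$ itself single-crosses zero (from negative to positive) because the bracketed factor above is strictly decreasing in $q$. From this: $g'$ has at most two zeros, hence $g$ has at most three zeros on $[0,1]$; since $g(1)=0$, at most two lie in $[0,1)$; combined with $g(0)=m-k-n\varphi\le 0$ and $g'(1)=n\varphi>0$ (so $g<0$ just left of $1$), one obtains that $c^{aud}_\varphi\le c^{allo}$ holds on a single interval $[r_1,r_2]$ with $r_2<1$ and the reverse holds outside it. Without this unimodality fact nothing in your argument rules out $C^{aud}$ being a union of several intervals, which would break the statement; ``cite Lemma~\ref{lemma:intervals_new} if needed'' is not a substitute, since that lemma \emph{is} the proof. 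The remaining slip --- asserting that $c^{allo}$ and $c^{ic}_\varphi$ ``both hit $0$ at $q=1$,'' whereas $c^{ic}_\varphi(1)=m-n\varphi\ge 0$ --- is harmless, because the needed comparison on $[\gamma_3,1]$ already follows from the single crossing of the concave function $c^{allo}-c^{ic}_\varphi$ (zero at $q=0$, nonpositive at $q=1$) that you correctly establish for the lower interval.
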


Figure \ref{fig:opt_mech_cutoffs}  illustrates how the \emph{merit-with-guarantee rule} allocates depending on which upper bound in \eqref{eq:upper_bound} is binding. The type space is partitioned into three regions $C^{allo}$, $C^{aud}$, and $C^{ic}$, and on each region a particular allocation rule is used. Corollary \ref{lemma:intervals} establishes that each region is an interval (except possibly $C^{allo}$, which can be a union of two intervals).\footnote{See Lemma \ref{lemma:intervals_new} in Appendix \ref{App:prop_c}  for the complete characterization of which constraint is binding, for all feasible values of $\varphi$.} 

\begin{figure}[ht!]
\begin{center}
\begin{tikzpicture}[decoration={brace},baseline,x=4.5cm,y=3.5cm]
      \draw[axis] (0,0) -- (3.4,0);
      \draw[thick] (0,-0.05) node[below] {$0$} -- (0,0.05);
      \draw[thick] (3.4,-0.05) node[below] {1} -- (3.4,0.05);
      \draw[thick] (0.8,-0.05) -- (0.8,0.05);
       \draw[thick] (1.4,-0.05) -- (1.4,0.05);
      \draw[thick] (2.4,-0.05) -- (2.4,0.05);
      \draw (0.8,0.05)node[above]{$\gamma_1$} (1.4,0.05)node[above]{$\gamma_2$} (2.4,0.05)node[above]{$\gamma_3$} ;
      \draw (0.5,-0.05) node[below] {$C^{ic}$} (1.1,-0.05) node[below] {$C^{allo}$} (1.8,-0.05) node[below] {$C^{aud}$} (2.8,-0.05) node[below] {$C^{allo}$};     
\end{tikzpicture} 
\end{center}
\caption{Illustrating optimal mechanism: cutoffs }
\label{fig:opt_mech_cutoffs}
\end{figure}
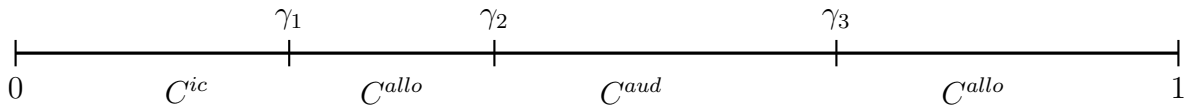

\subsection{How to induce the optimal allocation rule}\label{subsec:ex-post_rule}

In this subsection, we describe an allocation rule that induces the optimal interim allocation rule $P$ described in Definition \ref{reduced:opt_mech}; this allocation rule consists of a merit-based allocation and a lottery-based allocation.

\begin{definition}\label{def:ex_post_tilde}
Given $C^{allo}$ and $C^{aud}$, the \emph{merit-based allocation} $p^m:\T\rightarrow \mathbb{R}^n$ is defined as follows: $p^m_i(\t)=1 $ if $t_j\neq t_k$ for all $j\neq k$, and either $(i)$  $t_i\in C^{allo}$ and $t_i$ is among the $m$ highest reports, or $(ii)$   $t_i\in C^{aud}$ and $t_i$ is among the $k$ highest reports; otherwise $p^m_i(\t)=0$.\footnote{For simplicity, we specify that no objects are allocated in case of a tie, which yields a symmetric and deterministic allocation rule. Because ties have probability 0, this does not affect interim rules and expected payoffs.}
\end{definition}

To illustrate the merit-based allocation, consider the interim allocation rule shown in Figure \ref{fig:opt_mech}.   There are two cutoffs, $\gamma_1$ and $\gamma_3$, such that $C^{allo}=[\gamma_3,1]$, $C^{aud}=[\gamma_1,\gamma_3]$, and $C^{ic}=[0,\gamma_1]$.\footnote{In the example above note that $\gamma_1=\gamma_2$.} Accordingly, the merit-based allocation assigns an object to every agent whose type is among the $m$ highest ones and above $\gamma_3$ and to every agent whose type is among the $k$ highest ones and between $\gamma_1$ and $\gamma_3$.

We now argue that the merit-based allocation induces $P$ on $C^{allo}$: By allocating all objects to the $m$ agents with the highest types as long as these lie in $C^{allo}$,
the expected number of agents with a type $t\in [\gamma_3,1]$ who obtain an object is
\[ \sum_{i=1}^n \min\{i,m\}\binom{n}{i}(1-F(t))^{i}F(t)^{n-i},
\]
which equals $c^{allo}(F(t))$. Therefore, inequality \eqref{eq:c_allo} is binding for all $t\ge \gamma_3$ for the interim allocation rule induced by merit-based allocation. We conclude that the merit-based allocation induces $P$ on $C^{allo}$.

One can show along similar lines that the merit-based allocation induces $P-\varphi$ on $C^{aud}$.
Observe that a merit-based allocation does not allocate all $m$ objects if less than $m$ agents have a type in $C^{allo}$. To find an allocation rule that induces $P$ for all types, we need to supplement the merit-based allocation with a second-stage allocation that allocates the remaining objects to agents that did not get an object under the merit-based allocation and such that each agent with a type in $C^{aud}\cup C^{ic}$ gets an object in the second-stage allocation with probability $\varphi$. We say an allocation rule is \emph{lottery-based} if it allocates the remaining objects in this way. Note that allocating the remaining objects uniformly at random among agents with types in $C^{aud}\cup C^{ic}$ that did not get an object in the merit-based allocation does not yield the desired interim allocation in general because higher types in $C^{aud}$ are less likely to remain eligible for the second-stage allocation (they are more likely to receive an object under the merit-based allocation). 
However, we show that inequalities \eqref{eq:c_allo}--\eqref{eq:c_aud} ensure that such a lottery-based allocation exists:

\begin{proposition}\label{prop:opt_allo_rule}
There is a lottery-based allocation rule $p^\ell$ that, together with the merit-based allocation rule $p^m$, induces $P$. That is,  $p^m+p^\ell$ is feasible and induces $P$.
\end{proposition}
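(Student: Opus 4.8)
The plan is to exhibit the lottery-based allocation rule explicitly as a random assignment of the leftover objects among the agents who did not win in the merit stage and whose types lie in $C^{aud}\cup C^{ic}$, and then to verify (i) that such a random assignment is feasible (never allocates more than $m$ objects in total, never gives an agent two objects), and (ii) that its interim rule, added to the interim rule that $p^m$ induces, equals $P$. Step (ii) is the heart of the matter. For it, first observe from the preceding discussion that $p^m$ already induces $P$ on $C^{allo}$ and induces $P-\varphi$ on $C^{aud}$, and (trivially) induces $0$ on $C^{ic}$; hence the required increment from $p^\ell$ is exactly: $0$ on $C^{allo}$, and $\varphi$ on $C^{aud}\cup C^{ic}$. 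So the target interim rule for $p^\ell$ is the \emph{constant} function $\varphi$ on $C^{aud}\cup C^{ic}$ and $0$ on $C^{allo}$. This is a symmetric target, which makes a Border/Che-type existence argument natural.

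Concretely, I would set up the following reduced-form existence problem: among the $n$ agents, consider only those with types in $E:=C^{aud}\cup C^{ic}$; after the merit stage, $L(\t)$ of the $m$ objects remain unallocated and a certain subset $S(\t)\subseteq E$ of those agents are still ``eligible'' (did not win in the merit stage). We want a (randomized) assignment of the $L(\t)$ leftover objects to members of $S(\t)$ such that every type $t\in E$ ends up with interim probability exactly $\varphi$. The key structural facts to bring to bear are the three binding inequalities: on $C^{ic}$ the constraint $c^{ic}_\varphi$ binds, i.e. the merit stage leaves just enough objects so that every low type can be topped up to $\varphi$ using the ``$m$ objects minus $nF(t)\varphi$ expected low-type demand'' accounting; on $C^{aud}$ the constraint $c^{aud}_\varphi$ binds, which is exactly the statement that $k$ audits' worth of merit allocation plus a uniform $\varphi$ tail is consistent; and on $C^{allo}$ the constraint $c^{allo}$ binds, meaning all objects are already used whenever enough high types show up, so nothing needs to (or can) spill into the lottery there. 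I would package the existence of the required randomized assignment as an application of the type-dependent Border/Che characterization that the paper develops (the ``feasible interim rules with type-dependent ex-post constraints'' result referenced in the related-literature section): the type-dependent capacity here is that an agent is only available for the lottery if he is not in $C^{allo}$ and did not win in the merit stage, and the object supply available to the lottery at profile $\t$ is $m$ minus the number allocated by $p^m(\t)$. The target interim rule $P^\ell$ (equal to $\varphi$ on $E$, $0$ elsewhere) must be checked to satisfy the corresponding majorization/Border inequalities; I expect these to reduce, after cancellation, precisely to inequalities \eqref{eq:c_allo}, \eqref{eq:c_ic}, \eqref{eq:c_aud} (or consequences thereof), which is why the proposition's hypothesis is stated in terms of those three.

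The step I expect to be the main obstacle is verifying the Border-type inequalities for the lottery subproblem \emph{with} the type-dependent eligibility, i.e. showing that the ``remaining supply'' $m - \sum_i p^m_i(\t)$, once integrated against the relevant order statistics of the other agents' types, dominates the demand generated by the constant target $\varphi$ on $E$. The subtlety flagged in the text—that naive uniform randomization over still-eligible agents fails because higher types in $C^{aud}$ are disproportionately likely to have already won, hence disproportionately unlikely to be eligible—means the lottery must deliberately favor the higher (less frequently eligible) types within $C^{aud}$, and one has to confirm there is enough leftover supply, conditional on each event, to do so; this is exactly where the binding of $c^{aud}_\varphi$ is used rather than merely its weak form. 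A secondary but routine obstacle is bookkeeping the boundary between the two components of $C^{allo}$ (Corollary \ref{lemma:intervals} allows $C^{allo}=[\gamma_1,\gamma_2)\cup[\gamma_3,1]$): since $c^{allo}$ binds on both pieces, $p^m$ should still induce $P$ there and $p^\ell$ contributes $0$, but one must check that the merit rule's definition (``among the $m$ highest reports'' for $t_i\in C^{allo}$) is consistent across the gap $[\gamma_2,\gamma_3)=C^{aud}$, where only the top $k$ win. Once the Border inequalities are in hand, the existence of $p^\ell$ follows from the type-dependent characterization, and feasibility of $p^m+p^\ell$ (at most $m$ objects, at most one per agent) is immediate from the construction since $p^\ell$ only uses leftover objects and only assigns to agents who did not already win.
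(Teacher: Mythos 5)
Your overall route is the same as the paper's: you decompose the target exactly as the paper does (Lemma \ref{lemma:p_tilde_implements_P} shows $p^m$ induces $P-\varphi\mathbf{1}_{C^{aud}\cup C^{ic}}$, so the lottery must induce the constant $\varphi$ on $C^{aud}\cup C^{ic}$), and you propose to get existence of $p^\ell$ from the type-dependent Border/Che characterization with supply $m-\sum_i p^m_i(\t)$ and eligibility restricted to merit-stage losers, which is precisely Lemma \ref{lemma:border_general} applied as in Lemma \ref{l:lottery_feasible}.

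The gap is in the step you yourself flag as the main obstacle, and your expectation of how it resolves is not correct as stated. Because the ex-post constraints are type-dependent, Lemma \ref{lemma:border_general} requires the inequality to hold for \emph{all} profiles of Borel sets $E_1\sqcup\dots\sqcup E_n$ --- asymmetric ones and non-upper sets included --- and these conditions do not ``reduce, after cancellation'' to \eqref{eq:c_allo}, \eqref{eq:c_ic}, \eqref{eq:c_aud}: the paper's footnote in Section \ref{sec:proof_red_optimal} gives a two-agent example where a monotone interim rule satisfies every upper-set constraint yet is infeasible once supply varies with the type profile, so symmetry and monotonicity of the target $\varphi\mathbf{1}_{C^{aud}\cup C^{ic}}$ do not by themselves license a reduction to the aggregate constraints. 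What is actually needed (and what the paper supplies in the proof of Lemma \ref{l:lottery_feasible}) is, first, a pointwise argument showing that the worst-case set for each agent has the form $[\gamma^j,\gamma_3]$ or $C^{ic}\cup[\gamma^j,\gamma_3]$, second, a symmetrization step based on supermodularity of the constraint functional (Lemma \ref{lemma:symmetric_solution}) to pass from asymmetric to identical sets across agents, and only then the verification for sets $E=[\gamma,\gamma_3]$ or $C^{ic}\cup[\gamma,\gamma_3]$, which proceeds by comparison with the efficient rule $p^{eff}$ and the ordering $c^{aud}_\varphi\le c^{allo}$ on $[\gamma_2,\gamma_3]$ rather than by citing the three constraints directly. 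Without the reduction and symmetrization steps your plan does not yield a proof; with them, it is the paper's proof.
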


\medskip

With the allocation rule at hand, we can see that it is not implementable in dominant strategies or even as an ex-post Nash equilibrium. The following remark formalizes this.  

\begin{remark}\label{remark:EPIC}
 The optimal BIC mechanism cannot be implemented to satisfy ex-post incentive constraints.
 To see this, note that on $C^{allo}$ there is an essentially unique allocation rule that implements $P$, namely to allocate $m$ objects efficiently. Now fix an agent $i$ and a type profile $\t$ such that $m$ other agents have a type in $C^{allo}$ above $t_i$.  If agent $i$ is truthful,  he will not get an object.  If he claims to be a high type,  he will get an object whenever he is not verified.  Since not all $m$ agents can get verified, there is a profitable deviation.  
\end{remark}

\subsection{Finding the optimal allocation guarantee}\label{subsec:foc}
Theorem \ref{thm:opt} shows that the optimal mechanism lies in the class of merit-with-guarantee rules. 
The only remaining optimization problem is to establish the value of $\varphi$ that maximizes the expected payoff for the principal. In Section 3.1 we concluded that any optimal value of $\varphi$ must be at least $\frac{m-k}{n}$. Thus, by Corollary \ref{lemma:intervals} we can write  the principal's payoff from using a merit-with-guarantee rule with parameter $\varphi$ as,
\begin{align*}
U(\varphi) &:= \int_0^{\gamma_1(\varphi)} -\frac{1}{n} c_{\varphi}^{ic'}(F(t)) t \,\mathrm dF(t) + \int_{\gamma_1(\varphi)}^{\gamma_2(\varphi)} -\frac{1}{n} c^{allo'}(F(t)) t \,\mathrm dF(t) \\
& +\int_{\gamma_2(\varphi)}^{\gamma_3(\varphi)} -\frac{1}{n} c_{\varphi}^{aud'}(F(t)) t \,\mathrm dF(t)+ \int_{\gamma_3(\varphi)}^{1} -\frac{1}{n} c^{allo'}(F(t)) t \,\mathrm dF(t)
\end{align*}
The next result shows how to find an optimal value of $\varphi$.
\begin{proposition}\label{prop:opt_phi}
Every optimal choice of $\varphi$ must satisfy the first-order condition $U'(\varphi)=0$, which is given by the following condition:
\begin{align}\label{eq:FOC}
\gamma_1(\varphi)F(\gamma_1(\varphi)) + \gamma_2(\varphi)[1-F(\gamma_2(\varphi)) ]=  \gamma_3(\varphi)[1-F(\gamma_3(\varphi)) ] +  \int_0^{\gamma_1(\varphi)}  t \, \mathrm dF(t) +\int_{\gamma_2(\varphi)}^{\gamma_3(\varphi)}  t \, \mathrm dF(t)
\end{align}
\end{proposition}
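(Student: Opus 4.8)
The plan is to differentiate $U(\varphi)$ directly, using the explicit forms of $c^{ic}_\varphi$, $c^{aud}_\varphi$ and $c^{allo}$ together with the envelope-like structure at the cutoffs $\gamma_1(\varphi),\gamma_2(\varphi),\gamma_3(\varphi)$. First I would note that $\gamma_1,\gamma_2,\gamma_3$ are defined as the types where two of the three constraint curves cross; since $c^{allo}$ does not depend on $\varphi$ while $c^{ic}_\varphi$ and $c^{aud}_\varphi$ do, the cutoffs are (piecewise) differentiable in $\varphi$ on the interior of $[\frac{m-k}{n},\frac{m}{n}]$, and I would record their derivatives implicitly from the defining equalities $c^{ic}_\varphi(F(\gamma_1))=c^{aud}_\varphi(F(\gamma_1))$, $c^{aud}_\varphi(F(\gamma_2))=c^{allo}(F(\gamma_2))$, $c^{aud}_\varphi(F(\gamma_3))=c^{allo}(F(\gamma_3))$ (the precise pairing following Corollary \ref{lemma:intervals}). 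The key simplification is that the integrand is continuous across each cutoff: at $\gamma_1$ the value $-\frac1n c^{ic\prime}_\varphi(F(t))t$ and $-\frac1n c^{allo\prime}(F(t))t$ agree because $P(t)$ is continuous there (both equal $P(\gamma_1)\gamma_1$), and similarly at $\gamma_2,\gamma_3$. Hence the Leibniz boundary terms from moving the limits of integration cancel in pairs, and $U'(\varphi)$ reduces to the sum of the partial derivatives of the integrands with respect to $\varphi$, integrated over the regions where $c_\varphi$ depends on $\varphi$.

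Concretely, only the first integral (over $[0,\gamma_1]$, integrand involving $c^{ic}_\varphi$) and the third (over $[\gamma_2,\gamma_3]$, integrand involving $c^{aud}_\varphi$) contribute, since $c^{allo}$ is independent of $\varphi$. Using $c^{ic}_\varphi(q)=m-nq\varphi$ we get $\frac{\partial}{\partial\varphi}\big(-\frac1n c^{ic\prime}_\varphi(F(t))\big)=\frac{\partial}{\partial\varphi}(F(t)\varphi)'=1$ after differentiating in $q$ first; more carefully, $-\frac1n c^{ic\prime}_\varphi(q)=q\varphi$ has $\varphi$-derivative $q$, so the $\varphi$-derivative of the first integrand is $F(t)t$, contributing $\int_0^{\gamma_1(\varphi)} t\,\mathrm dF(t)$ — wait, it contributes $\int_0^{\gamma_1} F(t)t\,\mathrm dF(t)$; I will instead integrate by parts / recompute so that the bookkeeping matches the stated FOC. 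Similarly, $c^{aud}_\varphi(q)=\sum_i\min\{i,k\}\binom ni(1-q)^iq^{n-i}+n(1-q)\varphi$ has $-\frac1n c^{aud\prime}_\varphi(q)$ with $\varphi$-derivative equal to $\frac1n\cdot n=1$ pointwise after the $q$-derivative, i.e.\ $\frac{\partial}{\partial\varphi}\big(-\frac1n c^{aud\prime}_\varphi(F(t))\big)=1$, so the third integral contributes $\int_{\gamma_2(\varphi)}^{\gamma_3(\varphi)} t\,\mathrm dF(t)$. Collecting terms, setting $U'(\varphi)=0$, and rearranging should yield \eqref{eq:FOC}; the terms $\gamma_1 F(\gamma_1)$, $\gamma_2[1-F(\gamma_2)]$, $\gamma_3[1-F(\gamma_3)]$ on the two sides will appear precisely as the residual boundary contributions that do not cancel, because the integrand is continuous in value but the measure of each region changes with $\varphi$ in a way weighted by the endpoint types.

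The main obstacle I anticipate is handling the boundary/cancellation bookkeeping cleanly: one must verify that the integrand is genuinely continuous at each $\gamma_j$ (this is exactly continuity of $P$, which follows because $c_\varphi$ is the lower envelope of the three curves and is $C^1$ in $q$ at crossing points in the relevant cases, using the right-derivative convention from Definition \ref{reduced:opt_mech}), and then track which Leibniz terms survive. A secondary subtlety is the case distinctions around $\gamma_1=\gamma_2$ (as in the Figure \ref{fig:opt_mech} example) or other degeneracies where a region is empty or where $C^{allo}$ splits; in those cases some cutoffs coincide and the corresponding boundary terms drop out or merge, but the FOC \eqref{eq:FOC} remains valid as the common limit, so I would argue the formula extends by continuity of $U$ and its derivative in $\varphi$. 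Finally, one should note that $U$ is the principal's objective over a one-dimensional compact interval $[\frac{m-k}{n},\frac{m}{n}]$, and since Section 3.1 rules out the left endpoint as optimal and $\varphi=\frac mn$ corresponds to allocating everything by lottery (dominated), any optimum is interior and hence satisfies $U'(\varphi)=0$; this justifies the "must satisfy" phrasing without needing concavity of $U$.
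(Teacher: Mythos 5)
Your overall strategy---differentiate $U$ with the Leibniz rule and get the cutoff derivatives $\gamma_j'(\varphi)$ from the implicit function theorem---is exactly the paper's route. But the step you call the ``key simplification'' is wrong, and it is wrong in a way that would destroy the result. You claim the integrand $-\tfrac{1}{n}c_\varphi'(F(t))\,t = P(t)\,t$ is continuous at each cutoff so that the Leibniz boundary terms cancel in pairs. In fact $P$ jumps at $\gamma_1,\gamma_2,\gamma_3$: the lower envelope $c_\varphi$ is continuous at a crossing of two of the curves, but its derivative is not. For instance, since $c^{allo}-c_\varphi^{aud}\ge 0$ on $[\gamma_2,\gamma_3]$ and $\le 0$ on $[\gamma_3,1]$, the difference crosses zero from above at $\gamma_3$, so $c^{allo\prime}(F(\gamma_3))\le c_\varphi^{aud\prime}(F(\gamma_3))$ with generically strict inequality, and $P$ jumps up there (this is visible in the lower panel of Figure \ref{fig:opt_mech} and in the Example, where $\gamma_3^2+\varphi\neq 2\gamma_3-\gamma_3^2$ in general). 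These non-cancelling boundary terms are precisely where the left-hand side of \eqref{eq:FOC} comes from: the paper writes, e.g., $h_1:=c^{allo}(F(\gamma_1(\varphi)))-c^{ic}_\varphi(F(\gamma_1(\varphi)))=0$ (note the correct pairing at $\gamma_1$ is $c^{ic}_\varphi$ with $c^{allo}$, not with $c^{aud}_\varphi$ as you wrote), obtains $\gamma_1'(\varphi)=-nF(\gamma_1)/\bigl([c^{allo\prime}-c^{ic\prime}_\varphi]f(\gamma_1)\bigr)$, and the jump $\tfrac{1}{n}[c^{allo\prime}-c^{ic\prime}_\varphi]$ in the integrand multiplied by $\gamma_1 f(\gamma_1)\gamma_1'(\varphi)$ collapses to $-\gamma_1F(\gamma_1)$; analogously at $\gamma_2$ and $\gamma_3$. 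Your second paragraph then asserts these same terms ``appear precisely as the residual boundary contributions that do not cancel,'' which contradicts your own cancellation claim; the plan as written does not contain a correct mechanism producing them.

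There is also an unresolved computational error in the interior terms. From $c^{ic}_\varphi(q)=m-nq\varphi$ one gets $-\tfrac{1}{n}c^{ic\prime}_\varphi(q)=\varphi$, a constant in $q$, whose $\varphi$-derivative is $1$; hence the first integral contributes $\int_0^{\gamma_1(\varphi)}t\,\mathrm dF(t)$ directly. Your $-\tfrac{1}{n}c^{ic\prime}_\varphi(q)=q\varphi$ is a slip (you appear to have differentiated $\tfrac{1}{n}(m-c^{ic}_\varphi)$ rather than $-\tfrac{1}{n}c^{ic\prime}_\varphi$), and deferring the fix to ``recompute so that the bookkeeping matches the stated FOC'' is reverse-engineering rather than a proof. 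The $c^{aud}_\varphi$ piece you handle correctly. To complete the argument you would need to (i) drop the continuity claim, keep all six Leibniz boundary terms, and combine each pair with the corresponding $\gamma_j'(\varphi)$ as above, and (ii) redo the $c^{ic}$ interior term as indicated; with those repairs the computation closes and matches \eqref{eq:FOC}.
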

One observation that follows from the first-order condition for $\varphi$  is that for the optimal mechanism, the $C^{aud}$ region is never empty: If it were empty, we would obtain $\gamma_2=\gamma_3$ and equation \eqref{eq:FOC} would simplify to 
\begin{align}
\gamma_1(\varphi)F(\gamma_1(\varphi))=    \int_0^{\gamma_1(\varphi)}  t \, \mathrm dF(t).
\end{align}
However, this condition never holds if $\gamma_1$ is strictly above the lowest type (and the density $f$ is strictly positive on the type space). 

\begin{example}
Suppose there are three agents, two objects to allocate, and only one agent can be verified: $n=3,m=2,k=1$. Types are distributed uniformly on $[0,1]$.

We will illustrate how to use Proposition \ref{prop:opt_phi} to find an optimal value of $\varphi$. Finding an optimal value of $\varphi$ corresponds to determining which constraint function is binding for the upper bound,
\begin{align*}
3 \int_{t}^1 P(x) \,\mathrm d x \le c_{\varphi}(t)=\min\{c^{allo}(t),c_{\varphi}^{aud}(t),c_{\varphi}^{ic}(t)\}.    
\end{align*}
For this example the constraint functions are  the following:
\begin{align*}
c^{allo}(t)= t^3 -3t^2 +2, \, \, c_{\varphi}^{aud}(t)= -t^3 - 3\varphi t + 1 + 3\varphi, \text{ and }c_{\varphi}^{ic}(t)=  2 - 3 \varphi t,
\end{align*}
where the optimal value of $\varphi$ lies in  $ [\frac{1}{3},\frac{2}{3}]$. It can be established that in this example $C^{allo}=[\gamma_3,1]$, giving  us the following structure of the merit-with-guarantee rule:
\begin{align*}
P(t) = \begin{cases}
\varphi &\text{, for } t\in [0,\gamma_1(\varphi))\\
t^2 +\varphi & \text{, for } t\in [\gamma_2(\varphi),\gamma_3(\varphi))\\
2t-t^2 &\text{, for } t\in [\gamma_3(\varphi),1].
\end{cases}
\end{align*}
The expected per-agent payoff for the principal as a function of $\varphi$ is 
\begin{align*}
U(\varphi) = \int_0^{\gamma_1(\varphi)} \varphi t \,\mathrm dt + \int_{\gamma_2(\varphi)}^{\gamma_3(\varphi)} t^3 +\varphi t \,\mathrm dt + \int_{\gamma_3(\varphi)}^{1} 2t^2-t^3  \,\mathrm dt.
\end{align*}
Using Proposition \ref{prop:opt_phi} we have the following condition for $\varphi$ to be optimal:
\[
\gamma_1(\varphi)=  \gamma_3(\varphi)[1-\gamma_3(\varphi) ] +  \int_0^{\gamma_3(\varphi)}  t \, \mathrm d t.
\]
Solving this equation gives us the optimal value of $\varphi^*=0.34764$ and an expected payoff of $1.223$. This can be compared with the expected payoff of 1 from using a random allocation rule where all agents gets an object with an equal probability of $\frac{1}{3}$. Another comparison is with the first best mechanism that always allocates the two objects to the agents with the highest types yielding an expected payoff of $1.25$.

\end{example}

 \section{Why a merit-with-guarantee rule is optimal}\label{sec:proof_red_optimal}
We next discuss the feasibility constraints---implying that the principal can allocate at most $m$ objects and learn the types of at most $k$ agents---in terms of interim rules.
Recall that we consider only symmetric interim rules throughout and that for $P$ to be \emph{feasible} there must exist an allocation rule $p$ with $P(t)=\Eis[p_i(t,\t_{-i},s)]$ for each $t\in T$.  To formally state the constraints for $P$ to be feasible, we need some notation. Given $E\subseteq T$, let $I(\t,E) := \{i\in N| t_i \in E \}$ denote the set of agents with types in $E$ that are compatible with $\t$. 

\begin{lemma}\label{lemma:border_alloc}
 An interim allocation rule $P$ is feasible if and only if for all Borel sets $E\subseteq T$, 
\begin{equation}\label{eq:border_allocation}
  n\int_E P(t_i) dF(t_i) \le  \int \min\{I(\t,E),m\} d\F(\t).
 \end{equation} 
Moreover, if $P$ is non-decreasing, it is enough to check this inequality for sets $E$ of the form $E=[e,1]$.
\end{lemma}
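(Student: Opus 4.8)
The plan is to adapt the Border-type argument used by \citeasnoun{che13} to our setting with the capacity $m$ but possibly type-dependent structure, proceeding by a min-flow/max-flow (or LP duality) argument. First I would set up the necessity direction, which is immediate: if $P$ is induced by a feasible allocation rule $p$, then for any Borel set $E$,
\begin{equation*}
 n\int_E P(t_i)\,dF(t_i) = \int \sum_{i\in N} p_i(\t,s)\ind\{t_i\in E\}\,d\F(\t)\,ds \le \int \min\{I(\t,E),m\}\,d\F(\t),
\end{equation*}
since for each fixed $(\t,s)$ at most $\min\{|I(\t,E)|,m\}$ of the indicators for agents in $I(\t,E)$ can have $p_i=1$ (at most $m$ objects total, and only agents in $I(\t,E)$ contribute). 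The symmetry reduction means we only need $P$ itself (the common interim rule), and the factor $n$ accounts for summing the symmetric contributions.

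The substantive direction is sufficiency. Here I would mirror the construction in \citeasnoun{che13}: fix a candidate $P$ satisfying \eqref{eq:border_allocation} for all Borel $E$, and build an allocation rule $p$ inducing it. The standard approach is to discretize the type space into finitely many cells, set up a transportation/flow network whose feasibility (by the max-flow--min-flow theorem, or Hall's theorem in the appropriate generalized form) is exactly equivalent to the collection of inequalities \eqref{eq:border_allocation} restricted to unions of cells, obtain a feasible ex-post rule at the discretized level, and then pass to the limit as the mesh goes to zero (using weak-* compactness of the space of allocation rules, as in \citeasnoun{che13} or \citeasnoun{border91}). The key point is that the ``supply'' side of the network has a node of capacity $m$ representing the objects, and the cut inequalities one reads off are precisely $n\int_E P\,dF \le \int\min\{I(\t,E),m\}\,d\F$; the min with $m$ arises because a cut can either separate the agents-in-$E$ from the object node (cost bounded by the count $|I(\t,E)|$) or separate the object node itself (cost $m$). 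Since the paper says it uses ``the same proof technique as in \citeasnoun{che13},'' I would cite that construction and only highlight the modification, namely that here there are no per-agent capacity constraints beyond $p_i\in\{0,1\}$, so the argument is if anything a special case.

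For the final claim --- that when $P$ is non-decreasing it suffices to check sets of the form $E=[e,1]$ --- I would argue that upper sets are the ``worst case.'' The plan is: the left-hand side $n\int_E P\,dF$ is a weighted sum that, for a fixed ``budget'' $F$-measure of $E$, is maximized by placing $E$ as high as possible (since $P$ is non-decreasing); meanwhile the right-hand side $\int \min\{I(\t,E),m\}\,d\F$ depends on $E$ only through, for each realized type profile, the number of coordinates landing in $E$, and replacing $E$ by the up-set $[e,1]$ with $F([e,1])=F(E)$ weakly increases the LHS while the RHS is determined by the distribution of $|I(\t,E)|$ --- which is a Binomial$(n,F(E))$ regardless of the shape of $E$, so the RHS is actually unchanged. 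Hence if the inequality holds for all up-sets, it holds for all $E$. I would make this precise by noting that for any $E$, $n\int_E P\,dF \le n\int_{[e,1]}P\,dF$ where $e$ is chosen so that $F([e,1])=F(E)$ (a rearrangement inequality, using monotonicity of $P$), and that $\int\min\{I(\t,E),m\}\,d\F = \sum_{i}\min\{i,m\}\binom{n}{i}F(E)^{\,i}(1-F(E))^{\,n-i} = \int\min\{I(\t,[e,1]),m\}\,d\F$.

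The main obstacle is the sufficiency direction: getting the discretization-and-limit argument fully rigorous (measurability of the limiting $p$, that the limit indeed induces exactly $P$ rather than something dominated by it, and that the finite-network feasibility is exactly characterized by the cell-union inequalities with the $\min\{\cdot,m\}$ capacity). Since this is precisely what \citeasnoun{che13} carry out, I would lean on their result and present the adaptation rather than reproving the flow/compactness machinery from scratch; the genuinely new content is checking that the type-dependent flavor of our problem (which only enters later, via feasibility of $(P,A)$ jointly) does not affect this purely allocation-side lemma, and the monotone-reduction observation at the end.
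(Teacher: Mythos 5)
Your proposal is correct and follows essentially the same route as the paper, which simply observes that the lemma follows immediately from Corollary 4(ii) of Che, Kim, and Mierendorff (2013); your additional sketches of the necessity direction, the network-flow sufficiency argument, and the rearrangement-plus-binomial observation for the reduction to upper sets are all sound elaborations of that citation. No gaps.
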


Lemma~\ref{lemma:border_alloc}, which follows immediately from Corollary 4(ii) in \citeasnoun{che13}, characterizes when an interim allocation rule is feasible. 

Note that for an interim audit rule $A$ to be feasible there must exist an audit rule $a$ with $A(t)=\Eis[a_i(t,\t_{-i},s)]$ for each $t\in T$,  and $a(\t,s)\le p(\t,s)$, for the specific  $p$ that induces $P$. We imposed the latter constraint $a(\t,s)\le p(\t,s)$, since this simplified our characterization of incentive compatibility. However, this additional constraint complicates the characterization of which interim audit rules are feasible given an allocation rule. To state the characterization, let $p$ be a symmetric and deterministic allocation rule.\footnote{With slight abuse of notation, we will drop the realization of the randomization device as an argument whenever it plays no role.} 
Define for every $E\subseteq T$ and every $\t\in \T$, $J^p(\t,E) := \{i\in N| t_i \in E \text{ and } p_i(\t)=1\}$. 

\begin{lemma}\label{lemma:border_audit}
An interim audit rule $A$ is feasible given a symmetric and deterministic ex-post allocation rule $p$ if and only if for all Borel sets $E  \subseteq T$, 
\begin{equation}\label{eq:border_audit}
  n \int_{E} A(t_i) dF(t_i) \le  \int \min\{|J^p(\t,E)|,k\} d\F(\t).
 \end{equation}
\end{lemma}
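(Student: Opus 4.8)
\textbf{Proof plan for Lemma \ref{lemma:border_audit}.}

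The plan is to mimic the proof of Border's theorem in the form established by \citeasnoun{che13}, but applied to a modified allocation problem in which the ``objects'' to be allocated are audits, the capacity is $k$ rather than $m$, and — crucially — agent $i$ is only \emph{eligible} to receive an audit at the profile $\t$ when $p_i(\t)=1$. Since $p$ is symmetric and deterministic, the set of eligible agents at $\t$ is exactly $\{i : p_i(\t)=1\}$, and the audit rule $a$ must satisfy both $\sum_i a_i(\t,s)\le k$ and $a_i(\t,s)\le p_i(\t)$. Thus an audit rule feasible given $p$ is precisely a (randomized) allocation of up to $k$ identical objects subject to the type- and profile-dependent eligibility constraint $a_i\le p_i$. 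I would first observe that this fits the framework of \citeasnoun{che13} with a type-dependent capacity: think of each agent as having ``capacity'' $p_i(\t)$ at profile $\t$, i.e. the agent can hold an object only if $p_i(\t)=1$. The quantity $\min\{|J^p(\t,E)|,k\}$ is then exactly the maximum number of audits that can be placed on agents whose type lies in $E$ at the profile $\t$: at most $k$ by the global capacity, and at most $|J^p(\t,E)|$ because only eligible agents in $E$ can be audited.

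\emph{Necessity} is the easy direction: if $A$ is induced by some feasible $a$, then for any Borel $E$,
\begin{equation*}
n\int_E A(t_i)\,dF(t_i) = \E_{\t,s}\Bigl[\sum_{i\in I(\t,E)} a_i(\t,s)\Bigr] \le \E_{\t}\bigl[\min\{|J^p(\t,E)|,k\}\bigr],
\end{equation*}
since at most $\min\{|J^p(\t,E)|,k\}$ of the eligible-in-$E$ agents can be audited at any $(\t,s)$, using $a_i\le p_i$ and $\sum_i a_i\le k$; the first equality uses symmetry of $a$ (all $n$ agents contribute the same interim value).

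\emph{Sufficiency} is the substantive part and the main obstacle. Here I would follow the network-flow / max-flow--min-cut (equivalently, the Hall-type polymatroid) argument of \citeasnoun{che13}: set up the bipartite allocation of audit-probability mass from types to profiles, where a unit of type-$t$ mass at profile $\t$ may be routed to agent $i$ only if $p_i(\t)=1$, and show that the inequalities \eqref{eq:border_audit} over all Borel $E$ are exactly the cut constraints certifying existence of a feasible flow that realizes the marginal $A$. The one point requiring care — and the reason the statement restricts to symmetric, deterministic $p$ — is that the eligibility set $\{i:p_i(\t)=1\}$ varies with $\t$; one must check that the relevant capacity function $E\mapsto \int\min\{|J^p(\t,E)|,k\}\,d\F(\t)$ is a (supermodular/submodular, as appropriate) set function of the required type so that the min-cut = max-flow duality of \citeasnoun{che13} applies verbatim. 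Concretely, $|J^p(\t,\cdot)|$ is a (finite, additive-then-truncated) measure-like function of $E$ for each fixed $\t$, so $\min\{|J^p(\t,E)|,k\}$ is submodular in $E$, and integrating over $\t$ preserves this; this is precisely the structural input their theorem needs. I would then invoke their Theorem/Corollary to conclude that \eqref{eq:border_audit} for all Borel $E$ is sufficient for $A$ to be induced by some feasible $a$. (Unlike Lemma \ref{lemma:border_alloc}, I would not claim here that it suffices to check $E=[e,1]$, since the eligibility structure of $p$ need not be monotone in type.) Finally, I would remark that the same argument, specialized appropriately, is the content of the independent finite-type result of \citeasnoun{valenzuela2022greedy}.
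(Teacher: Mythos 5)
Your overall strategy---recasting audits as objects allocated under the profile-dependent eligibility constraint $a_i(\t,s)\le p_i(\t)$ and adapting the network-flow characterization of \citeasnoun{che13}---is the same route the paper takes; the paper packages that step as Lemma \ref{lemma:border_general}, whose proof verifies submodularity of the node capacities and uses the approximation argument of \citeasnoun{che13} for continuous types. The genuine gap is in your sufficiency step: the flow/cut argument with profile-dependent eligibility does not directly deliver the inequalities \eqref{eq:border_audit} indexed by a single common Borel set $E$; it delivers the a priori stronger family of constraints indexed by agent-specific collections $E_1\sqcup\dots\sqcup E_n$ with possibly $E_i\neq E_j$ (this is exactly how Lemma \ref{lemma:border_general} is stated, and the paper's two-agent example with type-dependent capacities illustrates why, in this setting, one cannot casually shrink the family of test sets). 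You assert that the cut constraints ``are exactly'' the single-set inequalities, but that identification is precisely what has to be proved.

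The paper closes this gap with a symmetrization argument: it rewrites the violation (left-hand side minus right-hand side of the asymmetric constraint) as a function of the disjoint union $E_1\sqcup\dots\sqcup E_n$, verifies that this function is symmetric and supermodular (analogously to the proof of Lemma \ref{l:lottery_feasible}), and invokes Lemma \ref{lemma:symmetric_solution} to conclude that the maximal violation is attained at a symmetric collection $E_1=\dots=E_n=E$. Hence the symmetric constraints \eqref{eq:border_audit} imply all asymmetric ones, and sufficiency then follows from Lemma \ref{lemma:border_general}. Your submodularity observation about $E\mapsto\min\{|J^p(\t,E)|,k\}$ is in the right spirit, but it plays a different role: the submodularity that makes the \citeasnoun{che13} machinery apply is over agent-indexed node sets of the network, and it does not by itself justify restricting attention to symmetric test sets. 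Adding the supermodularity-plus-symmetry reduction (or an equivalent argument) is needed to complete your proof; your necessity direction and your remark that one cannot restrict to upper sets $[e,1]$ here are both correct and agree with the paper.
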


Observe that the condition $a(\t,s)\le p(\t,s)$ imposes constraints on which agents can be audited. \citeasnoun{che13} develop a powerful approach to characterize feasible interim rules that allow for complicated constraints on which subsets of agents can obtain an object based on their identities (for example, allowing for a constraint that at least one object is allocated to agents in $\{1,2\}$ etc.). We cannot use their results directly because the set of agents that can be audited depends not only on the identity of the agents but also on the realized type profile.  
However, adapting the proof technique used in \citeasnoun{che13} we obtain the above characterization of feasible interim audit rules. 

\medskip
We now simplify the principal's problem \eqref{P_int}. First, we formulate a parametrized linear problem by replacing the incentive constraint by $A(t)\ge P(t)-\varphi$, where $\varphi$ is a parameter. Second, we can assume that this constraint is always binding, i.e., for all $t$ it holds that $A(t)= P(t)-\varphi$, because we can always lower the audit probability without violating any other constraint or changing the objective value. Finally, we have already argued that \eqref{eq:upper_bound} is a necessary condition for $P$ to be feasible for problem \eqref{P_int}, and we only impose this necessary condition to obtain the following problem:
\begin{align}\label{relaxed_problem}
\max_{P}\  n \, \E_{t}[P(t) t]&\tag{R}\\
\text{s.\ t. } n \int^1_t P(x) dF(x)&\le c_{\varphi}(F(t)) \text{ for all } t\in T \label{eq:R_constraint} \\
\varphi&\le P(t) \text{ for all } t\in T \label{eq:IC}
\end{align}

We show that for any parameter $\varphi$, this relaxed problem \eqref{relaxed_problem} is solved by a merit-with-guarantee rule. 

\begin{lemma}\label{lemma:opt_P}
There is a merit-with-guarantee allocation rule that solves problem \eqref{relaxed_problem}.
\end{lemma}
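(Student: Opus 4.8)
The plan is to solve the relaxed problem \eqref{relaxed_problem} directly and verify that its solution coincides with the merit-with-guarantee allocation rule of Definition \ref{reduced:opt_mech}. The key structural observation is that \eqref{relaxed_problem} is a linear program over $P$: the objective $n\E_t[P(t)t]$ is linear, and the constraints \eqref{eq:R_constraint} and \eqref{eq:IC} are linear inequalities (one family indexed by $t$ for the cumulative allocation bound, and the pointwise lower bound $P(t)\ge\varphi$). Since higher types contribute more to the objective (the weight $t$ is increasing), optimality should push as much allocation probability as possible toward high types, subject to the cumulative caps $c_\varphi(F(t))$. I would first argue that there is no loss in taking $P$ non-decreasing: given any feasible $P$, its non-decreasing rearrangement (with respect to $F$) still satisfies \eqref{eq:IC}, weakly increases the objective by a rearrangement inequality, and continues to satisfy \eqref{eq:R_constraint} because the cumulative-tail constraint is exactly of the form controlled by majorization. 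With $P$ non-decreasing, feasibility in the original sense will later be recovered via Lemma \ref{lemma:border_alloc}, but for \eqref{relaxed_problem} only \eqref{eq:R_constraint} and \eqref{eq:IC} matter.

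Next I would identify the candidate solution as the ``greedy'' or ``ironing-type'' rule that makes \eqref{eq:upper_bound} bind wherever possible. Define $G(t):=n\int_t^1 P(x)\,dF(x)$; then $G(1)=0$, $G$ is non-increasing, and $P(t)=-\frac{1}{nf(t)}G'(t)$. The constraint \eqref{eq:R_constraint} says $G(t)\le c_\varphi(F(t))$, and $G(0)=n\E_t[P(t)]$ is the total expected allocation. Maximizing $n\E_t[P(t)t]=\int_0^1 t\,(-G'(t))\,dt=\int_0^1 G(t)\,dt$ (after integration by parts, using $G(1)=0$ and the boundary term vanishing) means we want $G$ as large as possible pointwise — so the optimum sets $G(t)=c_\varphi(F(t))$ for all $t$, provided the induced $P(t)=-\frac1n c_\varphi'(F(t))$ is feasible, i.e. non-decreasing and $\ge\varphi$. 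This is exactly the merit-with-guarantee rule. So the heart of the argument is a pointwise-domination claim: among all non-increasing $G$ with $G(1)=0$ and $G\le c_\varphi\circ F$, the one maximizing $\int_0^1 G$ is $G=c_\varphi\circ F$ itself — which is immediate once we know $c_\varphi\circ F$ is itself a legitimate candidate, namely non-increasing (true, since $c_\varphi$ is the lower envelope of functions that are non-increasing in $q$ over the relevant range — $c^{allo}$, $c_\varphi^{aud}$, $c_\varphi^{ic}$ are all non-increasing in $q$) with value $0$ at $t=1$ (check $c^{allo}(1)=0$, and the envelope equals $c^{allo}$ near $1$ by the footnote after the definition of $C^{allo}$).

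The remaining and most delicate step is verifying that the candidate $P(t)=-\frac1n c_\varphi'(F(t))$ actually satisfies \eqref{eq:IC}, i.e. $P(t)\ge\varphi$ for all $t$, and is well-defined (the right-derivative exists, $c_\varphi$ being a minimum of smooth functions is piecewise smooth). For the lower bound I would check it region by region using Corollary \ref{lemma:intervals}: on $C^{ic}$ we have $c_\varphi(F(t))=c_\varphi^{ic}(F(t))=m-nF(t)\varphi$, so $P(t)=-\frac1n\cdot(-n\varphi)=\varphi$, meeting the bound with equality; on $C^{aud}$, $P(t)=-\frac1n c_\varphi^{aud'}(F(t))$ which equals the interim audit-cap derivative plus $\varphi$ and must be shown $\ge\varphi$, i.e. the derivative of $\sum_i\min\{i,k\}\binom ni(1-q)^iq^{n-i}$ is non-positive — true since that sum is non-increasing in $q$; on $C^{allo}$, $P(t)=-\frac1n c^{allo'}(F(t))$, the efficient interim rule, which is a probability and exceeds $\varphi$ on this region precisely because $c^{allo}\le c_\varphi^{ic}$ there (comparing slopes via the definition of the envelope). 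The main obstacle I anticipate is this last verification that the envelope's derivative behaves correctly at the junction points $\gamma_1,\gamma_2,\gamma_3$ — in particular that taking right-derivatives at kinks does not violate monotonicity of $P$ or the bound $P\ge\varphi$ — which requires knowing that at each junction the constraint that is about to become binding has a weakly steeper (more negative) slope than the one ceasing to bind; this convexity/ordering-of-slopes fact is essentially the content of Corollary \ref{lemma:intervals} and the footnote comparing concavity of $c_\varphi^{aud}$, linearity of $c_\varphi^{ic}$, and the shape of $c^{allo}$, so I would lean on those. Once \eqref{eq:IC} and monotonicity are confirmed, the candidate is feasible for \eqref{relaxed_problem} and, by the pointwise-domination argument above, optimal, completing the proof.
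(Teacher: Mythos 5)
Your proposal is correct and follows essentially the same route as the paper: integration by parts turns the objective into $\int_0^1 G(t)\,dt$ with $G(t)=n\int_t^1 P(x)\,dF(x)$, pointwise domination by $c_\varphi\circ F$ gives optimality of the candidate $P=-\frac1n c_\varphi'(F(\cdot))$, and feasibility reduces to checking $P\ge\varphi$. The only difference is cosmetic: where you verify $P\ge\varphi$ region by region and worry about the junction points, the paper packages the same facts more cleanly by noting that $c_\varphi$, as a lower envelope of concave functions, is concave with $c_\varphi'(0)\le -n\varphi$, so $-\frac1n c_\varphi'\ge\varphi$ everywhere (and your rearrangement step is unnecessary, since problem (R) imposes no monotonicity on $P$).
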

The optimization problem \eqref{relaxed_problem} is indeed a relaxation of the principal's original problem \eqref{P_int}. Because take any solution to the original problem $(P^*,A^*)$ and let  $\varphi:=\inf_{t'}P^*(t')$. Then $P^*$ is feasible for the problem \eqref{relaxed_problem} for the parametrization $\varphi$ and the objective value coincides with the solution $(P^*,A^*)$ in the original problem \eqref{P_int}. This gives a lower bound on the objective value in the relaxed problem \eqref{relaxed_problem}. Therefore, if a merit-with-guarantee rule, the solution to the relaxed problem,  also is feasible in the original problem,  it has to be the optimal mechanism for the principal.     

\medskip

What remains to prove is that the solution to problem \eqref{relaxed_problem} together with its associated audit rule is a feasible solution for the original problem \eqref{P_int}. Thereby, implying that a merit-with-guarantee rule solves the original problem and is therefore the optimal mechanism for the principal. The difficulty here lies in showing that there exists a pair $(p,a)$ of feasible allocation and audit rules inducing the merit-with-guarantee rule. To do this we begin with establishing a Border-like result, Lemma \ref{lemma:border_general}, on the feasibility of interim rules with type-dependent ex-post constraints. This result is required both to prove the existence of the lottery rule $p^\ell$ given the merit rule $p^m$, and the audit rule $a$. For the lottery rule $p^\ell$ the number of objects available at the lottery stage varies across different profiles $\t$ and an agent with a type in $C^{aud}$ is only eligible to enter the lottery if he has not yet been allocated an object. Similarly, to prove the feasibility of the interim audit rule we require type-dependent ex-post constraints, Lemma \ref{lemma:border_audit} provides the precise result we need. 

Although the merit-with-guarantee interim rules are monotone, it does not suffice to check  feasibility constraints in Lemma \ref{lemma:border_audit} and \ref{lemma:border_general} only for upper sets. This is because the number of objects available to allocate (or the number of audits available) is effectively dependent on the type profile: how many objects are allocated in the merit stage depends on the type profile, and therefore the number of objects available in the lottery stage; similarly, because our formulation of incentive compatibility assumes that only agents receiving an object can be audited, the number of audits that can be conducted depends on the type profile. If the number of objects is type-dependent, it is not sufficient to check the border constraints only for upper sets.\footnote{Indeed, consider an example with two agents and two types each, $T=\{\underline{t}, \overline{t}\}$ with $ \underline{t}<\overline{t} $. Types are iid and both types are equally likely.  Further,  all agents can be assigned at most one object at all type profiles but the number of objects available at a given type profile varies.  The following table specifies how many objects are available:
\medskip
\begin{tabular}{c||c|c}
 &$ \underline{t}_2$ &$ \overline{t}_2$ \\
\hline
\hline
$ \underline{t}_1$ &0&1\\
\hline
$ \overline{t}_1$&2&0\\
\hline
\end{tabular}
\medskip

Now, consider the following monotone interim rule $P_1(\underline{t}_1)=P_1(\overline{t}_1)= P_2(\underline{t}_2)=0.25 $ and  $ P_2(\overline{t}_2)=0.5 $.  One can verify that the Border constraints are satisfied for all upper sets, but violated for the set $\{\underline{t},_1 \overline{t}_2 \}$. 
Thus, the interim rule is not feasible. }

\section{Conclusion}\label{sec:con}

We have analyzed the problem of allocating $m$ identical objects when only $k$ checks are available to verify agents' private information and monetary transfers cannot be used. The merit-with-guarantee rule maximizes the expected value from allocating the objects. This mechanism combines an efficient allocation rule that allocates to agents with high types, the merit-based part, with a second-stage lottery. A merit-with-guarantee rule is parametrized by a guarantee  $\varphi$ which determines the expected probability of winning an object at the lottery stage.  We provide a first-order condition that helps to find the optimal guarantee $\varphi$.  
 
 Our model takes as given that there is a fixed number $k$ of checks available to the principal.  Suppose instead that we could add checks at a cost before constructing the optimal mechanism.  An interesting question to ask is what is the optimal number of checks given a cost function for adding additional checks? 

There are other environments where capacity-constrained verification may arise. For instance, consider a situation where agents arrive sequentially over time and at each time an allocation decision together with a verification decision has to be made.  Suppose further that there is a finite time horizon,  a given number of objects to allocate,  and a fixed number of checks that can be carried out over the whole time period.  This is one example of another model where capacity-constrained verification may arise naturally.  Questions about the structure of optimal mechanisms in such a dynamic setting would be interesting to study.  Another example is collective decisions with capacity-constrained verification. These are also settings where a fixed number of checks can be the relevant constraint on the verification technology.

\newpage
\part*{}
\begin{appendix}
\section{Appendix}
\subsection{Characterization of BIC mechanisms}\label{app_BIC}

\begin{proof}[Proof of Lemma \ref{lemma:bic_simple}]
Let $i\in N$. By definition of BIC a mechanism $(p,a)$ in equation~\eqref{eq:BIC_def} and the fact that $u_i(t_i)>0$ for all $t_i\in T$  we get that agent $i$ with type $t_i\in T$ has no incentive to deviate if and only if, for all $t'_i\in T$,
\begin{align}\label{eq:bic_simp}
     \E_{\t_{-i},s}[p_i(t_i,\t_{-i},s)] &\ge \E_{\t_{-i},s}[p_i(t'_i,\t_{-i},s) [1-a_i(t'_i,\t_{-i},s)]].
\end{align}
Since $a_i(\t,s)=1$ only if $p_i(\t,s)=1$, the right-hand side equals $P(t_i')-A(t_i')$.
Since~(\ref{eq:bic_simp}) is required to hold for all $t_i\in T$, it must in particular hold for the infimum over $T$. Thus,  Definition~\ref{def:BIC} of BIC is equivalent to the equation in Lemma~\ref{lemma:bic_simple}. 
\end{proof}


\subsection{Characterization of feasible interim rules with type-dependent ex-post constraints}\label{app_reduc_ver_feasible}

We need to introduce some notation and definitions that we will use throughout the appendix. 

In contrast to the standard union the disjoint union indexes each element by the set it comes from. That is, for $D_1,D_2\subseteq [0,1]$, the disjoint union is defined as $D_1\sqcup D_2:= \{ (t,1)| t\in D_1\}\cup \{(t,2)| t\in D_2\}$, and analogously for disjoint unions of $n$ sets.

The next lemma characterizes interim rules in our setting with type-dependent ex-post constraints. Specifically, we assume that if type profile $\t$ realizes, $h(\t)$ objects can be allocated. Moreover, we assume only agents in the set $J(\t)$ can get an object. An allocation rule $p=(p_i)_{i\in N}:\T \rightarrow [0,1]^n$ is \emph{feasible} if, for all $\t\in\T$, $\sum_i p_i(\t)\le h(\t)$ and $p_i(\t)>0$ implies $i\in J(\t)$.
To formulate the lemma we need to consider  asymmetric settings  with cdfs $F_i:T_i\rightarrow[0,1]$ and defining correspondingly asymmetric interim rules $P_i:T_i\rightarrow [0,1]$ where   
$P_i(t):= \E_{\t_{-i}}[p_i(t,\t_{-i})]$.

\begin{lemma}\label{lemma:border_general}
Let $h(\t)$ be the total number of objects available, and let $J(\t)$ denote the set of agents that can receive an object,  which can both depend on the realized type profile. There is a feasible allocation rule $p=(p_i)_{i\in N}$ inducing an interim allocation rule $P=(P_i)_{i\in N}$ if and only if, for all  sets  $E=\bigsqcup\limits_{i\in N} E_i$, where each $E_i$ is a Borel subset of $T_i$, 
\[ \sum_i \int_{E_i} P_i(t) dF_i(t) \le \int \min\{|J(\t)\cap I(\t,E)|,h(\t)\} d\F(\t). \]     
\end{lemma}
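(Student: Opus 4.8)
The plan is to adapt the proof technique of \citeasnoun{che13} to the present setting, where both the capacity $h(\t)$ and the admissible set of agents $J(\t)$ are type-dependent. \textbf{Necessity} is the easy direction: if $p=(p_i)_{i\in N}$ is feasible and induces $P$, then for any set $E=\bigsqcup_{i\in N}E_i$ we have
\[
\sum_i \int_{E_i}P_i(t)\,dF_i(t) = \sum_i \int_{E_i}\E_{\t_{-i}}[p_i(t,\t_{-i})]\,dF_i(t) = \int \Big(\sum_i \ind\{t_i\in E_i\}\,p_i(\t)\Big)d\F(\t),
\]
and pointwise $\sum_i \ind\{t_i\in E_i\}p_i(\t)\le \min\{|J(\t)\cap I(\t,E)|,h(\t)\}$, since $p_i(\t)>0$ forces $i\in J(\t)$, $\sum_i p_i(\t)\le h(\t)$, and at most $|\{i: t_i\in E_i\}|=|I(\t,E)|$ of the indicator terms are nonzero while only those $i\in J(\t)$ contribute.

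\textbf{Sufficiency} is the substantive part, and I would establish it by an LP-duality / separating-hyperplane argument exactly in the spirit of \citeasnoun{che13}. First I would reduce to a finite type space (each $T_i$ finite) by a standard approximation/compactness argument, so that the set of feasible ex-post rules is a polytope and the set of induced interim rules is its image under a linear map, hence a polytope $\mathcal{P}$. Suppose $P\notin\mathcal{P}$; then there is a linear functional separating $P$ from $\mathcal{P}$, i.e. there exist weights $w_i:T_i\to\R$ such that $\sum_i\sum_{t\in T_i} w_i(t)f_i(t)P_i(t) > \max_{p\ \text{feasible}}\sum_i\sum_{t} w_i(t)f_i(t)\E_{\t_{-i}}[p_i(t,\t_{-i})]$. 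The key step is then to show that the maximizing feasible $p$ on the right can be taken to be a threshold-type rule: at each profile $\t$, rank the eligible agents $i\in J(\t)$ by $w_i(t_i)$, allocate the $\min\{|J(\t)|,h(\t)\}$ objects to those with the highest \emph{positive} weights, breaking ties arbitrarily. For such a rule the induced interim value of agent $i$ at type $t$ only ``sees'' the event that $i$ is eligible and $w_i(t)$ clears the relevant rank-cutoff, so the right-hand side rewrites as $\int \min\{|J(\t)\cap I(\t,E^w_+)|,h(\t)\}\,d\F(\t)$ where $E^w_+=\bigsqcup_i\{t\in T_i: w_i(t)>0\}$ — up to handling the tie-breaking and the fact that cutoffs vary with $\t$, which is where the argument is more delicate than in the symmetric fixed-capacity case. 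Comparing with the separation inequality and using $P\ge 0$ (so negative-weight coordinates only help the left side) then contradicts the hypothesized inequality for the set $E=E^w_+$.

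The main obstacle I anticipate is precisely this last reduction: showing that a weight-maximizing feasible allocation can be chosen to have the ``greedy by $w_i(t_i)$ among eligible agents'' form \emph{and} that its induced interim rule is expressible as $\int\min\{|J(\t)\cap I(\t,E)|,h(\t)\}d\F$ for the upper-contour set $E$. In \citeasnoun{che13} this rests on a supermodularity/greedy-exchange argument for the (poly)matroid of admissible allocations; here I must check that the family $\{S\subseteq N: S\subseteq J(\t),\ |S|\le h(\t)\}$ is, at each fixed $\t$, a matroid (it is — a truncated free matroid on $J(\t)$), and that the greedy algorithm on this matroid with objective $w$ delivers the max-weight base, with the subtlety that we only want to allocate to agents with strictly positive weight (handled by truncating at the first nonpositive weight) and that ties among equal weights can be broken by any fixed rule without changing the interim values except on a null set (which is why the finite-support reduction and a limiting argument are convenient). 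Once this greedy characterization is in hand, the bookkeeping identifying the right-hand side with the claimed integral, and the passage back from finite to general type spaces, are routine. I would also remark that Lemmas \ref{lemma:border_alloc}, \ref{lemma:border_audit}, and the assertion in Lemma \ref{lemma:border_alloc} that for non-decreasing $P$ it suffices to check upper sets, all follow as special cases or easy corollaries of Lemma \ref{lemma:border_general} (taking $h\equiv m$, $J\equiv N$; or $h\equiv k$, $J(\t)=\{i:p_i(\t)=1\}$; etc.), so proving the general lemma is the efficient route.
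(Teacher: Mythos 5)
Your necessity direction is fine and matches the paper's (implicit) easy direction. For sufficiency, however, the key identification you propose is incorrect as stated, and the final contradiction does not go through from the single set $E^w_+$. The value of the weight-maximizing feasible rule is not $\int\min\{|J(\t)\cap I(\t,E^w_+)|,h(\t)\}\,d\F(\t)$ unless the positive weights all equal one: at each profile $\t$ the greedy rule collects the \emph{sum} of the top positive weights among eligible agents, not their count. The correct statement is the layer-cake identity $\max_{p\ \text{feasible}}\sum_i\sum_t w_i(t)f_i(t)\E_{\t_{-i}}[p_i(t,\t_{-i})]=\sum_j\bigl(w_{(j)}-w_{(j+1)}\bigr)\int\min\{|J(\t)\cap I(\t,E_j)|,h(\t)\}\,d\F(\t)$, where the $E_j$ are the nested upper level sets of $w$ (this is where the truncated-free-matroid structure of $\{S\subseteq J(\t):|S|\le h(\t)\}$ is used: greedy simultaneously makes the count in \emph{every} level set equal to $\min\{|J(\t)\cap I(\t,E_j)|,h(\t)\}$). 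Because of this, a rule $P$ can satisfy the hypothesized inequality at $E^w_+$ and still be separated by $w$, so "contradicts the inequality for $E=E^w_+$" fails. The repair is standard: apply the hypothesized inequality to each $E_j$, multiply by the nonnegative increments $w_{(j)}-w_{(j+1)}$, sum, and contradict the separation directly (negative-weight coordinates are harmless since $P\ge 0$, as you note); together with your finite-to-continuum approximation this closes the argument.

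It is worth noting that the paper does not argue by separation at all. It follows \citeasnoun{che13} verbatim: construct their network with supply-node capacities $\f(\t)\min\{|I(\t,N'\cap D)\cap J(\t)|,h(\t)\}$, verify that this capacity function is submodular in $N'$ (the only genuinely new step, done by observing the right-hand increment is $0$ or $1$ and the latter forces the left-hand increment to be $1$), and then invoke their Theorems 1 and 3 (paramodular capacities and polymatroidal flows) plus their Theorem 5 approximation for continuum types. Your duality-plus-greedy route, once repaired as above, is essentially the alternative the paper itself acknowledges via \citeasnoun{valenzuela2022greedy}; it is more self-contained in spirit but requires you to carry the polymatroid/greedy machinery explicitly, whereas the paper's route reduces everything to a one-line submodularity check and imports the flow theorems.
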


We show that the same arguments as in \citeasnoun{che13} establish the claim even with type-dependent constraints as in Lemma \ref{lemma:border_general}. Alternatively, the result follows from a recent result in \citeasnoun[Theorem 5 for finitely many types]{valenzuela2022greedy} combined with an approximation argument.

\begin{proof}[Proof of Lemma \ref{lemma:border_general}]
We first prove the result for finite type spaces and then extend the result to continuous type spaces via an approximation argument. To this end, suppose first that $T$ is finite.

Construct a network as in  \citeasnoun[p. 2495f]{che13}. The upper capacity of a supply node is given by $k(\t,N'):= \f(\t) \min\{|I(\t, N'\cap D)\cap J(\t)|,h(\t)\}$. All lower capacities are set to $0$, and all remaining capacities are defined as in \citeasnoun{che13}.
    
Next, verify that $k$ is submodular. To do so, it is enough to show that for any $N'\subset N'' \subset  N $, and $n\in N\setminus N''$,
\[
k(\t,N'\cup\{n\})-k(\t,N')\geq k(\t,N''\cup\{n\})-k(\t,N''),
\]
which is equivalent to 
\[
\begin{split}
& \min\{|I(\t, (N'\cup\{n\}) \cap D)\cap J(\t)|,h(\t)\}- \min\{|I(\t, N' \cap D)\cap J(\t)|,h(\t)\}\geq\\
& \min\{|I(\t, (N''\cup\{n\}) \cap D)\cap J(\t)|,h(\t)\}- \min\{|I(\t, N'' \cap D)\cap J(\t)|,h(\t)\}.  
\end{split}
\]
Note that the right-hand side of this inequality takes on the value 0 or 1. If it takes on the value 1, then the left-hand side must also take on the value 1, and the inequality follows. 

 Therefore, analogous to Theorem 1 in \citeasnoun{che13}, it follows that an interim rule $P$ is feasible if and only if there is a feasible flow. Since $k$ is submodular, the constraints are paramodular, and the same argument as in their Theorem 3 completes the proof.

 Finally, the approximation argument used to prove Theorem 5 in \citeasnoun{che13} establishes the conclusion for our setting with infinitely many types.
\end{proof}

\subsection{A merit-with-guarantee allocation rule solves the relaxed problem}

\begin{proof}[Proof of Lemma~\ref{lemma:opt_P} ]
Note that $c^{allo}(1)=c^{aud}_{\varphi}(1)=0$, $c^{allo}(0)=c^{ic}_{\varphi}(0)=m$, and all $c^{i}$'s are non-decreasing and concave. Hence, $c$ is concave and $P$ is well-defined and non-decreasing.

\emph{Step 1: Feasibility} 

\noindent
Given arbitrary $t\in T$, 
\begin{align}
 n \int_{t}^1 P(x) dF(x)= n\int_t^1 (-\frac{1}{n}c'_\varphi(F(x)))dF(x) = c_\varphi(F(t))-c_\varphi(F(1)). \label{eq:P_equality}
\end{align}
 Since $c_\varphi(1)=0$, we conclude that \eqref{eq:R_constraint} holds.

It follows from the definition of $c_{\varphi}$ that $c_{\varphi}'(0)\le -n\varphi$ (since $c^{allo}(0)=m=c_{\varphi}^{ic}(0)$ and the derivatives of $c_{\varphi}^{ic}$ and $c_{\varphi}^{aud}$ are less than $-n\varphi$). Because $c_{\varphi}$ is concave, we obtain
\[P(t)\equiv -\frac{1}{n} c'(F(t))\ge \varphi \]
for all $t\in T$.
Thus, \eqref{eq:IC} holds, and we conclude that $P$ is feasible for problem \eqref{relaxed_problem}. 

\medskip

\emph{Step 2: Optimality}

\noindent
We first argue that any feasible interim rule $\tilde{P}$ satisfies $n\int_{t}^1 \tilde{P}(x) dF(x) \le m- n \varphi F(t)$.
Indeed, since $\tilde{P}(t)\ge \varphi$ for all $t$, \eqref{eq:border_allocation} implies $n \int_0^{t} \varphi dF(x) +   n\int_{t}^1 \tilde{P}(x) dF(x) \le m$, or 
\begin{equation} 
n\int_{t}^1 \tilde{P}(x) dF(x) \le m- n \varphi F(t). \label{eq:border_plus_bic}
\end{equation}

Given any feasible interim rule $\tilde{P}$, we use integration by parts to get the following upper bound for the objective function in \eqref{relaxed_problem}:
  \begin{align*}
    &n \int_{0}^{1} f(t) \ \tilde{P}(t)\ t\ dt \\
    = & n \left. t  \int_{0}^{t} f(x)\ \tilde{P}(x)\  dx \right|_{t=0}^{1} - n \int_{0}^{1} \int_{0}^{t} f(x)\ \tilde{P}(x)\  dx \ dt \\
    = & n \int_0^{1}  \ \int_{0}^{1} f(x)\ \tilde{P}(x)\  dx \ dt - n \int_{0}^{1} \int_{0}^{t} f(x)\ \tilde{P}(x)\  dx \ dt \\
    = &n \int_{0}^{1} \int_{t}^{1} f(x)\ \tilde{P}(x)\  dx \ dt \\
    \le & \int_0^1 c_\varphi(F(t)) d t
  \end{align*}
where the last inequality follows from \eqref{eq:R_constraint}. It follows from \eqref{eq:P_equality} that the last inequality holds as an equality for $P$.  Therefore, $P$ is an optimal solution to \eqref{relaxed_problem}.
\end{proof}

\subsection{Properties of the $c$ function and the constraint set $C^i$'s}\label{App:prop_c}

\renewcommand{\r}{\mathbf{r}}

\begin{lemma}\label{lemma:c_functions}
Let $q=F(t_i)$. The derivatives of $c^{allo}$ and $c_\varphi^{aud}$ are given by,
\begin{align}
\frac{d}{dq}(c^{allo}(q)) &= -n \sum_{j=0}^{m-1} \binom{n-1}{j} q^{n-1-j} (1-q)^{j}\label{eq:c_1_diff} \\
\frac{d}{dq}(c^{aud}_\varphi(q)) &= -n \sum_{j=0}^{k-1} \binom{n-1}{j} q^{n-1-j} (1-q)^{j} -n\varphi  \label{eq:c_2_diff}.
\end{align}
Also, $c^{allo}(0)=c^{ic}_{\varphi}(0)=m$ and $c^{allo}(1)=c^{aud}_{\varphi}(1)=0$. If $\varphi\ge \frac{m-k}{n}$ then $c^{aud}_\varphi(0)\ge c^{allo}(0)$.
\end{lemma}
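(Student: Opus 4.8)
The plan is to verify each assertion in Lemma~\ref{lemma:c_functions} directly from the definitions of $c^{allo}$, $c^{aud}_\varphi$, and $c^{ic}_\varphi$. First I would compute the derivative of $c^{allo}(q) = \sum_{i=1}^n \min\{i,m\}\binom{n}{i}(1-q)^i q^{n-i}$. The cleanest route is to split the sum at $i=m$: for $i<m$ the coefficient is $i$ and for $i\ge m$ it is $m$. Differentiating term by term and collecting, one uses the standard binomial identity $i\binom{n}{i} = n\binom{n-1}{i-1}$ together with the telescoping that occurs when differentiating $(1-q)^i q^{n-i}$. Equivalently — and this is probably the slicker argument — one recognizes $c^{allo}(q)$ as $\E[\min\{X,m\}]$ where $X\sim\mathrm{Bin}(n,1-q)$, writes $\min\{X,m\} = \sum_{j=0}^{m-1}\ind\{X>j\}$, so that $c^{allo}(q) = \sum_{j=0}^{m-1}\P(X > j) = \sum_{j=0}^{m-1}\sum_{i=j+1}^n \binom{n}{i}(1-q)^i q^{n-i}$, and then differentiates each tail probability; the derivative of a binomial tail telescopes to a single term, giving $\frac{d}{dq}\P(X>j) = -n\binom{n-1}{j} q^{n-1-j}(1-q)^j$. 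Summing over $j$ from $0$ to $m-1$ yields \eqref{eq:c_1_diff}.

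For \eqref{eq:c_2_diff}, I would apply exactly the same tail-sum representation to the first summand of $c^{aud}_\varphi(q) = \sum_{i=1}^n\min\{i,k\}\binom{n}{i}(1-q)^i q^{n-i} + n(1-q)\varphi$, obtaining $-n\sum_{j=0}^{k-1}\binom{n-1}{j} q^{n-1-j}(1-q)^j$ for that part, and then simply differentiate the linear term $n(1-q)\varphi$ to get $-n\varphi$. Adding these gives the stated expression.

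The boundary values are immediate: at $q=0$ only the $i=n$ term of $c^{allo}$ survives, contributing $\min\{n,m\} = m$ (since $m<n$), and $c^{ic}_\varphi(0) = m - 0 = m$; at $q=1$ every term of $c^{allo}$ with $i\ge 1$ carries a factor $(1-q)^i = 0$, so $c^{allo}(1)=0$, and likewise the sum in $c^{aud}_\varphi$ vanishes while $n(1-q)\varphi = 0$ at $q=1$, giving $c^{aud}_\varphi(1)=0$. Finally, for the last claim, evaluate $c^{aud}_\varphi(0)$: the binomial sum at $q=0$ reduces to $\min\{n,k\}\binom{n}{n} = k$ (since $k<m<n$), and the linear term is $n\varphi$, so $c^{aud}_\varphi(0) = k + n\varphi$; this is $\ge m = c^{allo}(0)$ precisely when $n\varphi \ge m-k$, i.e.\ $\varphi \ge \frac{m-k}{n}$.

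I do not expect a genuine obstacle here — the lemma is a collection of routine computations. The only step requiring a little care is the differentiation of the binomial tails in \eqref{eq:c_1_diff} and \eqref{eq:c_2_diff}: one must check that the telescoping is carried out correctly (the derivative of $\sum_{i=j+1}^n\binom{n}{i}(1-q)^i q^{n-i}$ with respect to $q$ collapses, after cancellation of the internal terms, to $-n\binom{n-1}{j} q^{n-1-j}(1-q)^j$), and that the index ranges match those in the statement. This is best done either via the probabilistic identity above or, alternatively, by directly verifying that the claimed derivative integrates back to $c^{allo}$ (respectively $c^{aud}_\varphi$) using the boundary values just established.
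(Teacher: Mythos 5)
Your proof is correct, and it takes a genuinely more direct route than the paper's. The paper writes $c^{allo}$ in terms of the binomial CDF $G(i;n,1-q)$, computes $\partial G/\partial q$ via the incomplete-Beta integral representation, and then proves \eqref{eq:c_1_diff} by induction on the number of objects $m$ (observing that passing from $m$ to $m{+}1$ adds exactly the tail $1-G(m;n,1-q)$), with an analogous induction for $c^{aud}_\varphi$. You instead use the layer-cake identity $\min\{X,m\}=\sum_{j=0}^{m-1}\ind\{X>j\}$, so that $c^{allo}(q)=\sum_{j=0}^{m-1}\P(X>j)$ with $X\sim\mathrm{Bin}(n,1-q)$, and differentiate each tail, where the telescoping $\frac{d}{dq}\sum_{i=j+1}^{n}\binom{n}{i}(1-q)^i q^{n-i}=-n\binom{n-1}{j}q^{n-1-j}(1-q)^j$ (via $i\binom{n}{i}=n\binom{n-1}{i-1}$ and $(n-i)\binom{n}{i}=n\binom{n-1}{i}$) does all the work; summing over $j$ gives \eqref{eq:c_1_diff} in one pass and the same argument handles the binomial part of $c^{aud}_\varphi$, with the linear term contributing $-n\varphi$. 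In substance both arguments rest on the same single-term collapse of the derivative of a binomial tail, but your decomposition avoids the induction entirely and treats $m$ and $k$ uniformly, which is slightly cleaner; the paper's route makes the connection to the binomial CDF and its Beta-integral form explicit, which it reuses when computing second derivatives in Lemma \ref{lemma:intervals_new}. Your treatment of the boundary values and of $c^{aud}_\varphi(0)=k+n\varphi\ge m=c^{allo}(0)$ iff $\varphi\ge\frac{m-k}{n}$ matches the paper's. If you write this up, the only step to spell out is the telescoping computation itself (or your suggested alternative of integrating the claimed derivative back using the boundary values), since that is the one place where an index error could slip in.
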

\begin{proof}[Proof of Lemma \ref{lemma:c_functions}]
Note first that,  
\begin{align}
 \int \min\{(|I(\t,[t_i,1])|,m)\  d\F(\t)&=\sum_{i=1}^n\min\{i,m\}\binom{n}{i}(1-q)^{i}q^{n-i} = c^{allo}(q) \\
 \int \min\{|(I(\t,[t_i,1])|,k)\  d\F(\t)&=\sum_{i=1}^n\min\{i,k\}\binom{n}{i}(1-q)^{i}q^{n-i} = c^{aud}_\varphi(q)- n(1-q)\varphi
\end{align}
where $q=F(t_i)$. We will prove equality \eqref{eq:c_1_diff} for  $c^{allo}$, and an analogous argument can be used to prove equality \eqref{eq:c_2_diff} for $c_\varphi^{aud}$.

Let $G(i;n,\rho)$ denote the CDF of the binomial distribution, where $i$ is the number of success in $n$ independent Bernoulli trials, and $\rho$ is the probability of success.  The CDF of the  binomial distribution can be written as,
\begin{align*}
G(i;n,\rho) &= (n-i) \binom{n}{i}\int_0^{1-\rho} t^{n-i-1} (1 - t)^{i} \ dt\\
\text{ and }\frac{\partial G(i;n,\rho)}{\partial \rho} &= -(n-i)\binom{n}{i}(1-\rho)^{n-i-1}\rho^i= -n \binom{n-1}{i}(1-\rho)^{n-i-1}\rho^i.
\end{align*}
Now, using that the probability of success $\rho=1-F(t_i)=1-q$ 
we can express $c^{allo}(q)$ using the binomial distribution,
\begin{align*}
c^{allo}(q)=\sum_{i=1}^m i [G(i;n,1- q)-G(i-1;n,1- q)]+ m [1-G(m;n,1- q)].
\end{align*}
It then follows that, 
\begin{align}
G(i;n,1-q)&=(n-i) \binom{n}{i}\int_0^q t^{n-i-1} (1 - t)^{i} \ dt\nonumber \\
\label{eq:c1_Gdiff} \frac{\partial G(i;n,1-q)}{\partial q} &= n\binom{n-1}{i}q^{n-i-1}(1-q)^i.
\end{align}
To establish equation \eqref{eq:c_1_diff} that gives us the derivative of $c^{allo}(q)$, we use induction over the number of objects $m$.
The claim in \eqref{eq:c_1_diff} is trivially true for $m=1$.
\medskip

\textit{Induction hypothesis:}  $\frac{d}{dq}(c^{allo}(q))= -n\sum_{i=0}^{m-1} \binom{n-1}{i} q^{n-1-i} (1- q)^{i}  $ for all $m$.
\smallskip

\noindent
We are going to show that if the induction hypothesis holds for $m$ then it also holds for $m+1$.
We will show this by looking at the difference between $c^{allo}(q|m+1 \text{ objects})$ and $c^{allo}(q|m \text{ objects})$ and then differentiating wrt to $q$ using equation \eqref{eq:c1_Gdiff}. That is,  
\begin{align*}
c^{allo}(q|m+1 )- c^{allo}(q|m )=& \sum_{i=1}^{m+1} i [G(i;n,1- q)-G(i-1;n,1- q)]+ (m+1 )[1-G(m+1;n,1- q)]\\
- &\Big[ \sum_{i=1}^m i [G(i;n,1- q)-G(i-1;n,1- q)]+ m [1-G(m;n,1- q)]\Big] \\
=& 1-G(m;n,1-q)
\end{align*}
Now, differentiating with respect to $q$ gives,
\begin{align*}
\frac{d}{dq}\big(c^{allo}(q|m+1 )- c^{allo}(q|m )\big)=&-\frac{\partial G(m;n,1-q)}{\partial q} =- (n-m)\binom{n}{m}q^{n-m-1}(1-q)^{m}\\
=&-n \binom{n-1}{m}q^{n-m-1}(1-q)^{m}.
\end{align*}
By the induction hypothesis, $\frac{d}{dq}(c^{allo}(q))= -n\sum_{i=0}^{m-1} \binom{n-1}{i} q^{n-1-i} (1- q)^{i} $. Thus,  from the above equality, we get that also for $m+1$ objects equation \eqref{eq:c_1_diff} holds.
In an analogous argument, also using proof by induction, we can show that equation \eqref{eq:c_2_diff} for $\frac{d}{dq}(c_\varphi^{aud}(q))$ holds. 

Note also that,
$ c^{allo}(0)=m \text{ and } c^{aud}_\varphi(0)=k+n\varphi $. Thus, $c^{aud}_\varphi(0)\geq c^{allo}(0)$ for $\varphi\geq \frac{m-k}{n}$ as claimed in the lemma.
\end{proof}

 \begin{lemma}\label{lemma:intervals_new}
 Let $\varphi\in[0,\frac{m}{n}]$.  
 This gives us the following four different structures of the lower envelope of $c$.
\begin{itemize}
\item[(1)] For $\varphi\in[0,\frac{m-k}{n}]$ there is $r(\varphi)\in[0,1]$ such that: $c_{\varphi}^{aud}(r(\varphi))=c^{allo}(r(\varphi))$, 

  $C^{ic}=\emptyset$,  $C^{aud}=[0,r(\varphi)]$,  and  $C^{allo}=[r(\varphi),1]$.
  
\item[(2)] For $\varphi\in (\frac{m-k}{n},\overline{\varphi}]$ there are $r_1(\varphi), r_2(\varphi)$ with   $0<r_1(\varphi)<r_2(\varphi)<1$,  and $z_1(\varphi),z_2(\varphi)\in [0,1]$ such that: 

$c_{\varphi}^{aud}(r_i(\varphi))=c^{allo}(r_i(\varphi))$,  for $i=1,2$,  $c_{\varphi}^{ic}(z_1(\varphi))=c^{allo}(z_1(\varphi))$,  $c_{\varphi}^{ic}(z_2(\varphi))=c_{\varphi}^{aud}(z_2(\varphi))$.

\begin{itemize}

\item [(2.a)] If $z_1(\varphi)< r_1(\varphi) $, then

   $C^{ic}=[0,z_2(\varphi)]$,  $C^{allo}=[z_2(\varphi), r_1(\varphi)]\cup [r_2(\varphi),1]$, and $C^{aud}=[r_1(\varphi),r_2(\varphi)]$.   

\item [(2.b)] If If $z_1(\varphi)\geq r_1(\varphi) $, then 

 $C^{ic}=[0,z_2(\varphi)]$,  $C^{aud}=[z_2(\varphi),r_2(\varphi)]$,  and  $C^{allo}=[r_2(\varphi),1]$.

\end{itemize}
\item [(3)]  For $\varphi\in(\overline{\varphi},\frac{m}{n}]$   there is $z_1(\varphi)\in(0,1]$ such that:
$c_{\varphi}^{ic}(z_1(\varphi))=c^{allo}(z_1(\varphi))$,   
 $C^{ic}=[0,z_1(\varphi)]$,  $C^{allo}=[z_1(\varphi), 1]$, and $C^{aud}=\{1\}$.  
\end{itemize}
 \end{lemma}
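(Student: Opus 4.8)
The plan is to analyze the lower envelope $c_\varphi = \min\{c^{allo}, c^{aud}_\varphi, c^{ic}_\varphi\}$ by exploiting the curvature and boundary properties collected in Lemma~\ref{lemma:c_functions}: $c^{allo}$ and $c^{aud}_\varphi$ are concave, $c^{ic}_\varphi$ is linear (with slope $-n\varphi$), $c^{allo}(0)=c^{ic}_\varphi(0)=m$, $c^{allo}(1)=c^{aud}_\varphi(1)=0$, $c^{aud}_\varphi(0)=k+n\varphi$, and $c^{ic}_\varphi(1)=m-n\varphi$. First I would sort out the pairwise comparisons. The difference $c^{aud}_\varphi - c^{ic}_\varphi$ is concave (concave minus linear), equals $k+n\varphi-m\le 0$ at $q=0$ when $\varphi\le\frac{m-k}{n}$ and $\ge 0$ there when $\varphi\ge\frac{m-k}{n}$, and at $q=1$ equals $-(m-n\varphi)\le 0$ since $\varphi\le\frac{m}{n}$; so this difference has at most one interior sign change from $+$ to $-$, pinning down $z_2(\varphi)$ (or showing $c^{ic}_\varphi$ never beats $c^{aud}_\varphi$ when $\varphi\le\frac{m-k}{n}$). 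The difference $c^{allo}-c^{ic}_\varphi$ is concave, vanishes at $q=0$, and at $q=1$ equals $-(m-n\varphi)\le0$; being concave and starting at $0$ with value $\le 0$ at the other end, it is $\ge 0$ on an initial interval $[0,z_1(\varphi)]$ and $\le 0$ afterwards (with $z_1(\varphi)=0$ precisely when the right-derivative at $0$ is $\le 0$). This already gives that $C^{ic}$ is an initial interval $[0,\gamma_1]$ and is empty for $\varphi\le\frac{m-k}{n}$ — matching case (1) and the left endpoints of (2) and (3).

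Next I would handle $c^{allo}$ versus $c^{aud}_\varphi$, which is the crux. Using the explicit derivatives \eqref{eq:c_1_diff}–\eqref{eq:c_2_diff}, the difference $D(q):=c^{allo}(q)-c^{aud}_\varphi(q)$ has derivative $D'(q) = -n\sum_{j=k}^{m-1}\binom{n-1}{j}q^{n-1-j}(1-q)^{j}+n\varphi$. The key observation is that the sum $\sum_{j=k}^{m-1}\binom{n-1}{j}q^{n-1-j}(1-q)^{j}$, viewed as a function of $q$ on $[0,1]$, is single-peaked (unimodal) — it is a sum of consecutive binomial probability terms, so it rises then falls, vanishing at both endpoints; hence $D'(q)$ is single-dipped (falls then rises), meaning $D$ is first convex-ish then concave-ish in the precise sense that $D'$ crosses any horizontal level at most twice. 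Combined with $D(1)=0$ and the sign of $D(0)=m-k-n\varphi$, this forces $D$ to change sign at most twice on $[0,1]$: once (from $+$ to $-$) when $\varphi\le\frac{m-k}{n}$, giving the single crossing $r(\varphi)$ of case (1); and at most twice (from $-$ to $+$ to $-$, or just once) when $\varphi>\frac{m-k}{n}$, giving $r_1(\varphi)<r_2(\varphi)$ of case (2) or just $r_2(\varphi)$ in the degenerate sub-case and case (3). The technical heart is establishing the unimodality of the truncated sum and translating it into the "at most two sign changes of $D$" statement; this is the main obstacle, and I expect to prove the unimodality by examining the ratio of consecutive terms $\binom{n-1}{j}q^{n-1-j}(1-q)^j$ and showing the summand sequence itself is unimodal in $j$ uniformly, or alternatively by a direct telescoping argument on $D'$ using the identity relating $\frac{d}{dq}G(i;n,1-q)$ to a single binomial term.

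With the three pairwise comparisons understood, the final step is bookkeeping: assemble $C^{allo}, C^{aud}, C^{ic}$ from the sign patterns, define $\overline{\varphi}$ as the threshold value of $\varphi$ at which the structure switches between the two-crossing regime $r_1<r_2$ both relevant (case (2)) and the regime where $z_1(\varphi)\ge r_1(\varphi)$ collapses one interval (sub-case 2.b) or $C^{aud}$ degenerates to $\{1\}$ (case (3)). Concretely, $\overline{\varphi}$ is characterized by the tangency/coincidence condition making $z_1(\varphi)$, $r_1(\varphi)$, or the double root of $D$ merge; I would define it as the supremum of $\varphi$ for which $D$ still has two interior roots with $z_1(\varphi)<r_1(\varphi)$, check that this set is an interval by monotone-comparative-statics in $\varphi$ (note $\frac{\partial}{\partial\varphi}c^{aud}_\varphi>0$ and $\frac{\partial}{\partial\varphi}c^{ic}_\varphi(q)=-nq<0$, so increasing $\varphi$ pushes $c^{aud}_\varphi$ up and $c^{ic}_\varphi$ down, moving the crossings monotonically), and read off cases (2.a), (2.b), (3) accordingly. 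The monotonicity in $\varphi$ also delivers continuity of the $r_i(\varphi), z_j(\varphi)$ and the nesting $0\le\gamma_1\le\gamma_2\le\gamma_3\le1$ claimed in Corollary~\ref{lemma:intervals}. I do not expect any single calculation here to be hard once the unimodality lemma is in hand; it is the organizing geometric fact that does all the work.
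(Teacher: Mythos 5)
Your proposal follows essentially the same route as the paper: pairwise comparisons, with the linear-vs-concave pairs handled by concavity and endpoint values, and the crux $c^{allo}$ vs.\ $c^{aud}_\varphi$ handled by showing that $D'(q)=-n\sum_{j=k}^{m-1}\binom{n-1}{j}q^{n-1-j}(1-q)^j+n\varphi$ is single-dipped, so $D$ has at most three zeros and, since $D(1)=0$, at most two in $[0,1)$. Your unimodality claim for the truncated binomial sum is exactly the paper's observation that the bracketed factor in $\frac{d^2}{dq^2}(c^{allo}-c^{aud}_\varphi)$ is decreasing in $q$, and it is provable just as you sketch.

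One step as stated does not go through: for $\varphi\le\frac{m-k}{n}$ you infer that $c^{ic}_\varphi$ ``never beats'' $c^{aud}_\varphi$ from concavity of $c^{aud}_\varphi-c^{ic}_\varphi$ together with its being $\le 0$ at both endpoints. A concave function that is nonpositive at $q=0$ and $q=1$ can still be positive in the interior, so this argument fails (and this is the fact you need for $C^{ic}=\emptyset$ in case (1)). The correct and immediate fix, which the paper uses, is that $\frac{d}{dq}c^{aud}_\varphi(q)\le\frac{d}{dq}c^{ic}_\varphi(q)=-n\varphi$ for all $q$, so the difference $c^{aud}_\varphi-c^{ic}_\varphi$ is non-increasing and, being $\le 0$ at $q=0$, stays $\le 0$; you already have the derivative formulas needed for this. (For $\varphi>\frac{m-k}{n}$ your concavity argument is fine, since the difference is then $\ge 0$ at $q=0$ and the superlevel set $\{c^{aud}_\varphi\ge c^{ic}_\varphi\}$ is an interval containing $0$.)
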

\begin{proof}
Let us first look at the second derivatives of $c^{allo}$ and $c_{\varphi}^{aud}$, and let  $q=F(t)$.  It follows from Lemma \ref{lemma:c_functions} that 
\begin{align*}
\frac{d}{dq}(c^{allo}(q))= -n \sum_{j=0}^{m-1} \binom{n-1}{j} q^{n-1-j} (1-q)^{j}=-n G(m-1,n-1,1-q).
\end{align*}
Thus, we obtain that,
\begin{align*}
\frac{d^2}{dq^2}(c^{allo}(q)) &= -n \frac{\partial G(m-1,n-1,1-q)}{\partial q} = -n(n-m)\binom{n-1}{m-1} q^{n-m-1} (1-q)^{m-1} \\ 
&= -nm\binom{n-1}{m} q^{n-m-1} (1-q)^{m-1}.
\end{align*}
Again,  using Lemma \ref{lemma:c_functions} for $c_\varphi^{aud}$ we have:
\begin{align*}
\frac{d}{dq}(c_{\varphi}^{aud}(q))= -n \sum_{j=0}^{k-1} \binom{n-1}{j} q^{n-1-j} (1-q)^{j}-n\varphi=-n G(k-1,n-1,1-q) -n\varphi.
\end{align*}
Analogously to above we obtain that,
\begin{align*}
\frac{d^2}{dq^2} (c^{aud}_\varphi(q) &= -nk\binom{n-1}{k} q^{n-k-1} (1-q)^{k-1}.
\end{align*}
 Both second derivatives are negative and we conclude that $c^{allo}$ and $c_{\varphi}^{aud}$ are concave functions.  Since $c_{\varphi}^{ic}$ is a linear function,   $c_{\varphi}^{ic}-c^{allo}$ and $c_{\varphi}^{ic}-c_{\varphi}^{aud}$ crosses zero at most once. 

Since  $c_{\varphi}^{ic}(0)=m=c^{allo}(0)$ and $c_{\varphi}^{ic}(1)\geq 0=c^{allo}(1)$, there is  $z_1(\varphi)\in[0,1] $ such that:
\begin{align}\label{eq:ic_allo}
c_\varphi^{ic}(F(t))  &\le c^{allo}(F(t))    &\text{ for all } t\in [0, z_1(\varphi)] \nonumber\\
 c^{allo}(F(t)) &\le c_{\varphi}^{ic}(F(t))   &\text{ for all } t\in [z_1(\varphi), 1].  
\end{align}
Where, $z_1(0)=0$ and $z_1(\frac{m}{n})=1$ and for all other $\varphi\in(0,\frac{m}{n})$ we have that $z_1(\varphi)\in(0,1)$.

For $c_\varphi^{ic}- c_\varphi^{aud}$,  there is $z_2(\varphi)\in[0,1]$ such that: 
\begin{align}\label{eq:ic_aud}
c_\varphi^{ic}(F(t))  &\le c_\varphi^{aud}(F(t))    &\text{ for all } t\in [0, z_2(\varphi)]\nonumber  \\
 c_\varphi^{aud}(F(t)) &\le c_\varphi^{ic}(F(t))   &\text{ for all } t\in [z_2(\varphi), 1]. 
 \end{align}
Where,  $z_2(\varphi)=0$ if $\varphi\in[0,\frac{m-k}{n}]$ and $z_2(\varphi)\in (0,1]$ if $\varphi\in(\frac{m-k}{n},\frac{m}{n}]$. 
Because if $\varphi\in[0,\frac{m-k}{n}]$, then $c_{\varphi}^{aud}(0)\leq m=c_{\varphi}^{ic}(0)$, and since we always have that $\frac{d}{dq}(c_{\varphi}^{aud}(q))<\frac{d}{dq}(c_{\varphi}^{ic}(q))$ for all $q<1$.  For the other case with  $\varphi\in(\frac{m-k}{n},\frac{m}{n}]$ we have that  $c_{\varphi}^{aud}(0)>m=c_{\varphi}^{ic}(0) $ and  $c_{\varphi}^{aud}(1)=0\leq c_{\varphi}^{ic}(1)$ implying that $z_2(\varphi)\in(0,1]$.

Next,  let us look at  $c^{allo}-c_{\varphi}^{aud} $.  The difference between two concave functions can have arbitrary many crossings.  However, by examining the number of zeros of the second derivative of  $c^{allo}-c_{\varphi}^{aud} $ we will show that there can be at most three zeros of $c^{allo}-c_{\varphi}^{aud} $ on $[0,1]$.  
Now,  consider the second derivative of  $c^{allo}-c_{\varphi}^{aud} $,
\begin{align*}
\frac{d^2}{dq^2}  (c^{allo}(q)-c^{aud}_\varphi(q)) &= -n q^{n-m-1}(1-q)^{k-1} \Big[ m\binom{n-1}{m} (1-q)^{m-k} - k \binom{n-1}{k} q^{m-k} \Big].
\end{align*}
Note that the term in brackets is decreasing in $q$ and, hence,  $\frac{d^2}{dq^2}  (c^{allo}(q)-c^{aud}_\varphi(q))$ crosses zero at most once. It follows that $ \frac{d}{dq}(c^{allo}(q)-c^{aud}_\varphi(q))$ has at most two zeros on $[0,1]$,  which implies that $c^{allo}-c^{aud}_\varphi$ has at most three zeros on $[0,1]$.  Since $c^{allo}(1)=c^{aud}_\varphi(1)$, there are at most two zeros of $c^{allo}-c^{aud}_\varphi$ on $[0,1)$.
Note further that $(i)$: for all $\varphi\in(0,\frac{m}{n}]$,  $\frac{d}{dq}(c^{allo}(q)) > \frac{d}{dq} (c^{aud}_\varphi(q))$ for $q=1$,  and $(ii)$: 
$c^{allo}(q)\le c^{aud}_\varphi(q)$ for all $q$ close to 1.  Thus,  $c^{allo}(q)<c_{\varphi}^{aud}(q)$ for $q$ close to 1, for any $\varphi\in (0,\frac{m}{n}]$.

Summarizing,  there are numbers $r_1(\varphi),r_2(\varphi)\in[0,1]$ with $r_1(\varphi)\le r_2(\varphi)$ such that:
\begin{align}\label{eq:allo_aud}
c^{allo}(F(t))  &\le c_\varphi^{aud}(F(t))    &\text{ for all } t\in [0, r_1(\varphi)]\nonumber  \\
 c_\varphi^{aud}(F(t)) &\le c^{allo}(F(t))   &\text{ for all } t\in [r_1(\varphi), r_2(\varphi)]\\
 c^{allo}(F(t)) &\le c_\varphi^{aud}(F(t))   &\text{ for all } t\in [r_2(\varphi), 1]. \nonumber
 \end{align}
Where,  $r_1(\varphi)=0$ if $\varphi\in(0,\frac{m-k}{n}]$, since $c_{\varphi}^{aud}(0)\le c^{allo}(0) $ for $\varphi\in(0,\frac{m-k}{n}]$,  $c^{allo}(q)<c_{\varphi}^{aud}(q)$ for $q$ close to 1, and there cannot be more than two zeros of $c^{allo}-c_{\varphi}^{aud}$ in $(0,1)$.  If $\varphi=0$, then $c_{\varphi}^{aud}(q)\leq c^{allo}(q)$ for all $q$ and $r_1(\varphi)=0$ and $r_2(\varphi)=1$.  
Further,  $0<r_2(\varphi)<1$ for all $\varphi\in(0,\frac{m}{n}]$  since $c^{allo}(q)< c_{\varphi}^{aud}(q)$ for small $q$ and $c^{allo}(1)= c_{\varphi}^{aud}(1)$ for  all  $\varphi$.
Now, we can fully describe the lower envelope of $c(q)=\min\{c_{\varphi}^{ic}(q), c_{\varphi}^{aud}(q), c^{allo}(q)  \}$ for all $\varphi\in [0,\frac{m}{n}]$. 
 
 First, suppose $\varphi\in[0,\frac{m-k}{n}]$, then by \eqref{eq:ic_aud} we know that $c_{\varphi}^{aud}(q)\le c_{\varphi}^{ic}(q)$ for all $q$ and by \eqref{eq:allo_aud} that $r_1(\varphi)=0$.  Thus,  there is $C^{ic}=\emptyset$,  $C^{aud}=[0,r_2(\varphi)]$,  and $C^{allo}=[r_2(\varphi),1]$, as desired. 
 
 Let us  instead look at when   $\varphi\in(\frac{m-k}{n},\frac{m}{n}]$. Recall first that $c_{\varphi}^{aud}(0)>m=c^{allo}(0)$ and $c_{\varphi}^{aud}(q)>c^{allo}(q)$ for $q$ close to 1.  
 Suppose that is at least one $q\in (0,1)$ with $c_{\varphi}^{aud}(q)< c^{allo}(q)$.  Then there must be exactly two zeros of $c^{allo}-c_{\varphi}^{aud}$ in $(0,1)$.  That is,  there is $r_1(\varphi)\in( 0,1 )$ and $r_2(\varphi)\in (0,1)$ with $r_1(\varphi)<r_2(\varphi)$,  and $c_{\varphi}^{aud}(r_i(\varphi))=c^{allo}(r_i(\varphi))$,  for $i=1,2$.   Further,  if  $r_1(\varphi) <z_1(\varphi)$ then  $r_1(\varphi) \leq z_2(\varphi)  <z_1(\varphi)$,  and if  $r_1(\varphi) \leq z_1(\varphi)$, then $ r_1(\varphi) \leq z_2(\varphi) \leq z_1(\varphi)  $.  This corresponds to case $(2.a)$ and $(2.b)$, respectively. 
 
 Since $c_{\varphi}^{aud}$ is increasing in $\varphi$ it is possible that for some $\overline{\varphi}$,  $c_{\varphi}^{aud}(q)\geq c^{allo}(q)$ for all $q<1$.   That is,  there is no interior zero of  $c^{allo}-c_{\varphi}^{aud}$ for all $\varphi\geq \overline{\varphi}$, as desired. 
\end{proof}

\subsection{The sum of two allocation rules $p^m+ p^\ell$ induces $P$}\label{app:opt_alloc_rule}
We say that the disjoint union
$D_1\sqcup\dots \sqcup D_n$ is \textit{symmetric} if $D_i=D_j$ for all $i,j$. 
Let $H$ be a set. A function $f:2^H\rightarrow \R$ is \textit{symmetric} if for all $D_1\sqcup\dots \sqcup D_n$ and $i,j$,
\[f(D_1\sqcup \dots \sqcup D_i\sqcup\dots \sqcup D_j\sqcup \dots\sqcup D_n)=f(D_1\sqcup \dots \sqcup D_j\sqcup\dots \sqcup D_i\sqcup \dots \sqcup D_n).\]

\begin{lemma}\label{lemma:symmetric_solution}
    Let $H:=\bigsqcup_i [0,1]$ and $\mathcal{B}(H)$ denote the subsets $H_1\sqcup...\sqcup H_n$ of $H$ such that $H_i$ is a Borel subset of $[0,1]$ for each $i$. Let $f:\mathcal{B}(H)\rightarrow \R$ be supermodular\footnote{A function $f:\mathcal{B}(H)\rightarrow \R$ is supermodular if $X,Y,Z\in \mathcal{B}(H)$, $X\subseteq Y$, and $Z\subseteq H\setminus Y$ imply \[f(X\cup Z)-f(X)\le f(Y\cup Z)-f(Y).\]} and symmetric.
    Consider $G\subseteq \mathcal{B}(H)$ that is closed under permutations, unions, and intersections. 
    
    If $D_1\sqcup ...\sqcup D_n \in G$ maximizes $f$ on $\mathcal{B}(H)$ then there is $D'\sqcup D'\sqcup ...\sqcup D'\in G$ that also maximizes $f$. In particular, 
 there is a symmetric maximizer.
\end{lemma}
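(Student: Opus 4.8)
The statement is a standard "symmetrization" lemma for supermodular set functions: if some maximizer lies in a permutation-closed, union-closed, intersection-closed family $G$, then a symmetric maximizer also lies in $G$. The natural approach is averaging-via-lattice-operations. Start from a maximizer $D = D_1 \sqcup \dots \sqcup D_n \in G$. For each permutation $\sigma$ of $\{1,\dots,n\}$, the permuted set $\sigma(D)$ lies in $G$ (closure under permutations) and, by symmetry of $f$, is also a maximizer. The key claim is: \emph{for any two maximizers $X, Y$ of a supermodular function, $X \cup Y$ and $X \cap Y$ are both maximizers.} Indeed, supermodularity applied with the nested pair $X \cap Y \subseteq X$ and the "increment" $Z := Y \setminus X \subseteq H \setminus X$ gives $f(X) - f(X\cap Y) = f((X\cap Y)\cup Z) - f(X\cap Y) \le f(Y \cup Z) - f(Y) = f(X\cup Y) - f(Y)$, i.e. $f(X) + f(Y) \le f(X \cup Y) + f(X \cap Y)$; since $X,Y$ are maximizers the left side is $2f^{\max}$, so both terms on the right equal $f^{\max}$.

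**Carrying it out.** First I would fix a maximizer $D \in G$ and let $f^{\max} := f(D)$. Second, I would record the supermodularity computation above to conclude that the set of maximizers of $f$ is closed under $\cup$ and $\cap$ (this uses only supermodularity, not membership in $G$). Third, I would form $D^* := \bigcup_{\sigma} \sigma(D)$, the union over all $n!$ permutations. Each $\sigma(D)$ is in $G$ and is a maximizer (by permutation-closure of $G$ and symmetry of $f$); applying the previous step inductively over the finitely many permutations, $D^*$ is a maximizer, and applying closure of $G$ under unions, $D^* \in G$. Fourth, I would observe that $D^*$ is permutation-invariant by construction — applying any $\tau$ just permutes the terms of the union $\bigcup_\sigma \sigma(D)$, leaving it unchanged — so $D^* = D' \sqcup D' \sqcup \dots \sqcup D'$ for $D' := \bigcup_\sigma (\sigma(D))_1$, say. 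This $D^*$ is the desired symmetric maximizer in $G$.

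**Where the difficulty lies.** The argument is almost entirely formal once one has the lattice observation. The one step deserving care is the inductive extension from "closed under binary $\cup$" to "the union of $n!$ maximizers is a maximizer": one orders the permutations as $\sigma_1, \dots, \sigma_{n!}$, sets $M_1 := \sigma_1(D)$ and $M_{j+1} := M_j \cup \sigma_{j+1}(D)$, and checks at each stage that $M_j$ is a maximizer (so the binary rule applies to $M_j$ and $\sigma_{j+1}(D)$) — here I only need $\cup$-closure of the maximizer set and $\cup$-closure of $G$. A second point to state cleanly, though trivial, is that $\sigma(D) \in G$ is a maximizer: permutation-closure of $G$ gives membership, and symmetry of $f$ gives $f(\sigma(D)) = f(D) = f^{\max}$. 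No measure-theoretic subtlety arises: $\mathcal{B}(H)$ is closed under the finite Boolean operations used, and the family $G$ is assumed closed under exactly the operations invoked. (The hypothesis that $G$ is closed under intersections is, in fact, not needed for this particular conclusion — only permutation- and union-closure are used — but I would not belabor that.)
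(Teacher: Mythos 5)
Your proposal is correct, but it takes a genuinely different route from the paper's proof. The paper symmetrizes two coordinates at a time: writing $X=(D_1\cap D_2)\sqcup(D_1\cap D_2)\sqcup D_3\sqcup\dots\sqcup D_n$, $Y=D_1\sqcup D_2\sqcup\dots\sqcup D_n$ and $Z=(D_2\setminus D_1)\sqcup(D_1\setminus D_2)\sqcup\emptyset\sqcup\dots\sqcup\emptyset$, it uses supermodularity \emph{together with} symmetry (via $f(X\cup Z)=f(D_2\sqcup D_1\sqcup D_3\sqcup\dots)=f(Y)$) to show that replacing $D_1,D_2$ either both by $D_1\cap D_2$ or both by $D_1\cup D_2$ preserves optimality, and then repeatedly applies this pairwise merging until the profile is symmetric---which is why intersection-closure of $G$ appears in the hypotheses. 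You instead decouple the two ingredients: supermodularity alone yields $f(X)+f(Y)\le f(X\cup Y)+f(X\cap Y)$, so the argmax set is closed under $\cup$ and $\cap$; symmetry plus permutation-closure place the whole orbit $\{\sigma(D)\}_\sigma$ inside both the argmax set and $G$; and the union over the orbit is manifestly symmetric, lies in $G$ by union-closure, and is a maximizer by finite induction on binary unions. This buys a cleaner argument in two respects: the paper's ``repeatedly applying this argument'' step implicitly needs a stabilization argument for the pairwise merging, which your single union over finitely many permutations avoids, and your route does not even use intersection-closure of $G$ (only that $\mathcal{B}(H)$ itself is a lattice, so $f$ is defined at $X\cap Y$). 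One small slip worth fixing: you announce the instantiation ``nested pair $X\cap Y\subseteq X$ with increment $Z:=Y\setminus X$,'' but your displayed chain $f(X)-f(X\cap Y)=f((X\cap Y)\cup Z)-f(X\cap Y)\le f(Y\cup Z)-f(Y)=f(X\cup Y)-f(Y)$ corresponds to the nesting $X\cap Y\subseteq Y$ with $Z:=X\setminus Y$; either instantiation gives the same rearranged inequality, so this is cosmetic rather than a gap.
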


\begin{proof}
Let $D_1\sqcup \dots \sqcup D_n \in G$ be a maximizer of $f$.
Either 
\[ f((D_1\cap D_2)\sqcup (D_1\cap D_2) \sqcup D_3\sqcup\dots \sqcup D_n) \ge f(D_1\sqcup D_2\sqcup D_3\sqcup\dots \sqcup D_n) \]
or 
\begin{align*}
&f((D_1\cap D_2)\sqcup (D_1\cap D_2) \sqcup D_3\sqcup\dots \sqcup D_n) < f(D_1\sqcup D_2\sqcup D_3\sqcup\dots \sqcup D_n) \\
= &f(D_2 \sqcup D_1\sqcup D_3\sqcup\dots \sqcup D_n) 
\end{align*}
where the equality uses symmetry of $f$. 

In the latter case, by denoting $X= (D_1\cap D_2)\sqcup (D_1\cap D_2) \sqcup D_3\sqcup\dots  \sqcup D_n$, $Y= D_1\sqcup D_2\sqcup D_3\sqcup \dots \sqcup D_n$ and $Z=(D_2\setminus D_1)\sqcup (D_1\setminus D_2)\sqcup \emptyset \sqcup \dots \sqcup \emptyset$, we obtain $f(X)<f(Y)$. By supermodularity, $f(X\cup Z)-f(X)\le f(Y\cup Z)-f(Y)$, which together with $f(X)<f(Y)$ and symmetry of $f$ implies
\[  f(Y)=f(X\cup Z)\le f(Y\cup Z)=f((D_2\cup D_1)\sqcup (D_1\cup D_2)\sqcup D_3...\sqcup D_n)). \]
Hence, 
\[ f(D_1\sqcup D_2\sqcup D_3\sqcup \dots \sqcup D_n)\le f((D_2\cup D_1) \sqcup (D_1\cup D_2)\sqcup D_3\sqcup\dots \sqcup D_n). \]

Since $G$ is closed under permutations, intersections, and unions, $(D_1\cap D_2)\sqcup (D_1\cap D_2) \sqcup D_3\sqcup\dots \sqcup D_n)\in G$ and $(D_2\cup D_1) \sqcup (D_1\cup D_2)\sqcup D_3\sqcup\dots \sqcup D_n\in G$.
Repeatedly applying this argument yields the conclusion.
\end{proof}

\begin{lemma}\label{lemma:p_tilde_implements_P}
The reduced form of $p^m$ is $P-\varphi \mathbf{1}_{C^{aud}\cup C^{ic}}$.
\end{lemma}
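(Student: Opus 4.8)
The plan is to compute the interim rule induced by $p^m$ directly on each of the three regions $C^{allo}$, $C^{aud}$, and $C^{ic}$, and check that it equals $P-\varphi\ind_{C^{aud}\cup C^{ic}}$, i.e.\ it equals $P$ on $C^{allo}$ and $P-\varphi$ on $C^{aud}\cup C^{ic}$. Since ties occur with probability zero under the strictly positive density $f$, the interim rule is unaffected by the tie-breaking convention, and I may ignore ties throughout.

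\medskip
\noindent\textbf{Region $C^{ic}$.} This is the easy case: if $t_i\in C^{ic}$, then by Definition~\ref{def:ex_post_tilde} agent $i$ never gets an object under $p^m$ (a type in $C^{ic}$ is in neither $C^{allo}$ nor $C^{aud}$), so the induced interim probability is $0$. On $C^{ic}$ the merit-with-guarantee rule satisfies $P(t)=-\tfrac1n c_\varphi'(F(t)) = -\tfrac1n c_\varphi^{ic}{}'(F(t)) = \varphi$ (since $c_\varphi^{ic}(q)=m-nq\varphi$), so $P(t)-\varphi=0$, matching.

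\medskip
\noindent\textbf{Region $C^{allo}$.} Suppose $t_i\in C^{allo}$. Under $p^m$, agent $i$ receives an object iff $t_i$ is among the $m$ highest reports (ties aside). I would argue, as in the main text just before Proposition~\ref{prop:opt_allo_rule}, that the expected number of agents with a type above $t$ who receive an object under $p^m$ equals $c^{allo}(F(t))$ for every $t\in C^{allo}$ whose whole ``upper tail'' interaction is governed by the efficient rule. More carefully: fix $t\in C^{allo}$. By Corollary~\ref{lemma:intervals} (or Lemma~\ref{lemma:intervals_new}), $C^{allo}$ is of the form $[\gamma_3,1]$ or $[\gamma_1,\gamma_2)\cup[\gamma_3,1]$; in the latter case one must be slightly careful because the interval $[\gamma_1,\gamma_2)$ of $C^{allo}$ sits \emph{below} the $C^{aud}$ block. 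The clean way to handle this is to note that a report $x$ is assigned an object under $p^m$ whenever $x$ is among the $m$ highest \emph{and} $x\in C^{allo}$, \emph{or} $x$ is among the $k$ highest and $x\in C^{aud}$; since $C^{aud}$ lies entirely above $[\gamma_1,\gamma_2)$ and entirely below $[\gamma_3,1]$, the agents in $C^{aud}$ who are among the top $k$ are automatically among the top $m$ (as $k<m$), so the set of agents receiving an object under $p^m$ is always a subset of the top-$m$ agents, and it consists of \emph{exactly} the top-$m$ agents whenever at least $m$ agents have types in $C^{allo}\cup\{\text{top-}k\text{ of }C^{aud}\}$. From this one shows $n\int_t^1 P^m(x)\,dF(x) = c^{allo}(F(t))$ for all $t\in C^{allo}$, where $P^m$ denotes the interim rule of $p^m$. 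Differentiating in $t$ and comparing with \eqref{eq:reduced_opt_allocation} (using that $c_\varphi=c^{allo}$ on $C^{allo}$) gives $P^m(t)=P(t)$ there.

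\medskip
\noindent\textbf{Region $C^{aud}$.} This is the crux. Fix $t\in C^{aud}$. Agent $i$ with $t_i=t$ gets an object under $p^m$ iff $t_i$ is among the $k$ highest reports. The claim is that the resulting interim probability is $P(t)-\varphi$. I would compute $n\int_t^1 P^m(x)\,dF(x)$ for $t\in C^{aud}$ as the expected number of agents with reports above $t$ getting an object; splitting by whether the report lies in $C^{aud}\cap[t,1]$ (object iff among top $k$) or in the piece of $C^{allo}$ above $C^{aud}$ (object iff among top $m$), and carefully combining — the agents in the upper $C^{allo}$ piece $[\gamma_3,1]$ fill the top slots, and the leftover up to $k$ slots go to the highest $C^{aud}$ reports — one obtains
\[
n\int_t^1 P^m(x)\,dF(x) \;=\; \sum_{i=1}^n \min\{i,k\}\binom{n}{i}(1-F(t))^i F(t)^{n-i}
\;=\; c_\varphi^{aud}(F(t)) - n\,(1-F(t))\,\varphi,
\]
using the identity recorded in the proof of Lemma~\ref{lemma:c_functions}. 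Differentiating in $t$ and using \eqref{eq:reduced_opt_allocation} with $c_\varphi=c_\varphi^{aud}$ on $C^{aud}$ yields $P^m(t) = -\tfrac1n c_\varphi^{aud}{}'(F(t)) - \varphi = P(t)-\varphi$, as desired. The boundary bookkeeping at $\gamma_1,\gamma_2,\gamma_3$ (and the possibility $C^{aud}=\{1\}$ or the two-piece shape of $C^{allo}$ from Lemma~\ref{lemma:intervals_new}) is where the argument is most delicate: one must verify that the counting of ``who fills the top $m$ versus the top $k$ slots'' is consistent across the region boundaries, and that the integral identities glue correctly at $\gamma_1,\gamma_2,\gamma_3$. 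I expect that step — the careful combinatorial accounting of which agents occupy the scarce object slots when $C^{allo}$ has two components straddling $C^{aud}$ — to be the main obstacle; everything else is a routine differentiation of a Border-type tail integral against $c_\varphi$.
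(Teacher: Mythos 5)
Your regionwise plan and the $C^{ic}$ case are fine, but the tail-integral route you take on $C^{allo}$ and $C^{aud}$ contains a genuine error, and it sits exactly at the point you flag as delicate. For $t\in C^{aud}$ you assert $n\int_t^1 P^m(x)\,dF(x)=\sum_{i=1}^n\min\{i,k\}\binom{n}{i}(1-F(t))^iF(t)^{n-i}$. This is false whenever $\gamma_3<1$: agents with types above $\gamma_3$ receive objects under $p^m$ whenever they are among the top $m$, not the top $k$, so if $j$ agents lie above $\gamma_3$, the number of agents above $t$ who receive an object is $\min\{j,m\}+\max\{0,k-j\}$, which exceeds $\min\{i,k\}$ as soon as $j>k$ (take $j=m$: all $m$ get objects, but $\min\{i,k\}=k<m$). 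The correct tail integral is $c^{aud}_\varphi(F(t))-n\varphi\,(F(\gamma_3)-F(t))$, so your expression is off by the constant $n\varphi\,(1-F(\gamma_3))$. The same problem appears on the lower piece $[\gamma_1,\gamma_2)$ of $C^{allo}$, where your claim $n\int_t^1 P^m(x)\,dF(x)=c^{allo}(F(t))$ fails (if $m$ agents above $t$ lie in $C^{aud}$, only $k$ of them get objects, so the count falls short of $\min\{i,m\}$); the true value is $c^{allo}(F(t))-n\varphi\,(F(\gamma_3)-F(\gamma_2))$. Because these discrepancies are constant in $t$ within each region, the differentiated conclusion happens to come out right, but the identities you assert, and the slot-counting offered to support them, are wrong, and the boundary bookkeeping you defer is precisely what your argument needs and does not supply.

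The paper's proof avoids aggregate accounting altogether: it computes the interim probability agent by agent directly from Definition \ref{def:ex_post_tilde}. For $t_i\in C^{allo}$, agent $i$ gets an object iff at most $m-1$ opponents have higher types, so $\E_{\t_{-i}}[p^m_i(t_i,\t_{-i})]=\sum_{j=0}^{m-1}\binom{n-1}{j}F(t_i)^{n-1-j}(1-F(t_i))^{j}$, which equals $-\tfrac1n\,\frac{d}{dq}c^{allo}(q)\big|_{q=F(t_i)}=P(t_i)$ by Lemma \ref{lemma:c_functions} and Corollary \ref{lemma:intervals}; for $t_i\in C^{aud}$ the same computation with $k-1$ gives $P(t_i)-\varphi$; and $t_i\in C^{ic}$ gives $0=P(t_i)-\varphi$. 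This pointwise computation is insensitive to whether $C^{allo}$ has one or two pieces, which is why no gluing at $\gamma_1,\gamma_2,\gamma_3$ is needed. If you wish to keep the tail-integral formulation, you must either carry the region-dependent constants correctly or first establish the pointwise interim probabilities anyway, at which point the aggregate step is superfluous.
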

\begin{proof}
Note that for type $t_i\in C^{allo}$ with quantile $q=F(t_i)$, the probability that $i$ obtains an object under allocation rule $p^m$ equals the probability that there are at most $m-1$ others with types above $t_i$. Using the definition of $P$ and Lemma \ref{lemma:c_functions}, this yields
\begin{align*}
\E_{\t_{-i}}[p^m_i(t_i,\t_{-i})]= \sum_{j=0}^{m-1} \binom{n-1}{j} F(t_i)^{n-1-j} (1-F(t_i))^{j} =P(t_i).
\end{align*}

If $t_i\in C^{aud}$, the probability that $i$ obtains an object under allocation rule $p^m$ equals the probability that there are at most $k-1$ others with types above $t_i$. Again using the definition of $P$ and Lemma \ref{lemma:c_functions}, this yields
\begin{align*}
\E_{\t_{-i}}[p^m_i(t_i,\t_{-i})]= \sum_{j=0}^{k-1} \binom{n-1}{j} F(t_i)^{n-1-j} (1-F(t_i))^{j} = P(t_i)-\varphi.
\end{align*}

Finally, if $t_i\in C^{ic}$ we get
\begin{align*}
\E_{\t_{-i}}[p^m_i(t_i,\t_{-i})]= 0 =P(t_i)-\varphi.
\end{align*}
\end{proof}

\begin{lemma}\label{l:lottery_feasible}
 There is an ex-post allocation rule $p^\ell:\T  \rightarrow \mathbb{R}^n$ allocating the objects not already allocated under $p^m$ that induces the interim rule $\varphi \mathbf{1}_{C^{aud}\cup C^{ic}}$.
\end{lemma}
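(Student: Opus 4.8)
The plan is to read the statement as a feasibility claim for an interim allocation rule subject to \emph{type-dependent} ex-post constraints, and then to apply Lemma \ref{lemma:border_general}. After running the merit rule $p^m$ at a profile $\t$, exactly $h(\t):=m-\sum_{j\in N}p^m_j(\t)$ objects remain (and $h(\t)\ge 0$ since $\sum_j p^m_j(\t)\le m$), and the agents who may still receive one are those in $J(\t):=\{i\in N:\ t_i\in C^{aud}\cup C^{ic}\ \text{and}\ p^m_i(\t)=0\}$. The rule $p^\ell$ we seek is precisely an allocation rule that is feasible for these data and induces $\varphi\mathbf{1}_{C^{aud}\cup C^{ic}}$. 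By Lemma \ref{lemma:border_general}, such a rule exists if and only if, for every set $E=\bigsqcup_{i\in N}E_i$ with each $E_i$ a Borel subset of $[0,1]$,
\[
 \sum_{i\in N}\int_{E_i}\varphi\mathbf{1}_{C^{aud}\cup C^{ic}}(t)\,dF(t)\ \le\ \int\min\big\{\,|J(\t)\cap I(\t,E)|\,,\,h(\t)\,\big\}\,d\F(\t),
\]
so the whole task is to verify this family of inequalities.

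First I would symmetrize. For each fixed $\t$ the map $E\mapsto|J(\t)\cap I(\t,E)|$ is additive, so $E\mapsto\min\{|J(\t)\cap I(\t,E)|,h(\t)\}$ is submodular; since the left-hand side above is additive in $E$, the map $E\mapsto(\text{left-hand side})-(\text{right-hand side})$ is supermodular, and it is symmetric because $p^m$ (hence $h$ and $J$) is symmetric. Applying Lemma \ref{lemma:symmetric_solution} with $G=\mathcal{B}(H)$ shows that this map attains its maximum at a symmetric set $E'\sqcup\dots\sqcup E'$, so it is enough to check the inequality for symmetric $E$. For symmetric $E$ the left-hand side equals $n\varphi\int_{E'}\mathbf{1}_{C^{aud}\cup C^{ic}}\,dF$, which depends on $E'$ only through $E'\cap(C^{aud}\cup C^{ic})$, whereas enlarging $E'$ outside $C^{aud}\cup C^{ic}$ can only enlarge $J(\t)\cap I(\t,E)$; hence I may assume $E'\subseteq C^{aud}\cup C^{ic}$.

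The third step replaces $E'$ by a canonical worst case. The key observation is that, within $C^{aud}\cup C^{ic}$, the eligibility probability of a type $t$, namely $\P(p^m_i=0\mid t_i=t)=1-(P(t)-\varphi)\mathbf{1}_{C^{aud}\cup C^{ic}}(t)$, is non-increasing ($1$ on $C^{ic}$, and $1-P(t)+\varphi$ on $C^{aud}$ with $P$ non-decreasing, by Lemma \ref{lemma:p_tilde_implements_P}). Hence a rearrangement that moves a subinterval of $E'$ of given length to a higher location inside a connected component of $C^{aud}\cup C^{ic}$ leaves the left-hand side unchanged but, types being i.i.d., weakly lowers $|J(\t)\cap I(\t,E)|$ profile by profile and therefore weakly lowers the right-hand side; combined with monotonicity in the ``amount'' of $E'$, it suffices to check the inequality when $E'$ is a union, over the (one or two) connected components $[\alpha,\beta]$ of $C^{aud}\cup C^{ic}$, of upper subintervals $(e,\beta]$. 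For such an $E'$, writing each $p^m$-winner as lying inside or outside $E'$ and using the identity $\min\{a-x,\,b-x-y\}=\min\{a,\,b-y\}-x$, the integrand $\min\{|J(\t)\cap I(\t,E)|,h(\t)\}$ becomes $\min\{\,|\{i:t_i\in E'\}|,\ m-|\{p^m\text{-winners with type outside }E'\}|\,\}-|\{p^m\text{-winners with type in }E'\}|$. Taking expectations, the first term is a difference of values of $c^{allo}$ by the binomial identities of Lemma \ref{lemma:c_functions}, and the last term is evaluated by Lemma \ref{lemma:p_tilde_implements_P} together with $n\int_t^1P(x)\,dF(x)=c_\varphi(F(t))$; after simplification the inequality reduces, at each endpoint $e$, to $c^{ic}_\varphi(F(e))\le c^{allo}(F(e))$ when $e\in C^{ic}$ and to $c^{aud}_\varphi(F(e))\le c^{allo}(F(e))$ when $e\in C^{aud}$, both of which hold by the definition of the regions $C^{ic}$, $C^{aud}$ and the fact that $c_\varphi$ in \eqref{eq:upper_bound} is the lower envelope of $c^{allo},c^{aud}_\varphi,c^{ic}_\varphi$ (Corollary \ref{lemma:intervals}, Lemma \ref{lemma:intervals_new}).

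The main obstacle is the third step. Because the number of leftover objects $h(\t)$ is itself type-dependent, one cannot invoke the usual ``upper sets suffice'' shortcut for monotone interim rules — the two-agent example in the footnote of Section \ref{sec:proof_red_optimal} shows precisely that this shortcut can fail — so the reduction to canonical test sets must be carried out through an explicit coupling, and since the truncation $\min\{\cdot,h(\t)\}$ ties the eligible-agent count to $h(\t)$, the exchange inequality has to be checked realization by realization. The accompanying case analysis tracks the possible shapes of $C^{allo}\cup C^{aud}\cup C^{ic}$ from Lemma \ref{lemma:intervals_new}, in particular the case in which $C^{aud}\cup C^{ic}$ is disconnected. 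Once the worst-case $E'$ has been pinned down, the remaining verification is routine manipulation of binomial sums and the $c$-functions.
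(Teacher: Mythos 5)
Your overall architecture matches the paper's: reduce to the Border-type condition of Lemma \ref{lemma:border_general} with $h(\t)=m-\sum_j p^m_j(\t)$ and $J(\t)$ the eligible agents, symmetrize via Lemma \ref{lemma:symmetric_solution}, restrict to $E\subseteq C^{aud}\cup C^{ic}$, and close with $c$-function comparisons. The gap is in your third step, which is exactly where the lemma's difficulty lives. Your claim that moving a piece of $E'$ to a higher location ``weakly lowers $|J(\t)\cap I(\t,E)|$ profile by profile'' cannot be read pointwise: changing the test set does not change the type profile, and for a fixed $\t$ the counts under $E'$ and its rearrangement are simply incomparable (a given profile may have many agents in the new piece and none in the old one). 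To make the comparison you need a measure-preserving coupling that actually moves agents' types upward, but then $h(\t)$, the merit allocations of the \emph{other} agents, and their eligibility all change simultaneously (an agent pushed into the top $k$ may displace another from it, or absorb an extra object and shrink $h$ by one), so the truncated integrand $\min\{|J\cap I|,h\}$ need not move in the right direction realization by realization. You flag this yourself as ``the main obstacle'' and defer it to an unspecified case analysis, but this is precisely the step that needs a proof --- your own appeal to the footnote example shows interim monotonicity is not enough once capacities are type-dependent --- so the proposal as written does not establish the lemma.

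The paper avoids the rearrangement entirely. It isolates one agent $i$, rewrites the Border inequality so that the dependence on $E_i$ enters as an integral over $E_i$ of $\varphi$ minus the probability of the explicit event $\{p^m_i(\t)=0$ and the leftover-capacity constraint binds$\}$, and then shows by a direct case analysis --- holding $\t_{-i}$ and $E_{-i}$ fixed and raising only $t_i$, under which the other agents' merit allocations are unchanged --- that this marginal value is constant on $C^{ic}$ and monotone on $C^{aud}$. This pins the worst case down to all-or-nothing of $C^{ic}$ together with an upper interval of $C^{aud}$, and only then is Lemma \ref{lemma:symmetric_solution} applied within that class. Note also that the final verification is not an exact evaluation: your assertion that $\int\min\{|I(\t,E)|,\,m-\sum_{j\notin I(\t,E)}p^m_j(\t)\}\,d\F(\t)$ is ``a difference of values of $c^{allo}$'' is not an identity; the paper instead bounds it from below through the efficient rule $p^{eff}$ (using $P\le P^{eff}$ on the relevant intervals, $p^m\le p^{eff}$, and feasibility of $p^{eff}$), and only the envelope inequalities $c^{ic}_{\varphi}\le c^{allo}$ and $c^{aud}_{\varphi}\le c^{allo}$ on the respective regions enter at the end, much as you anticipated.
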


\begin{proof}
By Lemma \autoref{lemma:border_general}, it is sufficient to check the following inequality:
\begin{equation}\label{eq:new_border}
  \sum_i \int_{E_i} \varphi \mathbf{1}_{C^{aud}\cup C^{ic}} dF(t_i) \le  \int 
  \min\left\{ \sum_{i\in I(\t,E_1,\dots, E_n)}1-p_i^m(\t) , m- \sum_i p^m_i(\t) \right\} \,\mathrm d \F(\t)
\end{equation}

Further, it is sufficient to show this inequality for sets satisfying $E_i\subseteq C^{ic}\cup C^{aud}$ because the right-hand side is increasing in $E_i$ and the left-hand side only depends on $E_i\cap (C^{ic}\cup C^{aud})$.

We now argue that it is sufficient to check the inequality for sets $(E_1,...,E_n)$ where for each $j$ there is $\gamma^j\in[\gamma_2,\gamma_3]$ such that $E_j=[\gamma^j,\gamma_3]$ or $E_j=C^{ic}\cup [\gamma^j,\gamma_3]$.
A simple rewriting shows that  inequality  \eqref{eq:new_border} above is satisfied if the following inequality holds for any $(E_1,...,E_n)$:
\begin{align} \label{eq:long_display}
  &\sum_{j\neq i} \int_{E_j} \varphi \mathbf{1}_{t_j\in C^{aud}\cup C^{ic}} dF(t_j) \nonumber\\
  - & \int_{\T} \min\left\{ \sum_{j\in I(\t,E_1,...,E_n)\setminus \{i\}}1-p_j^m(\t) , m- \sum_{j} p^m_j(\t) \right\} \,\mathrm d \F(\t)\nonumber \\
  + &\int_{E_i} \varphi \mathbf{1}_{t_i\in C^{aud}\cup C^{ic}} dF(t_i)\nonumber\\
  +& \int_{E_i} \int_{\T_{-i}} \min\{ \sum_{j\in I(\t,E_1,...,E_n)\setminus\{i\}}1-p_j^m(\t) , m- \sum_{j} p^m_j(\t) \}\nonumber\\
   & \hspace{1.4cm}- \min\{ \sum_{j\in I(\t,E_1,...,E_n)}1-p_j^m(\t) , m- \sum_j p^m_j(\t) \}   \,\mathrm d \F(\t) \nonumber\\
&\le 0.
\end{align}
The fourth and fifth lines can be written as
\[ -\int_{E_i} \int_{\T_{-i}} \mathbf{1}_{p^m_i(\t)=0 \text{ and } \sum_{j\in I(\t,E_1,...,E_n)}1-p_j^m(\t) \le m- \sum_j p^m_j(\t) }   \,\mathrm d \F(\t). \]
This implies that to maximize the LHS (i) either all or nothing of $C^{ic}$ should be included (because the integrands of the third and fourth line are independent of $t_i$ as $t_i$ varies over $C^{ic}$); (ii) whenever it is optimal to include a type $t_i\in C^{aud}$, then it is optimal to include all higher types $t_i'\in C^{aud}$ (if $p_i^m(t_i,\t_{-i})=1$ then $p_i^m(t_i',\t_{-i})=1$ and the indicator is 0 in either case; if $p_i^m(t_i,\t_{-i})=0$ and $p_i^m(t_i', ,\t_{-i})=1$ then the indicator function changes from $-1$ to $0$ if anything; if $p_i^m(t_i,\t_{-i})=0$ and $p_i^m(t_i' ,\t_{-i})=0$ then the allocation of other agents is not changed and the indicator stays constant). 
We conclude that it is sufficient to check the inequality for sets $(E_1,...,E_n)$ where for each $j$ there is $\gamma^j\in[\gamma_2,\gamma_3]$ such that $E_j=[\gamma^j,\gamma_3]$ or $E_j=C^{ic}\cup [\gamma^j,\gamma_3]$.

We now argue that we can assume, in addition, $E_j=E_i$ for all $i,j$. 
To apply Lemma \ref{lemma:symmetric_solution}, denote the left-hand side of \eqref{eq:long_display} as $f(E_1\sqcup ...\sqcup E_n)$. It can be verified that $f$ is symmetric and supermodular. Also, let $G\subseteq \mathcal{B}(H)$ be the collection of subsets of the form $E_1\sqcup ...\sqcup E_n$, where for each $j$, $E_j=[\gamma^j,\gamma_3]$ or $E_j=C^{ic}\cup [\gamma^j,\gamma_3]$ for some $\gamma^j\in[\gamma_2,\gamma_3]$. Note that $G$ is closed under permutations, intersections, and unions. Lemma \ref{lemma:symmetric_solution} then implies that if $f(E_1\sqcup ...\sqcup E_n)>0$ for $E_1\sqcup ... \sqcup E_n\in G$ then there is $E\sqcup ... \sqcup E\in G$ with $f(E\sqcup ... \sqcup E)>0$.
Hence, it is sufficient to check the inequality \eqref{eq:new_border} for $E_i=E$ for all $i$, where $E=[\gamma,\gamma_3]$ or $E=C^{ic}\cup [\gamma,\gamma_3]$ for some $\gamma\in[\gamma_2,\gamma_3]$. 

Let $p_i^{eff}$ denote the efficient allocation rule,
and $P^{eff}$ denote the induced efficient interim rule. For any $\gamma \in[\gamma_2,\gamma_3]$,
\[n\int_{\gamma}^{\gamma_3} P(t_i) dF_i(t_i) = c^{aud}(\gamma) - c^{allo}(\gamma_3)  \le c^{allo}(\gamma) - c^{allo}(\gamma_3) = n\int_{\gamma}^{\gamma_3} P^{eff}(t_i) dF_i(t_i).\]
Also, 
\[n\int_{C^{ic}} P(t_i) dF_i(t_i) = c^{allo}(0) - c^{allo}(\gamma_1)  = n\int_{C^{ic}} P^{eff}(t_i) dF_i(t_i).\]

Because $p_i^{eff}(\t)\le 1$ for all $i$ and $\sum_{i} p_i^{eff}(\t)\le m$, we get that for any set $E$,
\[ n \int_E P^{eff}(t_i) dF_i(t_i) = \int \sum_{i\in I(\t,E)} p_i^{eff}(\t) d\F(\t)\le \int \min\left\{|I(\t,E)|, m-\sum_{i\not\in I(\t,E)} p_i^{eff}(\t)\right\} d\F(\t). \]

Putting these inequalities together and noting that $p_i^m(\t)\le p_i^{eff}(\t)$ for all $\t$ and $i$, we obtain that for any set $E$ of the form $E=[\gamma,\gamma_3]$ or $E=C^{ic}\cup[\gamma,\gamma_3]$ for $\gamma\in[\gamma_2,\gamma_3]$,
\begin{align*}
n\int_E P(t_i) dF_i(t_i) \le  \int \min\left\{ |I(t,E)|, m-\sum_{j\not\in I(t,E)} p_j^{m}(\t)d\F(\t) \right\}.
\end{align*}

Subtracting $\sum_{j\in I(\t,E)}p_j^m(\t)$ from both sides, we obtain
\begin{align*}
n\int_E \varphi dF_i(t_i) 
\le \int \min\left\{ \sum_{j\in I(t,E)} 1-p_j^m(\t), m-\sum_{j} p^{m}(\t)d\F(\t) \right\}.
\end{align*}
Therefore, the inequality in Lemma \autoref{lemma:border_general}  is satisfied. 
\end{proof}

\subsection{Feasibility of the verification rule}\label{app:verif_feasibility}

To show that $A$ is feasible given $p= p^\ell+ p^m$, we will prove the stronger result that $A$ is feasible given $p^m$.

\begin{proof}[Proof of Lemma \autoref{lemma:border_audit}]     
By Lemma \ref{lemma:border_general} it follows that there is a feasible audit rule $(a_i(\t))_{i\in N}$ inducing $A$ if and only if, for all sets $E=\bigsqcup\limits_{i\in N} E_i$, where each $E_i$ is a Borel subset of $T$, and possible $E_i\neq E_j$, 
\[ \sum_i \int_{E_i} A(t) dF(t) \le \int \min\{|J^p(\t,E)|,k\} d\F(\t). \]     
We can rewrite  this as
\begin{align*}
& \sum_{j\neq i} \int_{E_j} A(t) dF(t) - \int_{\T} \min\left\{\sum_{j\neq i} \1_{t_j\in E_j}p_j(\t),k \right\} d\F(\t) + \int_{E_i} A(t) dF(t) \\
&+ \int_{E_i} \int_{\T_{-i}} \min\left\{ \sum_{j\neq i} \1_{t_j\in E_j}p_j(\t)  , k \right\}- \min\left\{  \sum_{j\neq i} \1_{t_j\in E_j}p_j(\t) +p_i(\t), k \right\}   \,\mathrm d \F(\t) \\ 
&\le 0   
\end{align*}
By an analogous argument as in the proof of Lemma \ref{l:lottery_feasible}, we can conclude that the left-hand side of this expression is supermodular and symmetric as a function of $E$. It therefore satisfies the conditions of Lemma \ref{lemma:symmetric_solution}, and it is enough to check the inequality above for symmetric sets  $E_i=E_j$ for all $i,j$. 
\end{proof}

\begin{lemma}\label{lemma:sufficient_sets}
If 
\begin{align}\label{eq:Border_again}
n\int_{E} A(t_i) \,\mathrm dF(t_i) &\le \int \min\{|J^{p^m}(\t,E)|,k\} \,\mathrm d\F(\t)
\end{align}
holds for all sets of the  form $E= ([\gamma,\gamma_3]\cup [\gamma',1])$ where $\gamma\in[\gamma_1,\gamma_2]$ and $\gamma'\in[\gamma_3,1]$ then it is satisfied for all sets $E\subseteq T$.
\end{lemma}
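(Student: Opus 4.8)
The plan is to rewrite the inequality so that the only source of difficulty --- the cap $\min\{\cdot,k\}$ on the right-hand side --- is isolated, and then to ``monotonize'' $E$ one region at a time. Throughout, $E$ is a single Borel set, as in the statement; the reduction from general disjoint unions to this symmetric case was already carried out in the proof of Lemma~\ref{lemma:border_audit}. First I would use Lemma~\ref{lemma:p_tilde_implements_P}, which says the reduced form of $p^m$ is $P-\varphi\mathbf 1_{C^{aud}\cup C^{ic}}$, together with $A=P-\varphi$, to obtain
\[
 n\int_E A\,\mathrm dF - \int |J^{p^m}(\t,E)|\,\mathrm d\F(\t) \;=\; -\,n\varphi\, F(E\cap C^{allo}).
\]
Since $\int\min\{|J^{p^m}(\t,E)|,k\}\,\mathrm d\F = \int |J^{p^m}(\t,E)|\,\mathrm d\F - \int(|J^{p^m}(\t,E)|-k)^+\,\mathrm d\F$, the inequality~\eqref{eq:Border_again} is equivalent to $\Delta(E)\le 0$, where $\Delta(E):=\int(|J^{p^m}(\t,E)|-k)^+\,\mathrm d\F(\t) - n\varphi\,F(E\cap C^{allo})$. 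Thus it suffices to show that the supremum of $\Delta$ over all $E$ is attained on the family $E=[\gamma,\gamma_3]\cup[\gamma',1]$ with $\gamma\in[\gamma_1,\gamma_2]$, $\gamma'\in[\gamma_3,1]$, since $\Delta\le 0$ there by hypothesis.

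I would then perform three reductions, none of which decreases $\Delta$. (i) Because $p^m$ never assigns an object to a type in $C^{ic}$ and $C^{ic}\cap C^{allo}$ is $F$-null, intersecting $E$ with $[\gamma_1,1]$ changes neither term of $\Delta$, so assume $E\subseteq C^{allo}\cup C^{aud}=[\gamma_1,1]$. (ii) Replacing $E$ by $E\cup C^{aud}$ leaves $F(E\cap C^{allo})$ unchanged (as $C^{aud}\cap C^{allo}$ is null) and can only enlarge $J^{p^m}(\t,E)$, so $\Delta$ does not decrease; hence assume $C^{aud}\subseteq E$, i.e.\ $E = E_{low}\cup C^{aud}\cup E_{high}$ with $E_{low}\subseteq[\gamma_1,\gamma_2]$ and $E_{high}\subseteq[\gamma_3,1]$. (iii) Finally, replace $E_{low}$ and $E_{high}$ by the upper sets $[\gamma,\gamma_2]$ and $[\gamma',1]$ of the same $F$-measure; since $F(E\cap C^{allo})=F(E_{low})+F(E_{high})$ is unchanged, it suffices to show $\int(|J^{p^m}(\t,E)|-k)^+\,\mathrm d\F$ does not decrease. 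The resulting set equals $[\gamma,\gamma_3]\cup[\gamma',1]$ with $\gamma\in[\gamma_1,\gamma_2]$ (put $\gamma=\gamma_2$ if $E_{low}=\emptyset$) and $\gamma'\in[\gamma_3,1]$ (put $\gamma'=1$ if $E_{high}=\emptyset$), which is of the required form, closing the argument.

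The content is in reduction (iii). The structural fact I would rely on is that under $p^m$ the number of objects handed out within each of the three regions $[\gamma_3,1]$, $C^{aud}$, $[\gamma_1,\gamma_2]$ depends only on the numbers $a_1,a_2,a_3$ of reports falling in them --- it equals $\min\{a_1,m\}$, $\min\{a_2,(k-a_1)^+\}$ and $\min\{a_3,(m-a_1-a_2)^+\}$ respectively --- and that within each region the recipients are exactly the agents with the highest types (since $[\gamma_1,\gamma_2]<C^{aud}<[\gamma_3,1]$). For $E_{high}$ I would condition on everything except the exact values of the reports in $[\gamma_3,1]$: then $r:=\min\{a_1,m\}$ and the number $N'$ of recipients drawn from $E_{low}\cup C^{aud}$ are fixed, the $a_1$ reports in $[\gamma_3,1]$ are i.i.d., and $|J^{p^m}(\t,E)| = N' + \#\{\text{top }r\text{ of these reports lying in }E_{high}\}$. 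A short computation shows that, among Borel subsets of $[\gamma_3,1]$ of a given $F$-measure, the upper set $[\gamma',1]$ maximizes this count in the stochastic order: for $\ell\le r$ the event ``$\ell$ of the top $r$ lie in the set'' implies ``$\ell$ of all $a_1$ reports lie in the set'', a set-independent binomial event, and for the upper set the implication is an equivalence. Since $x\mapsto(N'+x-k)^+$ is nondecreasing, the conditional expectation does not decrease, and integrating back gives the claim; freezing instead all reports outside $[\gamma_1,\gamma_2]$ handles $E_{low}$ in exactly the same way.

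I expect reduction (iii) to be the main obstacle --- specifically, verifying the within-region allocation counts and making the stochastic-dominance interchange rigorous, including the conditioning that renders the ``offset'' $N'$ a constant. The remaining steps are routine bookkeeping.
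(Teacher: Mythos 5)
Your proposal is correct, but it follows a genuinely different route from the paper. The paper's proof re-embeds the problem into asymmetric collections $E_1\sqcup\dots\sqcup E_n$ via the function $H$, maximizes coordinate-wise over $E_i$ (characterizing the per-agent optimal set in \eqref{eq:optimal_set} and showing the relevant winning probability is non-decreasing in $t_i$ on $[\gamma_1,\gamma_2]$ and on $[\gamma_3,1]$, so the optimizer is of the form $[\gamma,\gamma_3]\cup[\gamma',1]$ up to $C^{aud}$), and then invokes the symmetry-plus-supermodularity result (Lemma \ref{lemma:symmetric_solution}) to recover a symmetric maximizer of that form. You instead never leave symmetric sets: using Lemma \ref{lemma:p_tilde_implements_P} and $A=P-\varphi$ you rewrite the deficit as $\int(|J^{p^m}(\t,E)|-k)^+\,\mathrm d\F-n\varphi F(E\cap C^{allo})$, discard $C^{ic}$, absorb $C^{aud}$, and then perform equal-$F$-measure monotone rearrangements of $E\cap[\gamma_1,\gamma_2]$ and $E\cap[\gamma_3,1]$, justified by a conditional stochastic-dominance argument that exploits exactly the structural facts the paper also uses (under $p^m$ the number of winners in each region depends only on the region counts, and within a region the winners are the highest types, so an upper set of given measure maximizes, in the usual stochastic order, the number of winners whose types land in it). Your route is more elementary and self-contained---it avoids the asymmetric extension of Lemma \ref{lemma:border_general} and the supermodularity machinery---at the cost of leaning more heavily on the explicit combinatorics of $p^m$ and on a careful conditioning argument (which you correctly flag as the step needing rigor, together with writing the dominance events as ``at least $\ell$'' rather than ``exactly $\ell$''); the paper's route is heavier here but reuses Lemma \ref{lemma:symmetric_solution}, which it needs anyway for Lemmas \ref{lemma:border_audit} and \ref{l:lottery_feasible}, so the marginal machinery is essentially free in context.
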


\begin{proof}
For any collection $(E_1,...,E_n)$, where each $E_i\subseteq T$, define the function $H$ by
\begin{align*}
    H(E_1,...,E_n) &:= \sum_j\int_{E_j}\left(\int_{\T_{-j}}\1_{t_j\in C^{allo}}[p^m_j(t_j,\t_{-j}) -\varphi]+\1_{t_j\in C^{aud}} p^m_j(t_j,\t_{-j}) \right)  \,\mathrm d\F(\t) \\
    &- \int_{\T} \min\left\{k,\sum_j \1_{t_j\in E_j} p^m_j(t_j,\t_{-j})\right\} \,\mathrm d\F(\t).
\end{align*}
Note that, for any $E\subseteq T$, $H(E,...,E)$ equals the left-hand side of \eqref{eq:Border_again} minus the right-hand side of \eqref{eq:Border_again}. Therefore, it is sufficient to show $H(E,...,E)\le 0$.

Fixing an arbitrary agent $i$, we can rewrite $H(E_1,...,E_n)$ as
\begin{align*} 
& \sum_{j\neq i} \int_{E_j} \Big(\int_{\T_{-j}}  \1_{t_j\in C^{allo}} [p^m_j(t_j,\t_{-j})- \varphi] + \1_{t_j\in C^{aud}}p^m_j(t_j,\t_{-j})\Big)d\F(\t) \\
&-   \int_{\T}  \min\big\{k,\sum_{j\neq i} \1_{t_j\in E_j} p^m_j(t_j,\t_{-j})\big\}  d\F(\t)  \\
& + \int_{E_i} \Big(\int_{\T_{-i}}  \1_{t_i\in C^{allo}} [p^m_i(t_i,\t_{-i})- \varphi] + \1_{t_i\in C^{aud}}p^m_i(t_i,\t_{-i})\Big) d\F(\t)  \\
& + \int_{E_i} \Big( \int_{\T_{-i}} \min\big\{k,\sum_{j\neq i} \1_{t_j\in E_j} p^m_j(t_j,\t_{-j})\big\} -  \min\big\{k,\sum_{j\neq i} \1_{t_j\in E_j} p^m_j(t_j,\t_{-j})+p^m_i(t_i,\t_{-i}\big\}   \Big) d\F(\t) 
\end{align*}
Observe that only the third and fourth lines of the left-hand side depend on $E_i$. It can be shown that, for any $E_1,...,E_{i-1}, E_{i+1},...,E_n$, $H(E_1,...,E_{i-1},\cdot, E_{i+1},...,E_n)$ is maximized by setting\footnote{This follows because the integrand in the fourth line equals 0 or -1, and -1 exactly if $\sum_{j\neq i} \1_{t_j\in E_j}p^m_j(t_i,\t_{-i})\ge k$ and $p_i^m(\t)=1$.}
\begin{align}\label{eq:optimal_set}
E_i=C^{aud}\cup \left\{t_i\in C^{allo}: \int_{\left\{\t_{-i}\in \T_{-i}:\sum_{j\neq i} \1_{t_j\in E_j}p^m_j(t_i,\t_{-i})\ge k \right\}} p^m_i(t_i,\t_{-i}) d\F_{-i}(\t_{-i})\ge\varphi  \right\}.    
\end{align}
We are now going to argue that it is maximized by a set of the form  $E_i= ([\gamma,\gamma_3]\cup [\gamma',1])$ where $\gamma\in[\gamma_1,\gamma_2]$ and $\gamma'\in[\gamma_3,1]$. 

Fix arbitrary $\t_{-i}$ and $t_i\in C^{allo}$ such that $t_i< \gamma_2$ and $p^m_i(t_i,\t_{-i})=1$. Choose arbitrary $t_i'\in C^{allo}$ such that $\gamma_2\ge t_i'>t_i$ and  $t_i'\neq t_j$ for all $j\neq i$. Then $p^m_i(t_i',\t_{-i})=1$ and $p^m_j(t_i,\t_{-i})=p^m_j(t_i',\t_{-i})$ for all $j\neq i$ since $t_i$ is among the $m$ highest types and $j$ with type $t_j$ gets an object if and only if either $t_j$ is among the $m$ highest types and in $C^{allo}$ or if $t_j$ is among the $k$ highest types and in $C^{aud}$ (in which case he remains among the $k$ highest types if $i$'s type increases to $t_i$ since $t_i'\le \gamma_2$). Therefore, we conclude that the integral in \eqref{eq:optimal_set} is non-decreasing in $t_i$ on $[\gamma_1,\gamma_2]$. 

Similarly, fix arbitrary $\t_{-i}$ and $t_i\in C^{allo}$ such that $t_i\ge \gamma_3$ and $p^m_i(t_i,\t_{-i})=1$. Choose arbitrary $t_i'\in C^{allo}$ such that $t_i'>t_i$ and $t_i'\neq t_j$ for all $j\neq i$. Then $p^m_i(t_i',\t_{-i})=1$ and $p^m_j(t_i,\t_{-i})=p^m_j(t_i',\t_{-i})$ for all $j\neq i$ since $i$ is among the $m$ highest types and $j$ gets an object if and only if he is either among the $m$ highest types and has a type in $C^{allo}$ or he is among the $k$ highest types and has a type in $C^{aud}$ (in which case he remains among the $k$ highest types since $\gamma_3\le t_i<t_i'$). We conclude that the integral in \eqref{eq:optimal_set} is non-decreasing in $t_i$ on $[\gamma_3,1]$. 
It follows that  $H(E_1,...,E_{i-1},\cdot, E_{i+1},...,E_n)$ is maximized by a set of the form $E_i= ([\gamma,\gamma_3]\cup [\gamma',1])$ where $\gamma\in[\gamma_1,\gamma_2]$ and $\gamma'\in[\gamma_3,1]$. 

Since $H$ is symmetric and supermodular, Lemma \ref{lemma:symmetric_solution} implies that there are $\gamma\in[\gamma_1,\gamma_2]$ and $\gamma'\in[\gamma_3,1]$ such that $H$ is maximized by $(E',...,E')$, where $E'=([\gamma,\gamma_3]\cup [\gamma',1])$. Since the left-hand side of \eqref{eq:Border_again} minus the right-hand side of that inequality evaluated at $E'$ is nonpositive by assumption and equal to $H(E',...,E')$ by construction, we conclude that, for any $E\subseteq T$, $H(E,...,E)\le 0$.
\end{proof}

\begin{lemma}\label{lemma:B2}
Fix arbitrary $\gamma\in [\gamma_1,\gamma_2]$ and $\gamma'\in [\gamma_3,1]$, and let $E= [\gamma,\gamma_3]\cup[\gamma',1]$. Then,
\begin{align*}
n \int_{E} A(t_i)\ dF(t_i)  \le\int \min\{|J^{p^m}(\t,E)|,k\}\ d\F(\t).
\end{align*}
\end{lemma}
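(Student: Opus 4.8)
Here is how I would approach Lemma~\ref{lemma:B2}.

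\medskip

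\noindent\textbf{Step 1: reduce to a counting inequality.} Since $A=P-\varphi$ and, by Lemma~\ref{lemma:p_tilde_implements_P}, the reduced form of $p^m$ is $P-\varphi\,\mathbf{1}_{C^{aud}\cup C^{ic}}$, a direct computation (splitting $\int_E\cdot\,dF$ over $E\cap C^{allo}$ and $E\cap(C^{aud}\cup C^{ic})$) gives the exact identity
\[
n\int_E A(t_i)\,dF(t_i)\;=\;\int \bigl|J^{p^m}(\t,E)\bigr|\,d\F(\t)\;-\;n\varphi\int_{E\cap C^{allo}}dF .
\]
Because $\gamma_1\le\gamma\le\gamma_2\le\gamma_3\le\gamma'$, Corollary~\ref{lemma:intervals} gives $E\cap C^{allo}=[\gamma,\gamma_2)\cup[\gamma',1]$ up to an $F$-null set. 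Hence the asserted inequality is equivalent to
\[
\int\bigl(|J^{p^m}(\t,E)|-k\bigr)^+\,d\F(\t)\;\le\;n\varphi\bigl[(F(\gamma_2)-F(\gamma))+(1-F(\gamma'))\bigr].
\]

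\medskip

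\noindent\textbf{Step 2: an explicit formula for $|J^{p^m}(\t,E)|$.} Let $n_{>x}$ denote the number of agents with type above $x$. Using the definition of $p^m$ (a type in $C^{allo}$ gets an object iff among the top $m$ reports, a type in $C^{aud}=[\gamma_2,\gamma_3]$ iff among the top $k$), together with $E\cap C^{aud}=C^{aud}$, one obtains
\[
|J^{p^m}(\t,E)|=\min\{m,n_{>\gamma'}\}+\bigl(\min\{k,n_{>\gamma_2}\}-\min\{k,n_{>\gamma_3}\}\bigr)+\bigl(\min\{m,n_{>\gamma}\}-\min\{m,n_{>\gamma_2}\}\bigr).
\]
I would then record the facts used below: the middle summand (the number of $C^{aud}$-agents receiving an object) never exceeds $k$; $m-k\le n\varphi$ because any optimal $\varphi$ satisfies $\varphi\ge\tfrac{m-k}{n}$ (Section~\ref{subsec:opt_reduced_form}); and $\gamma_2,\gamma_3$ are defined by $c^{allo}(F(\gamma_i))=c^{aud}_\varphi(F(\gamma_i))$. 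Taking expectations of the three summands and using $\mathbb{E}[\min\{k,n_{>x}\}]=c^{aud}_\varphi(F(x))-n(1-F(x))\varphi$ and $\mathbb{E}[\min\{m,n_{>x}\}]=c^{allo}(F(x))$, these threshold relations make $\mathbb{E}[|J^{p^m}(\t,E)|]$ collapse to $c^{allo}(F(\gamma'))+c^{allo}(F(\gamma))-c^{allo}(F(\gamma_3))-n\varphi(F(\gamma_3)-F(\gamma_2))$, so the target right-hand side in Step~1 is exactly $\mathbb{E}[|J^{p^m}(\t,E)|]-n\varphi[(F(\gamma_2)-F(\gamma))+(1-F(\gamma'))]$.

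\medskip

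\noindent\textbf{Step 3: the estimate.} I would split into cases according to how $n_{>\gamma_2}$ and $n_{>\gamma_3}$ fall relative to $k$ and $m$. In each case the formula of Step~2 simplifies to a linear function of the order statistics, and the excess $(|J^{p^m}(\t,E)|-k)^+$ can be bounded two ways: it never exceeds $m-k\le n\varphi$ (since $p^m$ allocates at most $m$ objects), and it never exceeds the number of agents with type in $E\cap C^{allo}$ that receive an object. Integrating these case-by-case bounds against $\F$ and using the threshold identities from Step~2 should yield the required bound $n\varphi[(F(\gamma_2)-F(\gamma))+(1-F(\gamma'))]$; boundary points of the intervals are irrelevant since types are atomless.

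\medskip

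\noindent\textbf{The main obstacle} is precisely recovering the factor $\varphi$. Neither elementary bound alone is enough: using $(|J^{p^m}(\t,E)|-k)^+\le m-k\le n\varphi$ gives the right factor but the wrong probability weight, while using $(|J^{p^m}(\t,E)|-k)^+\le\#\{i: t_i\in E\cap C^{allo},\,p^m_i(\t)=1\}$ gives the weight $n\int_{E\cap C^{allo}}dF$ but loses the factor $\varphi$. The factor must come out of combining the two and exploiting, in each case, that a large excess over $k$ forces many agents to have types in $E\cap C^{allo}$ in a way that is pinned down by where $c^{allo}$ and $c^{aud}_\varphi$ cross; handling this bookkeeping cleanly across all cases is the bulk of the work.
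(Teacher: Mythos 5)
Your Steps 1 and 2 are correct: the identity $n\int_E A\,dF = \int |J^{p^m}(\t,E)|\,d\F - n\varphi F(E\cap C^{allo})$ follows from Lemma \ref{lemma:p_tilde_implements_P}, the equivalence with $\int (|J^{p^m}(\t,E)|-k)^+\,d\F \le n\varphi\bigl[(F(\gamma_2)-F(\gamma))+(1-F(\gamma'))\bigr]$ is a valid rewriting, and your decomposition of $|J^{p^m}(\t,E)|$ in terms of the counts $n_{>x}$ and the resulting expression for $\E[|J^{p^m}(\t,E)|]$ check out. But this reformulation carries no new information --- bounding $\E[(|J|-k)^+]$ is exactly the original inequality restated --- and Step 3, which is where all the analytical content of the lemma lives, is not a proof. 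You yourself identify the obstacle: neither of your two pointwise bounds on $(|J|-k)^+$ suffices, and the factor $\varphi$ cannot be recovered pointwise, since for a fixed profile the excess can equal $m-k$ while the target weight is tiny; the inequality only holds after integration, and your sketch does not show how the case-by-case aggregation over the joint (multinomial) distribution of $(n_{>\gamma},n_{>\gamma_2},n_{>\gamma_3},n_{>\gamma'})$ produces the required bound. As written, the lemma is not established.

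The paper avoids this difficulty entirely with an auxiliary allocation rule $p^k$ that gives an object to an agent iff his type is in $C^{allo}\cup C^{aud}$ and among the $k$ highest. This rule satisfies $|J^{p^k}(\t,E)|\le k$ and $p^k_j(\t)\le p^m_j(\t)$ pointwise, hence $\int |J^{p^k}(\t,E)|\,d\F \le \int\min\{k,|J^{p^m}(\t,E)|\}\,d\F$. One then only needs $n\int_E (P-\varphi)\,dF \le n\int_E \E_{\t_{-i}}[p^k_i]\,dF$, which is checked separately on $[\gamma,\gamma_2]$, $[\gamma_2,\gamma_3]$, and $[\gamma',1]$ using the integrated endpoint relations $c^{allo}(F(\cdot))\le c^{aud}_\varphi(F(\cdot))$ from Corollary \ref{lemma:intervals} together with Lemma \ref{lemma:c_functions} --- precisely the device that lets the crossing conditions of $c^{allo}$ and $c^{aud}_\varphi$ do the work at the level of expectations rather than profile by profile. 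If you want to complete your route, you would essentially need to rediscover this comparison rule; I recommend introducing it explicitly rather than attempting the order-statistics case analysis.
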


\newcommand{\pk}{p^k}
\begin{proof}
Define the allocation rule $\pk:\T\rightarrow [0,1]$ as follows: $\pk_j(\t)=1$ if $t_k\neq t_l$ for all $k\neq l$, $t_j\in C^{allo}\cup C^{aud}$ and $j$'s type is among the $k$ highest ones at $\t$; $\pk_j(\t)=0$ otherwise. It follows from the definition of the merit-with-guarantee rule $P$ in Definition \ref{reduced:opt_mech} and Lemma \ref{lemma:intervals} that
 \begin{align*}
\int_\gamma^{\gamma_2}P(t_i)\ dF(t_i) &= -\frac{1}{n} \int_\gamma^{\gamma_2} \frac{d}{dt_i}( c^{allo}(F(t_i))) \ dt_i  = -\frac{1}{n}[c^{allo}(F(\gamma_2))- c^{allo}(F(\gamma)) ]\\
&\le-\frac{1}{n}[c_\varphi^{aud}(F(\gamma_2))- c_\varphi^{aud}(F(\gamma)) ] = -\frac{1}{n} \int_\gamma^{\gamma_2} \frac{d}{dt_i}(c_\varphi^{aud}(F(t_i))) \ dt_i  \\ 
&=\int_\gamma^{\gamma_2} \left( \sum_{j=0}^{k-1} \binom{n-1}{j} F(t_i)^{n-1-j} (1-F(t_i))^{j} +\varphi\right) dF(t_i)\\
&= \int_\gamma^{\gamma_2}\Big( \int_{\T_{-i}} \pk_i(t_i,\t_{-i})d\F_{-i}(\t_{-i}) + \varphi  \Big)  dF(t_i).
\end{align*}
The equality in the third line follows from Lemma \ref{lemma:c_functions}. 

Similarly, using the definition of $P$, Lemma \ref{lemma:intervals} and Lemma \ref{lemma:c_functions} it follows that,
\begin{align*}
\int_{\gamma_2}^{\gamma_3}P(t_i) \ dF(t_i)&=-\frac{1}{n} \int_{\gamma_2}^{\gamma_3}  \frac{d}{dt_i}(c_\varphi^{aud}(F(t_i))) \ dt_i  \\
&= \int_{\gamma_2}^{\gamma_3} \left( \sum_{j=0}^{k-1} \binom{n-1}{j} F(t_i)^{n-1-j} (1-F(t_i))^{j} +\varphi\right) dF(t_i)\\
&=\int_{\gamma_2}^{\gamma_3} \Big(\int_{\T_{-i}} \pk_i(t_i,\t_{-i}) d\F_{-i}(\t_{-i}) + \varphi \Big) dF(t_i).
\end{align*}
 
Using the definition of $P$ and Lemma \ref{lemma:intervals}, we obtain
\begin{align*}
\int_{\gamma'}^1 P(t_i)\  dF(t_i) &= -\frac{1}{n} \int_{\gamma'}^1 \frac{d}{dt_i}(c^{allo}(F(t_i))) \ dt_i = -\frac{1}{n} [c^{allo}(1)-c^{allo}(F(\gamma'))]  \\
& \le  -\frac{1}{n} \int_{\gamma'}^1 \frac{d}{dt_i}(c_\varphi^{aud} (F(t_i))) \ dt_i = \int_{\gamma'}^1 \Big(\int_{\T_{-i}} \pk_i(t_i,\t_{-i}) d\F_{-i}(\t_{-i}) +\varphi \Big) dF(t_i).
\end{align*}
The inequality follows, since $c^{allo}(F(\gamma'))\le c_\varphi^{aud}(F(\gamma'))$ and $c^{allo}(1)=c_\varphi^{aud}(1)$, and the last equality follows from Lemma \ref{lemma:c_functions}.

Hence,
\begin{align*}
&n \int_{E} A(t_i)\ dF(t_i) =n \int_{E} P(t_i)-\varphi\ dF(t_i)\\
\le& \sum_i \int_{E} \int_{\T_{-i}} \pk_i(t_i,\t_{-i})\ d\F_{-i}(\t_{-i})   \ dF(t_i)\\
=&\sum_i \int_{E} \int_{\T_{-i}} 1_{i\in J^{\pk}(t_i,\t_{-i},E)}\ d\F_{-i}(\t_{-i})   \ dF(t_i)\\
=& \int_{\T} \sum_i 1_{i\in J^{\pk}(t_i,\t_{-i},E)}\ d\F(t)\\
=& \int_{\T} |J^{\pk}(t_i,\t_{-i},E)|\ d\F(t)\\
\le& \int_{\T} \min\left\{k,|J^{p^m}(t_i,\t_{-i},E)|\right\}\ d\F(t)\\
=& \int_{\T} \min\{|J^{p^m}(t_i,\t_{-i},E)|,k\}\ d\F(t),
    \end{align*}
where the second inequality uses that $|J^{p^k}(\t,E)|\le k$ and $p^k_j(\t)\le p^m_j(\t)$  for each $\t\in \T$ and $j\in N$.    
\end{proof}

\begin{lemma}\label{l:A_implementable}
 The interim audit rule $A$ is feasible given $p^m$. 
 \end{lemma}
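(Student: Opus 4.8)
The plan is to assemble the three preceding lemmas. Since $p^m$ is symmetric and deterministic (Definition~\ref{def:ex_post_tilde} and its footnote), Lemma~\ref{lemma:border_audit} applies and tells us that $A$ is feasible given $p^m$ if and only if
\[ n\int_E A(t_i)\,\mathrm dF(t_i)\le \int \min\{|J^{p^m}(\t,E)|,k\}\,\mathrm d\F(\t) \]
for every Borel set $E\subseteq T$. So it suffices to verify this family of inequalities, and the whole proof reduces to chaining the reductions already available.

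First I would apply Lemma~\ref{lemma:sufficient_sets}, which reduces the task to sets of the particular form $E=[\gamma,\gamma_3]\cup[\gamma',1]$ with $\gamma\in[\gamma_1,\gamma_2]$ and $\gamma'\in[\gamma_3,1]$; that reduction exploits that the left-hand-side-minus-right-hand-side of the displayed inequality, viewed (via Lemma~\ref{lemma:border_general}) as a function of a disjoint union $E_1\sqcup\dots\sqcup E_n$, is symmetric and supermodular and is maximized coordinatewise by a set of this shape, so Lemma~\ref{lemma:symmetric_solution} supplies a symmetric maximizer among such sets. Then I would invoke Lemma~\ref{lemma:B2}, which establishes precisely the required inequality for every $E$ of that form: using $A=P-\varphi$ on $C^{aud}\cup C^{ic}$, the envelope relations $c^{allo}(F(\cdot))\le c^{aud}_{\varphi}(F(\cdot))$ on $C^{allo}$ from Corollary~\ref{lemma:intervals}, and the derivative identities of Lemma~\ref{lemma:c_functions}, one bounds $n\int_E A$ by the interim rule induced by the ``$k$ highest reports'' allocation rule $p^k$, and bounds that in turn by $\int\min\{|J^{p^m}(\t,E)|,k\}\,\mathrm d\F(\t)$ since $|J^{p^k}(\t,E)|\le k$ and $p^k\le p^m$ pointwise. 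Combining these, the inequality of Lemma~\ref{lemma:border_audit} holds for all Borel $E$, and hence $A$ is feasible given $p^m$.

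I do not expect a genuine obstacle, since the substantive work is already carried out in Lemmas~\ref{lemma:sufficient_sets} and \ref{lemma:B2}; the only points requiring care are that $p^m$ is symmetric and deterministic (so Lemma~\ref{lemma:border_audit} is applicable) and that Lemma~\ref{lemma:sufficient_sets} has already absorbed the asymmetric-set version of the Border condition, so no further symmetrization step is needed here. As a byproduct — though it is not part of this statement — feasibility of $A$ given $p^m$ immediately yields feasibility given $p=p^m+p^\ell$, because any audit rule $a$ with $a_i\le p^m_i$ also satisfies $a_i\le p^m_i+p^\ell_i$ and induces the same interim rule.
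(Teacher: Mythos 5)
Your proposal is correct and follows essentially the same route as the paper: apply Lemma~\ref{lemma:border_audit} to the symmetric, deterministic rule $p^m$, reduce to sets of the form $[\gamma,\gamma_3]\cup[\gamma',1]$ via Lemma~\ref{lemma:sufficient_sets}, and verify the inequality on those sets with Lemma~\ref{lemma:B2}. Your closing remark that feasibility given $p^m$ carries over to $p=p^m+p^\ell$ is also the same observation the paper makes alongside this lemma.
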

\begin{proof}
Note that if $A$ is feasible given $p^m$, it is also feasible given $p$ because, for any $E\subset [0,1]$ and $\t\in \T$, if $i\in J^{p^m}(\t,E)$ then $i\in J^{p}(\t,E)$ and therefore 
\[\int \min\{|J^{p^m}(\t,E)|,k\} d\F(\t)\leq \int \min\{|(J^{p}(\t,E)|,k\} d\F(\t).\] 
Moreover, if $a$ is feasible given $p^m$ then it is also feasible given $p$, since $p^m_i(t_i,\t_{-i},s)\leq p_i(t_i,\t_{-i},s)$.  Thus, it suffices to show that the verification rule  $A(t_i)= P(t_i) -\varphi$ is feasible given $p^m$. This follows from Lemmas \ref{lemma:border_audit}, \ref{lemma:sufficient_sets}, and \ref{lemma:B2}.
\end{proof}

\subsection{Proof of Theorem \ref{thm:opt}}\label{app:thm_opt}
By Lemma \ref{l:lottery_feasible}, there is an ex-post lottery rule that allocates the objects not already allocated under $p^m$ such that the interim rule is $\varphi \mathbf{1}_{C^{ic}\cup C^{aud}}$. Denote the ex-post probability of winning an object in either the merit-based allocation or the lottery allocation by $p$. It follows from Lemma \ref{lemma:p_tilde_implements_P} that the interim probability of $p$ is given by $P$. By Lemma \ref{l:A_implementable}, there is an ex-post audit rule $a$ with interim rule $A=P-\varphi$ that is feasible given $p$. It follows that $(P,A)$ is feasible for problem \eqref{P_int}. Since $P$ is optimal for the relaxed problem \ref{relaxed_problem}  and the optimal objective value for the relaxed problem is weakly higher than the optimal value for problem \eqref{P_int} (given the optimal parameter $\varphi$ in the relaxed problem), we conclude that $(P,A)$ are optimal for problem \eqref{P_int}.
\subsection{Proof of Proposition \ref{prop:opt_phi}}\label{app:prop_opt}
The principal's payoff per agent from using a merit-with-guarantee rule with a parameter $\varphi$ is,
\begin{align*}
U(\varphi) &= \int_0^{\gamma_1(\varphi)} -\frac{1}{n} c_{\varphi}^{ic'}(F(t)) t \,\mathrm dF(t) + \int_{\gamma_1(\varphi)}^{\gamma_2(\varphi)} -\frac{1}{n} c^{allo'}(F(t)) t \,\mathrm dF(t) \\
& +\int_{\gamma_2(\varphi)}^{\gamma_3(\varphi)} -\frac{1}{n} c_{\varphi}^{aud'}(F(t)) t \,\mathrm dF(t)+ \int_{\gamma_3(\varphi)}^{1} -\frac{1}{n} c^{allo'}(F(t)) t \,\mathrm dF(t)\\
&= \int_0^{F(\gamma_1(\varphi))}  -\frac{1}{n} c_{\varphi}^{ic'}(q)  F^{-1}(q) \, \mathrm dq+   \int_{F(\gamma_1(\varphi))}^{F(\gamma_2(\varphi))} 
-\frac{1}{n} c^{allo'}  (q) F^{-1}(q) \,\mathrm dq  \\
& +\int_{F(\gamma_2(\varphi))}^{F(\gamma_3(\varphi))} -\frac{1}{n} c_{\varphi}^{aud'}(q) F^{-1}(q) \,\mathrm dq+ \int_{F(\gamma_3(\varphi))}^{1} -\frac{1}{n} c^{allo'}(q) F^{-1}(q) \,\mathrm dq
\end{align*}
The last equality follows from a change of variables with $q=F(t)$. Now define three implicit functions:
\begin{align*}
&h_1(F(\gamma_1(\varphi)),\varphi):= c^{allo}(F(\gamma_1(\varphi)))- c^{ic}_{\varphi}(F(\gamma_1(\varphi)))\\
& h_2(F(\gamma_2(\varphi)),\varphi):= c^{aud}_{\varphi}(F(\gamma_2(\varphi)))- c^{allo}(F(\gamma_2(\varphi)))\\
&h_3(F(\gamma_3(\varphi)),\varphi):= c^{allo}_{\varphi}(F(\gamma_3(\varphi)))- c^{aud}_{\varphi}(F(\gamma_3(\varphi)))
\end{align*}
Note that $h_i(\gamma_i(\varphi),\varphi)=0$ for $i=1,2,3$. 
By the implicit function theorem, we have
\footnotesize
\begin{align*}
\gamma_1'(\varphi)&= -\frac{\frac{\partial(c^{allo}(F(\gamma_1(\varphi)))- c^{ic}_{\varphi}(F(\gamma_1(\varphi))))}{\partial \varphi}}{[c^{allo'}(F(\gamma_1(\varphi)))- c_{\varphi}^{ic'}(F(\gamma_1(\varphi)))]f(\gamma_1(\varphi))}=  - \frac{nF(\gamma_1(\varphi))}{[c^{allo'}(F(\gamma_1(\varphi)))- c_{\varphi}^{ic'}(F(\gamma_1(\varphi)))]f(\gamma_1(\varphi))} \\
\gamma_2'(\varphi)&= -\frac{\frac{\partial(c^{aud}_{\varphi}(F(\gamma_2(\varphi)))- c^{allo}(F(\gamma_2(\varphi))))}{\partial \varphi}}{[c_{\varphi}^{aud'}(F(\gamma_2(\varphi)))- c^{allo'}(F(\gamma_2(\varphi)))]f(\gamma_2 \varphi))}= - \frac{n(1-F(\gamma_2(\varphi)))}{[c_{\varphi}^{aud'}(F(\gamma_2(\varphi)))- c^{allo'}(F(\gamma_2(\varphi)))]f(\gamma_2 (\varphi))}\\
\gamma_3'(\varphi)&=- \frac{\frac{\partial(c^{allo}(F(\gamma_3(\varphi)))- c_{\varphi}^{aud}(F(\gamma_3(\varphi))))}{\partial \varphi}}{[c^{allo'}(F(\gamma_3(\varphi)))- c_{\varphi}^{aud'} (F(\gamma_3(\varphi)))]f(\gamma_3(\varphi))}= - \frac{-n(1-F(\gamma_3(\varphi))}
{[c^{allo'}(F(\gamma_3(\varphi)))- c_{\varphi}^{aud'} (F(\gamma_3(\varphi)))]f(\gamma_3(\varphi))}
\end{align*}
\normalsize
Now let us compute  $U'(\varphi)$, using Leibniz integral rule, 
\begin{align*}
U'(\varphi)&= -\frac{1}{n}c_{\varphi}^{ic'}(F(\gamma_1(\varphi)))\gamma_1(\varphi)f(\gamma_1(\varphi))\gamma_1'(\varphi) + \int_0^{F(\gamma_1(\varphi))} -\frac{1}{n}\frac{\partial}{\partial \varphi}c_{\varphi}^{ic'}(F(\gamma_1(\varphi))) F^{-1}(q) \, \mathrm dq \\
&-\frac{1}{n} c^{allo'}(F(\gamma_2(\varphi)))\gamma_2(\varphi)f(\gamma_2(\varphi))\gamma_2'(\varphi)+ \frac{1}{n} c^{allo'}(F(\gamma_1(\varphi)))\gamma_1(\varphi)f(\gamma_1(\varphi))\gamma_1'(\varphi)\\
&- \frac{1}{n} c_{\varphi}^{aud'}(F(\gamma_3(\varphi)))\gamma_3(\varphi)  f(\gamma_3(\varphi))\gamma_3'(\varphi)+ \frac{1}{n} c_{\varphi}^{aud'}(F(\gamma_2(\varphi)))\gamma_2(\varphi)  f(\gamma_2(\varphi))\gamma_2'(\varphi)\\
&- \int_{F(\gamma_2(\varphi))}^{F(\gamma_3(\varphi))} \frac{1}{n} \frac{\partial}{\partial \varphi} c_{\varphi}^{aud'}(q) F^{-1}(q) \, \mathrm dq  +\frac{1}{n}c^{allo'}(F(\gamma_3(\varphi)))\gamma_3(\varphi) f(\gamma_3(\varphi))\gamma_3'(\varphi)\\
&= \int_0^{F(\gamma_1(\varphi))}  F^{-1}(q) \, \mathrm dq + \int_{F(\gamma_2(\varphi))} ^{F(\gamma_3(\varphi))}  F^{-1}(q) \, \mathrm dq\\ 
&+\gamma_1(\varphi)f(\gamma_1(\varphi))\gamma_1'(\varphi)\frac{1}{n}\Big(  c^{allo'}(F(\gamma_1(\varphi))-c_{\varphi}^{ic'}(F(\gamma_1(\varphi))\Big)  \\
&+\gamma_2(\varphi)f(\gamma_2(\varphi))\gamma_2'(\varphi)\frac{1}{n}\Big(c_{\varphi}^{aud'}(F(\gamma_2(\varphi)))- c^{allo'}(F(\gamma_2(\varphi)))\Big)\\
&+\gamma_3(\varphi)f(\gamma_3(\varphi))\gamma_3'(\varphi)\frac{1}{n}[c^{allo'}(F(\gamma_3(\varphi)))- c_{\varphi}^{aud'}(F(\gamma_3(\varphi)))]\\
& = -\gamma_1(\varphi)F(\gamma_1(\varphi)) - \gamma_2(\varphi)[1-F(\gamma_2(\varphi)) ] +\gamma_3(\varphi)[1-F(\gamma_3(\varphi)) ]  \\
&+\int_{0}^{\gamma_1(\varphi)}  t \, \mathrm dF(t) +\int_{\gamma_2(\varphi)}^{\gamma_3(\varphi)}  t \, \mathrm dF(t).
\end{align*}
Rewriting this we obtain:
\[
\gamma_1(\varphi)F(\gamma_1(\varphi)) +\gamma_2(\varphi)[1-F(\gamma_2(\varphi)) ] +\gamma_3(\varphi)[1-F(\gamma_3(\varphi)) ]  =\int_{0}^{\gamma_1(\varphi)}  t \, \mathrm dF(t) +\int_{\gamma_2(\varphi)}^{\gamma_3(\varphi)}  t \, \mathrm dF(t),
\]
for having $U'(\varphi)=0$, just as stated in Proposition \ref{prop:opt_phi}. 

\end{appendix}

\newpage
\begin{small}
	\bibliography{references}
\end{small}

\end{document}